\setlist{noitemsep,topsep=0pt,parsep=0pt}
\definecolor{codegreen}{rgb}{0,0.6,0}
\definecolor{codegray}{rgb}{0.5,0.5,0.5}
\definecolor{codepurple}{rgb}{0.58,0,0.82}
\definecolor{backcolour}{rgb}{0.95,0.95,0.92}
\lstdefinestyle{mystyle}{
    backgroundcolor=\color{backcolour},   
    commentstyle=\color{codegreen},
    keywordstyle=\color{magenta},
    numberstyle=\tiny\color{codegray},
    stringstyle=\color{codepurple},
    basicstyle=\ttfamily\footnotesize,
    breakatwhitespace=false,         
    breaklines=true,                 
    captionpos=b,                    
    keepspaces=true,                 
    numbers=left,                    
    numbersep=5pt,                  
    showspaces=false,                
    showstringspaces=false,
    showtabs=false,                  
    tabsize=2
}
\tikzset{every fit/.append style=text badly centered}
\newcommand{\Rmnum}[1]{\expandafter\@slowromancap\romannumeral #1@}
\newcommand{\Holant}{\operatorname{Holant}}
\newcommand{\holant}[2]{\ensuremath{\Holant\left(#1\mid #2\right)}}
\newcommand{\numP}{{\rm \#P}}
\newcommand{\rlm}{\lambda/\mu} 
\newcommand{\smm}[4]{\left[ \begin{smallmatrix} #1 & #2 \\ #3 & #4 \end{smallmatrix}\right]}
\newcommand{\smmv}[2]{\left[ \begin{smallmatrix} #1 \\ #2 \end{smallmatrix}\right]}
\newcommand{\smmh}[2]{\left[ \begin{smallmatrix} #1 & #2 \end{smallmatrix}\right]}
\tikzstyle{internal} = [draw, fill, shape=circle]
\tikzstyle{external} = [shape=circle]
\tikzstyle{square}   = [draw, fill, rectangle]
\tikzstyle{triangle} = [draw, fill, regular polygon, regular polygon sides=3, inner sep=3pt]
\tikzstyle{pentagon} = [draw, fill, regular polygon, regular polygon sides=5, inner sep=2pt, minimum size=14pt]
\begin{document}

\title{{\bf Title}}

\vspace{0.3in}

	\title{\bf Bipartite 3-Regular Counting Problems with Mixed Signs}
		\vspace{0.3in}
		\author{Jin-Yi Cai\thanks{Department of Computer Sciences, University of Wisconsin-Madison. Supported by NSF CCF-1714275.
			} \\ {\tt jyc@cs.wisc.edu} \and Austen Z. Fan\thanks{Department of Computer Sciences, University of Wisconsin-Madison.}\\ {\tt afan@cs.wisc.edu}
\and  Yin Liu\thanks{Department of Computer Sciences, University of Wisconsin-Madison.}\\ {\tt yinl@cs.wisc.edu}
		}
	
\date{}
\maketitle

\begin{abstract}
We prove a complexity dichotomy for
a class of counting problems expressible as bipartite 3-regular Holant problems.
For every problem of the form \holant{f}{=_3}, where $f$ is any
integer-valued ternary symmetric constraint function on Boolean variables, we prove that it is 
either P-time computable or \#P-hard,
depending on an explicit criterion of $f$. The constraint function  can take both positive and negative values, allowing for cancellations. 
In addition, we 
discover a new phenomenon: there is a set $\mathcal{F}$ with the property that for every $f \in \mathcal{F}$ the problem \holant{f}{=_3} is planar P-time  computable but \numP-hard in general, yet 
 its planar tractability is by
a \emph{combination} of a holographic transformation by
$\left[\begin{smallmatrix} 1 & 1 \\ 1 & -1 \end{smallmatrix}\right]$ to FKT \emph{together} with an independent global argument. 
 \end{abstract}

\section{Introduction}
Holant problems encompass a broad class of
counting problems~\cite{Backens17,Backens18,BackensG20,CaiGW16,CaiL11, cai2009holant, CaiLX10,GuoHLX11,GuoLV13,KowalczykC10,valiant2006accidental,valiant2008holographic,Xia11}. For symmetric constraint functions this is also equivalent to
edge-coloring models~\cite{szegedy2007edge,szegedy2010edge}.
These problems extend counting constraint satisfaction problems. Freedman, Lov\'{a}sz and Schrijver proved that some prototypical Holant problems, such as counting perfect matchings, cannot be expressed as vertex-coloring models known as graph homomorphisms~\cite{Freedman-Lovasz-Schrijver-2007,HellN04}. 
The classification program of counting problems is to 
classify as broad a class of these problems as possible
into either \#P-hard or P-time computable.

While much progress has been made for the
classification of counting CSP~\cite{bulatov2006dichotomy,jacm/CaiC17,cai2016nonnegative,dyer2013effective}, and some progress for
Holant problems~\cite{cai2017complexity}, classifying Holant problems on regular bipartite
graphs is particularly challenging.
In a very recent paper~\cite{fan2020dichotomy} we initiated the
study of Holant problems on 
the simplest setting of
3-regular bipartite graphs with \emph{nonnegative}
constraint functions. Admittedly, this is a severe restriction,
because nonnegativity of the constraint functions
rules out cancellation, which is a source of non-trivial
P-time algorithms. Cancellation is in a sense  the 
\emph{raison d'\^{e}tre} for the Holant framework following
Valiant's holographic algorithms~\cite{valiant2006accidental,valiant2008holographic, valiant2018some}. The (potential) existence
of  P-time algorithms by cancellation
is exciting, but at the same time creates obstacles
if we want to classify every problem in the family
into either P-time computable or \#P-hard.
At the same time, restricting to nonnegative
constraint functions makes
the classification theorem easier to prove.
In this paper, we remove this nonnegativity restriction.

More formally, a Holant problem is defined on a graph where edges are
variables and vertices are constraint functions.
The aim of a Holant problem 
is to compute its partition function, which is
a sum over all $\{0, 1\}$-edge assignments of the product over
all vertices of the  constraint function evaluations.
E.g., if every vertex has the 
\textsc{Exact-One}
function (which evaluates to 1 if
exactly one incident edge is 1, and evaluates to  0 otherwise),
then the partition function gives
the number of perfect matchings.
In this paper we consider Holant problems on 
3-regular bipartite graphs $G=(U,V,E)$, where
 the Holant problem \holant{f}{=_3} computes the following 
 partition function\footnote{If we replace $f$ by a set 
 ${\cal F}$ of constraint functions, each $u \in U$
 is  assigned some $f_u \in {\cal F}$,
 and replace $(=_3)$ by ${\cal EQ}$,  the set
 of \texttt{Equality} of all arities, then 
\holant{{\cal F}}{{\cal EQ}} can be taken as the definition of counting
CSP.}
$$\Holant (G)=\sum_{\sigma: E \rightarrow\{0,1\}} \prod_{u \in U} f\left(\left.\sigma\right|_{E(u)}\right) \prod_{v \in V}(=3)\left(\left.\sigma\right|_{E(v)}\right),$$
where $f = [f_0, f_1, f_2, f_3]$ at each $u \in U$ 
is an integer-valued constraint function
that evaluates to $f_i$ if $\sigma$ assigns exactly $i$ 
among 3 incident edges  $E(u)$ to 1, and $(=_3) = [1,0,0,1]$
 is the \textsc{Equality} function on 3 variables (which is 1 iff
 all three are equal).
 E.g., if we take the 
\textsc{Exact-One}
function  $f=[0,1,0,0]$ then \holant{f}{=_3} 
 counts the number of exact-3-covers;   if $f$ is
 the  
\textsc{Or}
function $[0, 1 ,1 ,1]$  then \holant{f}{=_3} 
 counts the number of all set covers.
 
The main theorem in this paper is a complexity dichotomy
(Theorem~\ref{thm:main}): for any rational-valued 
function $f$ of arity 3, the problem
 \holant{f}{=_3} is either \numP-hard  or P-time
 computable, depending on an explicit
 criterion on $f$.
The main advance is to allow $f$ to take both positive and
negative values, thus cancellations in the sum
$\sum_{\sigma: E \rightarrow\{0,1\}}$ can occur.


A major component of the classification program is
to account for some algorithms, called holographic
algorithms, that were initially discovered by Valiant~\cite{valiant2006accidental}. These algorithms introduce quantum-like cancellations as
the main
tool.
In the past 10 to 15 years we have gained a great
deal of understanding of these mysteriously looking algorithms. In particular, it was proved
in~\cite{CaiLX10} that for all counting CSP
with arbitrary constraint functions on Boolean variables,
there is a precise 3-way division of problem types: 
(1) 
P-time computable in general, (2) P-time
computable on planar structures but \#P-hard in general, and (3)
\#P-hard even on planar structures.
Moreover, every problem in type (2) is computable in P-time on planar structures 
by Valiant's holographic reduction to Fisher-Kasteleyn-Temperley algorithm (FKT) for planar
perfect matchings. In~\cite{CaiFGW15}
for (non-bipartite) Holant problems with symmetric
constraint functions, the 3-way division above persists,
but problems  in (2) includes one more subtype unrelated
to Valiant's holographic reduction.
In this paper, we have a
surprising discovery. We found a 
 new set of functions $\mathcal{F}$ which fits into type (2)
 problems above, but the planar P-time tractability is 
 \emph{neither} by  Valiant's holographic reduction alone,
  \emph{nor} entirely independent from it. Rather it is by a 
  combination of a holographic reduction together with
  a global argument.
  An example of this set of problem is as follows:
We say  $(X, {\cal S})$  is a 3-regular $k$-uniform set system,
if ${\cal S}$ consists of a family of sets $S \subset X$ each of
size $|S| =k$, and every $x \in X$ is in exactly 3 sets.
If $k=2$ this is just a 3-regular graph.  We consider
 3-regular $3$-uniform set systems.
 We say ${\cal S'}$ is a \emph{leafless partial cover} 
 if every $x \in \bigunion_{S \in {\cal S'}} S$ belongs to more than
 one set $S  \in {\cal S'}$. We say $x$ is \emph{lightly covered}
 if $|\{S \in {\cal S'} : x \in S\}|$ is 2,
 and \emph{heavily covered} if this number is 3.
  
  \vspace{.1in}
\noindent$\mathbf{Problem:}$  \texttt{ Weighted-Leafless-Partial-Cover}.

\noindent$\mathbf{Input:}$ A 3-regular $3$-uniform set system
 $(X, {\cal S})$.

\noindent$\mathbf{Output:}$ $\sum_{\cal S'} (-1)^l 2^h$,
where the sum is over all leafless partial covers  ${\cal S'}$,
and $l$ (resp. $h$) is the number of $x \in X$ that are
{lightly covered} (resp. {heavily covered}).
  \vspace{.1in}

One can show that this problem is just
\holant{f}{=_3}, where $f = [1, 0, -1, 2]$.
This problem is a special case of a set of problems of
 the form $f = [3a+b, a-b, -a+b, 3a-b]$.
We show that all these problems belong to type (2) above, although they are not directly solvable by
a holographic algorithm since they are provably
not matchgates-transformable.


In this paper, we 
 use  Mathematica\texttrademark{} 
to perform symbolic computation. In particular, the procedure \texttt{CylindricalDecomposition}  in Mathematica\texttrademark{}  is an implementation (of a version) of  Tarski's theorem
on the decidability of the theory of
real-closed fields. 
Some of our proof steps involve heavy symbolic computation. 
This stems from the bipartite structure. In order to   
preserve this structure, one has to connect each vertex from LHS to only vertices from RHS when constructing subgraph fragments called gadgets. In 3-regular bipartite graphs, it is easy to show that any  gadget construction produces a constraint function that  has the following restriction: the difference of the  arities  between the two sides is 0 mod 3. This severely limits the
possible  constructions within a moderate size, and a reasonable sized construction tends to produce gigantic polynomials.  To ``solve'' some of these
polynomials seems beyond direct manipulation by hand. 

We believe our dichotomy 
(Theorem~\ref{thm:main})
is valid even for
(algebraic) real or complex-valued constraint functions.
However, in this paper we can only prove it for rational-valued
constraint functions.
There are two difficulties of extending our proof
beyond $\mathbb{Q}$. The first is that 
we use the idea of interpolating degenerate straddled functions,
for which we need to ensure that the ratio of the eigenvalues
of the interpolating gadget matrix
is not a root of unity. With rational-valued constraint functions, the only  roots of unity
that can occur are in a degree 2 extension field.
For general constraint functions, they can be
arbitrary roots of unity. Another difficulty is
that some Mathematica\texttrademark{} steps
showing the nonexistence of some exceptional cases are only valid
for $\mathbb{Q}$.
We list the essential Mathematica\texttrademark{} procedures used in this proof in an appendix.

%

\section{Preliminaries}
We now introduce the concept of \emph{gadget}.
A gadget, such as those illustrated in Figure~\ref{f1} to Figure~\ref{2g41a},
is  a bipartite graph $G = (U, V, E_{\rm in}, E_{\rm out})$ with
internal edges $E_{\rm in}$ and dangling edges $E_{\rm out}$.
There can be $m$ dangling edges internally incident
to vertices from $U$ and $n$ dangling edges internally incident
to vertices from $V$. These $m+n$ dangling edges 
correspond to Boolean variables $x_1, \ldots, x_m, y_1, \ldots,  y_n$
and the gadget defines
a \emph{signature}
\[f(x_1, \ldots, x_m, y_1, \ldots,  y_n)
=
\sum_{\sigma: E_{\rm in} \rightarrow \{ 0,1\}} \prod_{u \in U} f\left(\widehat{\sigma} |_{E(u)}\right) \prod_{v \in V} \left( =_{3} \right) \left(\widehat{\sigma}  |_{E(v)}\right),
\]
where $\widehat{\sigma} $ denotes the extension
of $\sigma$ by the assignment on the  dangling edges.

As indicated before, in the setting of 3-regular bipartite graph, we have limited number of symmetric gadgets with reasonable sizes. To preserve the bipartite structure, we must be careful
in any gadget construction how each  external wire is
to be connected, i.e., as an input variable
whether it is on the LHS (those of $f$ which can be
used to  connect to $(=_3)$ on the RHS), or
it is on the RHS (those of $(=_3)$ which can be
used to  connect to $f$ on the LHS).

In each figure of gadgets presented later, we use a blue square to represent a signature from LHS, which under most of the cases will be $[1,a,b,c]$,  a green circle to represent the ternary equality $[1, 0, 0, 1]$, and a black triangle to represent a unary signature whose values depend on the context.


A \emph{signature grid} $\Omega=(G, \pi)$ over a signature set $\mathcal{F}$ consists of a graph $G=(V, E)$ and a mapping $\pi$ that assigns to each vertex $v \in V$ an $f_{v} \in \mathcal{F}$ and a linear order of the incident edges at $v$. For signature sets $\mathcal{F}$ and $\mathcal{G}$, a bipartite signature grid over $(\mathcal{F} \mid \mathcal{G})$ is a signature grid $\Omega=$ $(H, \pi)$ over $\mathcal{F} \cup \mathcal{G}$, where $H=(V, E)$ is a bipartite graph with bipartition $V=$ $\left(V_{1}, V_{2}\right)$ such that $\pi\left(V_{1}\right) \subseteq \mathcal{F}$ and $\pi\left(V_{2}\right) \subseteq \mathcal{G}$. In this paper, we consider the bipartite Holant problem where $\mathcal{F} = \{f\}$ consists of a single rational-valued ternary symmetric Boolean function and $\mathcal{G}=\{[1,0,0,1]\}$ consists of $\textsc{Equality}_3$.

A \emph{symmetric} signature is a function that is invariant under
any permutation of its variables.
The value of such a signature  depends only on the Hamming weight of its input. We denote a ternary symmetric signature $f$
by the notation $f = [f_0,f_1,f_2,f_3]$, where 
$f_i$ is the value on inputs of 
Hamming weight $i$.
%
The \textsc{Equality} of arities 3 is $\left(=_{3} \right) = [1,0,0,1]$. A  symmetric signature $f$ is called (1) \emph{degenerate} if it is the tensor power of
a unary signature; (2) \emph{Generalized Equality}, or Gen-Eq, if it is zero unless all inputs are equal.
Affine signatures were discovered in the dichotomy
for counting constraint satisfaction problems (\#CSP)~\cite{cai2017complexity}. 
A (real valued) ternary
 symmetric signature is
\emph{affine} if it has the form $[1,0,0,\pm 1], [1,0,1,0], [1,0,-1,0], [1,1,-1,-1]$ or $[1,-1,-1,1]$, or by reversing the order of the entries,
up to a constant factor. If $f$ is degenerate, Gen-Eq, or affine, then the problem $\#\operatorname{CSP}(f)$ and thus $\holant{f}{=_3}$ is in P (for a more detailed exposition of
this theory,  see \cite{cai2017complexity}). Our dichotomy asserts that,
for all signatures
$f$ with $f_i \in \mathbb{Q}$,
these three classes are the only tractable cases of the problem $\holant{f}{=_3}$; all other signatures  lead to $\numP$-hardness. 

By a slight abuse of notation, we say $[1,a,b,c]$
is \numP-hard or in P depending on weather the 
 problem $\holant{[1,a,b,c]}{\left( =_3\right)}$ is 
\numP-hard or in P, respectively. We shall invoke the following theorem~\cite{KowalczykC10} when proving our results:
\begin{theorem} \label{2-3}
Suppose $a,b \in \mathbb{C}$, and let $X = ab$, $Z = \left( \frac{a^3+b^3}{2} \right)^2$. Then $\holant{[a,1,b]}{\left(=_3\right)}$ is \numP-hard except in the following cases, for which the problem is in $\operatorname{P}$.

\begin{enumerate}
\item $X=1$;
\item $X=Z=0$;
\item $X=-1$ and $Z=0$;
\item $X=-1$ and $Z=-1$.
\end{enumerate}
\end{theorem}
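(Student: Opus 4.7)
My approach has two parts: verify tractability of the four exceptional cases, then establish \numP-hardness everywhere else via gadget interpolation.

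\emph{Tractability.} In case (1), $X = ab = 1$ forces the matrix $M := \smm{a}{1}{1}{b}$ to be singular, so $[a,1,b] = \tfrac{1}{a}[a,1]^{\otimes 2}$ is degenerate; the Holant decouples into independent unary evaluations and is trivially in $\operatorname{P}$. In case (2), $X = Z = 0$ forces $a=b=0$, and $\holant{[0,1,0]}{(=_3)}$ reduces to counting proper 2-colorings of a multigraph (with each $(=_3)$ vertex viewed as a color class), computable in linear time by checking bipartiteness of each component. Cases (3) and (4) lie on curves where a suitable holographic transformation by a $2\times 2$ matrix $T$ carries $\bigl([a,1,b], (=_3)\bigr)$ simultaneously into a pair in the affine or matchgate-tractable family (up to scaling), with case (4) a complex variant of case (3) handled by a transformation with complex entries.

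\emph{Hardness by gadget interpolation.} For all $(a,b)$ outside the four tractable loci, the plan is to realize a one-parameter family of binary signatures by bipartite gadgets and then interpolate to a signature already known to be \numP-hard. A concrete route: construct the unary $[a,b]$ by contracting two legs of a $(=_3)$ against a single $[a,1,b]$, and attach a copy of it to the free leg of each $(=_3)$ in a chain. This turns each $(=_3)$ into the diagonal binary $D = [a,0,b]$, and a bipartite-alternating chain of length $k$ then realizes the matrix $(MD)^{k-1}M$. Diagonalizing $MD$ yields eigenvalues $\lambda_1,\lambda_2$ with $\lambda_1 \lambda_2 = X(X-1)$ and $\lambda_1+\lambda_2 = a^2+b^2$; whenever $\lambda_1/\lambda_2$ is not a root of unity, polynomial interpolation in $k$ simulates any binary signature in the commuting family of $MD$, and in particular hits a target $[a',1,b']$ satisfying the hardness criterion at a previously treated point.

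\emph{Main obstacle.} The crux is matching the failure locus of the interpolation with precisely the four tractable exceptions. Interpolation fails exactly when $\lambda_1/\lambda_2$ is a root of unity, an algebraic condition cutting out finitely many curves in $(a,b)$-space; for rational $f$ only $\pm 1$ and the primitive sixth roots can appear. One must cross-reference these conditions with analogous eigenvalue analyses from alternative gadgets---for example, the ternary $[a^3+1,\, a^2+b,\, a+b^2,\, b^3+1]$ obtained by fusing three copies of $[a,1,b]$ through a single $(=_3)$, in which $(a^3+b^3)/2$ is the natural averaged invariant giving rise to $Z$---and verify that the union of failure loci collapses to exactly the four cases listed. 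The delicate bookkeeping---showing that outside these four cases \emph{some} gadget always interpolates to a hard target, while on each of the four tractable loci a direct algorithm or holographic transformation applies---is the most demanding part of the proof, and forces the precise shape of the criterion in the two invariants $X$ and $Z$.
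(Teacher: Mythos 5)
This statement is Theorem~\ref{2-3}, which the paper does \emph{not} prove: it is cited verbatim from Kowalczyk--Cai~\cite{KowalczykC10} and used as a black box throughout the rest of the paper. So there is no in-paper proof to compare your attempt against, and I will instead assess your outline on its own terms.

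Your tractability discussion is solid for cases (1) and (2) --- degeneracy when $X=1$, and the disequality/2-coloring reduction when $a=b=0$ --- and your chain-gadget computation is correct: the fused unary is indeed $[a,b]$, the modified equality is $D=\operatorname{diag}(a,b)$, and $MD$ has determinant $X(X-1)$ and trace $a^2+b^2$. The ternary gadget $[a^3+1,\,a^2+b,\,a+b^2,\,b^3+1]$ is also computed correctly. However, there are two genuine gaps. First, for cases (3) and (4) you assert the existence of a holographic transformation into an affine/matchgate-tractable family but do not exhibit it; since $X=-1,\,Z=-1$ in particular forces $a,b$ to be non-real, this is exactly where one must actually produce the transformation (or verify membership in the affine class after a transformation), and the hand-wave ``a complex variant of case (3)'' does not do that. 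Second, and more seriously, the hardness half is circular as written: you propose to interpolate via the chain gadget to ``a target $[a',1,b']$ satisfying the hardness criterion at a previously treated point,'' but your argument establishes no previously-treated hard point to land on. Interpolation can only transfer hardness from a known-hard instance; one still needs a concrete base reduction (in Kowalczyk--Cai this comes from reductions to problems such as counting vertex covers in 3-regular graphs). Relatedly, your remark that ``for rational $f$ only $\pm1$ and the primitive sixth roots can appear'' is misplaced here, because the theorem is stated for $a,b\in\mathbb{C}$, where arbitrary roots of unity can arise; the finitely-many-exceptions structure must come from showing that whenever one gadget's eigenvalue ratio is a root of unity, another gadget succeeds, and that residual analysis (which you yourself flag as ``the most demanding part'') is precisely what is missing.
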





\begin{figure}[] 
\centering
  \begin{tikzpicture}  
        \draw [very thick] (-1,0.5) -- (0.5,0.5);
        
        \draw[very thick] (2,0.5) ellipse (1.5cm and 0.5cm);
        \draw[very thick] (3.5,0.5) -- (5,0.5);
        \filldraw[fill= blue] (0.4,0.4) rectangle (0.6,0.6);
        \filldraw[fill=green] (3.5,0.5) circle (0.1cm);
    \end{tikzpicture}
    \caption{$G_1$}
  \label{f1}
\end{figure}

An important observation is that
in the context of  $\holant{f}{\left( =_3\right)}$,
every gadget construction produces a signature
with $m \equiv n \bmod 3$, where $m$ and $n$ are the
numbers of input variables (arities) from the
LHS and RHS respectively. Thus, any construction that produces a signature purely on either the LHS or the RHS 
will have arity a multiple of 3. In order that our
constructions are more manageable in size, 
we will make heavy use of 
 \emph{straddled gadgets} with
$m=n=1$ that do not belong to either side and yet
can be easily iterated. The signatures of the iterated gadgets
are
 represented by matrix powers. 

Consider the binary straddled gadget $G_1$ in Figure \ref{f1}. Its signature is $G_1=\left[\begin{smallmatrix} 1 & b \\ a & c \end{smallmatrix}\right]$, where  $G_1(i,j)$ (at row $i$ column $j$) is the value of this gadget when the left dangling edge (from the ``square") and the right dangling edge (from the ``circle" $(=_3)$) are assigned $i$ and $j$ respectively, for $i,j \in\{0,1\}$.
Iterating $G_1$ sequentially $k$ times is represented by the
matrix power $G_1^k$. It turns out that it is very useful
 either to produce directly  or to obtain by interpolation
a rank \emph{deficient} straddled signature, which would in most cases allow us to obtain unary signatures on either side.
With unary signatures we can connect to a ternary signature
to produce binary signatures on one side and then apply
Theorem~\ref{2-3}. 
The proof idea of Lemma~\ref{3.1.1} is the same as
in~\cite{fan2020dichotomy} for nonnegative signatures.


\begin{lemma} \label{3.1.1}
Given the binary straddled signature $G_1=\left[\begin{smallmatrix} 1 & b \\ a & c \end{smallmatrix}\right]$, we can interpolate the degenerate binary straddled signature $\left[\begin{smallmatrix}y & xy \\ 1 & x\end{smallmatrix}\right]$,
provided that $c\neq ab$, $a\neq 0$, $\Delta =\sqrt{(1-c)^2 + 4ab} \neq 0$ and $\frac{\lambda}{\mu}$ is not a root of unity, where
$\lambda=\frac{-\Delta+(1+c)}{2}$, $\mu=\frac{\Delta+(1+c)}{2}$ are the
 two eigenvalues,  
 and $x=\frac{\Delta-(1-c)}{2 a}$ and $y=\frac{\Delta+(1-c)}{2 a}$.
\end{lemma}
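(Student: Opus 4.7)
The plan is to carry out the standard polynomial-interpolation argument of Valiant in its eigenvalue-decomposition form. The hypotheses supply exactly the non-degeneracies needed: $c \neq ab$ gives $\det G_1 = c - ab \neq 0$, so $\lambda \mu \neq 0$; and $\Delta \neq 0$ makes $\lambda \neq \mu$. Thus $G_1$ is diagonalizable with two distinct nonzero eigenvalues. I would then compute left and right eigenvectors for $\mu$ directly from the characteristic equation, and verify, using the identity $\Delta^2 - (1-c)^2 = 4ab$, that the specific formulas $x = (\Delta - (1-c))/(2a)$ and $y = (\Delta + (1-c))/(2a)$ produce $\left[\begin{smallmatrix} y \\ 1 \end{smallmatrix}\right]$ as a right $\mu$-eigenvector and $\left[\begin{smallmatrix} 1 & x \end{smallmatrix}\right]$ as a left $\mu$-eigenvector. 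Writing $M_\mu$ for the rank-one spectral projection onto the $\mu$-eigenspace, this gives
\[
\left[\begin{smallmatrix} y & xy \\ 1 & x \end{smallmatrix}\right] \;=\; \left[\begin{smallmatrix} y \\ 1 \end{smallmatrix}\right] \left[\begin{smallmatrix} 1 & x \end{smallmatrix}\right] \;=\; (x+y)\, M_\mu,
\]
using $x + y = \Delta/a \neq 0$. With the analogous $M_\lambda$ satisfying $M_\lambda M_\mu = M_\mu M_\lambda = 0$ and $M_\lambda^2 = M_\lambda$, $M_\mu^2 = M_\mu$, the spectral decomposition gives $G_1^k = \lambda^k M_\lambda + \mu^k M_\mu$ for every $k \geq 0$.

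Next, I would set up the interpolation proper. Consider any signature grid $\Omega$ using the target degenerate straddled signature $T = (x+y)\,M_\mu$ at $N$ locations, and form the grids $\Omega_k$ by replacing each copy of $T$ with the $k$-fold sequential iteration $G_1^k$, which is realizable using only the base gadget $G_1$. Expanding each inserted $G_1^k$ by the spectral decomposition and distributing multilinearly through the Holant sum yields
\[
\operatorname{Holant}(\Omega_k) \;=\; \sum_{j=0}^{N} \lambda^{kj}\, \mu^{k(N-j)}\, D_j,
\]
where $D_j$ is the Holant value of $\Omega$ when $j$ of the $N$ positions are replaced by $M_\lambda$ and the other $N-j$ by $M_\mu$. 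The quantity one ultimately wants is $(x+y)^N D_0$, since $D_0$ corresponds to $M_\mu$ in every position.

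Finally, I would recover $D_0$ by solving a Vandermonde system. Computing $\operatorname{Holant}(\Omega_k)$ for $k = 0, 1, \dots, N$ (each a call to the base oracle) and dividing through by $\mu^{kN}$ (valid because $\mu \neq 0$) produces the linear system with coefficient matrix $\bigl((\lambda/\mu)^{jk}\bigr)_{k,j=0}^{N}$. This is a Vandermonde matrix with nodes $1, \lambda/\mu, (\lambda/\mu)^2, \dots, (\lambda/\mu)^N$, which are pairwise distinct precisely when $\lambda/\mu$ is not a root of unity of order at most $N$. The hypothesis that $\lambda/\mu$ is not any root of unity handles this uniformly in $N$, so the system is invertible in polynomial time, $D_0$ can be extracted, and multiplication by $(x+y)^N$ yields $\operatorname{Holant}(\Omega)$ with $T$ in place, completing the interpolation.

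The main obstacle is not conceptual but algebraic bookkeeping: one must carefully match the specific formulas for $x$ and $y$ in the statement with a consistent choice of eigenvector scalings so that the factorization $T = (x+y)\,M_\mu$ has exactly this form, and must verify the orthogonality relations $M_\lambda M_\mu = M_\mu M_\lambda = 0$ needed to cleanly expand $G_1^k$. Once the identity $\Delta^2 = (1-c)^2 + 4ab$ is in hand these checks are routine, and the interpolation then proceeds as in the nonnegative-signature case of~\cite{fan2020dichotomy}.
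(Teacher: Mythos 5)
Your proposal is correct and follows essentially the same route as the paper: diagonalize $G_1$, express $\left[\begin{smallmatrix}y & xy \\ 1 & x\end{smallmatrix}\right]$ as (a scalar multiple of) the rank-one spectral projection onto the $\mu$-eigenspace, replace each copy of the target with $G_1^s$, stratify the Holant sum, and invert the resulting Vandermonde system using the hypothesis that $\lambda/\mu$ is not a root of unity. The only cosmetic difference is that you package the decomposition via spectral idempotents $M_\lambda, M_\mu$ whereas the paper writes an explicit Jordan-form conjugation $P\,\mathrm{diag}(\lambda,\mu)\,P^{-1}$ and stratifies assignments to the middle matrix; these are the same argument.
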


\begin{proof}
We have $x+y = \Delta/a \not =0$ and so
$\left[\begin{smallmatrix} -x & y \\ 1 & 1 \end{smallmatrix}\right]^{-1}$ exists, and 
the matrix $G_1$ has the Jordan Normal Form
$$G_1 = \left(\begin{array}{ll}1 & b \\ a & c\end{array}\right)=\left(\begin{array}{cc}-x & y \\ 1 & 1\end{array}\right)\left(\begin{array}{ll}\lambda & 0 \\ 0 & \mu\end{array}\right)\left(\begin{array}{cc}-x & y \\ 1 & 1\end{array}\right)^{-1}.$$ Here the matrix $G_1$ is non-degenerate since $c\ne ab$, and so  $\lambda$ and $\mu$ are nonzero.
Consider
$$
D = \frac{1}{x+y} \left( \begin{array}{cc} y & xy \\ 1 & x \end{array}\right) = \left( \begin{array}{cc} -x &y \\  1&1 \end{array}\right)  \left( \begin{array}{cc} 0 &0 \\  0&1\end{array}\right) \left( \begin{array}{cc} -x &y \\  1&1 \end{array}\right)^{-1}.
$$
Given any signature grid $\Omega$ where the binary degenerate straddled signature $D$ appears $n$ times, we form gadgets $G_1^s$ where $0 \le s \le n$ by iterating the $G_1$ gadget $s$ times and replacing each occurrence of $D$ with $G_1^s$.
(Here for $s=0$ we simply replace each occurrence of $D$ by an edge.) Denote the resulting
signature grid as $\Omega_{s}$. We stratify the assignments 
in the Holant sum for $\Omega$  according to assignments to
$\left[ \begin{smallmatrix} 0 &0 \\  0&1\end{smallmatrix}\right]$

\begin{enumerate}
\item[-] $(0,0)$ $i$ times;
\item[-] $(1,1)$ $j$ times;
\end{enumerate}
with $i+j=n$; all  other assignments will contribute 0 in the $\operatorname{Holant}$ sum for $\Omega$. The same statement is true for 
each $\Omega_s$ with the matrix
$\left[ \begin{smallmatrix} \lambda^s & 0 \\ 0 &\mu^s \end{smallmatrix}\right]$. 
Let $c_{i,j}$ be the sum, in $\Omega$, over all such assignments of the products of evaluations of all other signatures other than that represented by the matrix $\left[ \begin{smallmatrix} 0 &0 \\  0&1\end{smallmatrix}\right]$, including the contributions from $\left[ \begin{smallmatrix} -x &y \\  1&1 \end{smallmatrix}\right]$ and its inverse. The same quantities $c_{ij}$ appear for each $\Omega_s$, independent of $s$, 
with the substitution of the matrix $\left[ \begin{smallmatrix} \lambda^s & 0 \\ 0 &\mu^s \end{smallmatrix}\right]$. Then, for $0 \le s \le n$,  we have
\begin{equation}\label{Vandermonde}
\operatorname{Holant}_{\Omega_{s}} = \sum_{i+j=n} \left( \lambda^i \mu^j \right)^s \cdot c_{i,j}
\end{equation}
and
$
\operatorname{Holant}_{\Omega} = c_{0,n}.
$

Since $\lambda/\mu$ is not a root of unity,
the quantities $\lambda^i \mu^{n-i}$ are pairwise distinct, thus
(\ref{Vandermonde}) 
is a full ranked Vandermonde system. Thus we can compute
$
\operatorname{Holant}_{\Omega}
$
from 
$
\operatorname{Holant}_{\Omega_s}
$
by solving the linear system in polynomial time. Thus we can
interpolate $D$ in polynomial time.
\end{proof}

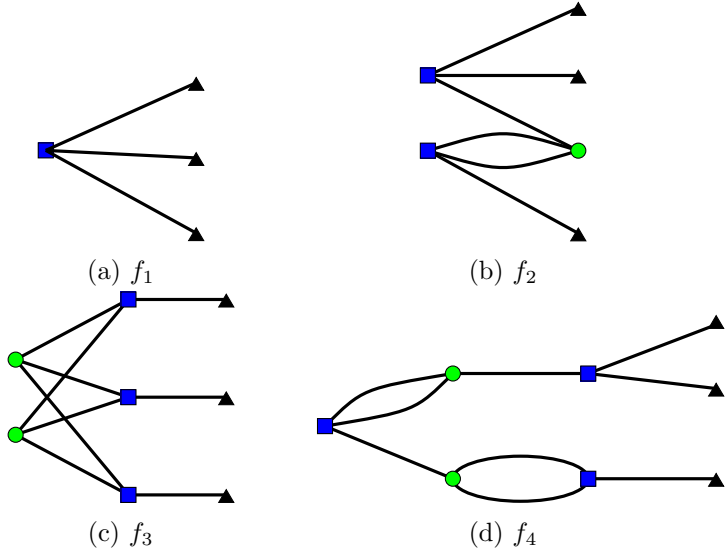
\begin{figure}[]
    \centering
    \begin{subfigure}[b]{0.3\textwidth}
    \centering
    \begin{tikzpicture}  
        \filldraw[fill=blue] (0.2,2.2) rectangle (0.4,2.4);
        \filldraw[fill=black] (2.3,3.27)--(2.2,3.1)--(2.4,3.1)--cycle ;
        \filldraw[fill=black] (2.3,2.27)--(2.2,2.1)--(2.4,2.1)--cycle;
        \filldraw[fill=black] (2.3,1.27)--(2.2,1.1)--(2.4,1.1)--cycle;
        \draw[very thick] (0.3,2.3)--(2.3,3.2);
        \draw[very thick] (0.3,2.3)--(2.3,2.2);
        \draw[very thick] (0.3,2.3)--(2.3,1.2);
    \end{tikzpicture}
    \caption{$f_1$}
    \label{e1}
    \end{subfigure}
    \begin{subfigure}[b]{0.3\textwidth}
    \centering
    \begin{tikzpicture}   
        \filldraw[fill=black] (2.3,5.27)--(2.2,5.1)--(2.4,5.1)--cycle;
        \filldraw[fill=black] (2.3,4.37)--(2.2,4.2)--(2.4,4.2)--cycle;
        \filldraw[fill=black] (2.3,2.27)--(2.2,2.1)--(2.4,2.1)--cycle;

        \draw[very thick] (0.3,4.3) -- (2.3,5.2);
        \draw[very thick] (0.3,4.3)--(2.3,4.3);
        \draw[very thick] (0.3,4.3)--(2.3,3.3);
        \draw[very thick] (0.3,3.3) ..controls (1.3,3.6)..  (2.3, 3.3);
        \draw[very thick] (0.3,3.3) .. controls (1.3, 3.0).. (2.3, 3.3);
        \draw[very thick] (0.3,3.3) -- (2.3, 2.2);
        \filldraw[fill=blue] (0.2,4.2) rectangle (0.4,4.4) ;  
        \filldraw[fill=blue] (0.2,3.2) rectangle (0.4,3.4) ;  
        \filldraw[fill=green] (2.3,3.3) circle (0.1cm) ;
    \end{tikzpicture}
    \caption{$f_2$}
    \label{e2}
    \end{subfigure}
    \\
    \begin{subfigure}[b]{0.3\textwidth}
    \centering
    \begin{tikzpicture}  
        \draw[very thick] (1.5,3.5)--(3,4.3);
        \draw[very thick] (1.5,3.5)--(3,3);
        \draw[very thick] (1.5,3.5)--(3,1.7);
        \draw[very thick] (1.5,2.5)--(3,4.3);
        \draw[very thick] (1.5,2.5)--(3,3);
        \draw[very thick] (1.5,2.5)--(3,1.7);
        \draw[very thick] (3,4.3)--(4.3, 4.3);
        \draw[very thick] (3,3)--(4.3, 3);
        \draw[very thick] (3,1.7)--(4.3,1.7);
        \filldraw[fill=green] (1.5,3.5) circle (0.1cm);  
        \filldraw[fill=green] (1.5,2.5) circle (0.1cm);  
        \filldraw[fill=blue] (2.9,4.2) rectangle (3.1,4.4); 
        \filldraw[fill=blue] (2.9,2.9) rectangle (3.1,3.1); 
        \filldraw[fill=blue] (2.9,1.6) rectangle (3.1,1.8); 
        \filldraw[fill=black] (4.3,4.37)--(4.2,4.2)--(4.4,4.2)--cycle; 
        \filldraw[fill=black] (4.3,3.07)--(4.2,2.9)--(4.4,2.9)--cycle; 
        \filldraw[fill=black] (4.3,1.77)--(4.2,1.6)--(4.4,1.6)--cycle; 
    \end{tikzpicture}
    \caption{$f_3$}
    \label{e3}
    \end{subfigure}
    \begin{subfigure}[b]{0.3\textwidth}
    \centering
    \begin{tikzpicture}  
        \draw[very thick] (0.3,4.3)..controls(0.8,4.8)..(2,5);
        \draw[very thick] (0.3,4.3)..controls(1.5, 4.5 )..(2,5);
        \draw[very thick] (0.3,4.3)--(2,3.6);
        \draw[very thick] (3.8,5)--(2,5);
        
        \draw[very thick] (3.8,5)--(5.5,5.65);
        \draw[very thick] (3.8,5)--(5.5,4.8);
        \draw[very thick] (3.8,3.6)--(5.5,3.6);
        
        \draw[very thick]  (2.9, 3.6) ellipse(0.9 and 0.3); 
        \filldraw[fill=blue] (0.2,4.2) rectangle(0.4,4.4); 
        \filldraw[fill=green] (2,5) circle (0.1cm);     
        \filldraw[fill=green] (2,3.6) circle(0.1cm);    
        \filldraw[fill=blue] (3.7,4.9) rectangle (3.9,5.1);   
        \filldraw[fill=blue] (3.7,3.5) rectangle (3.9, 3.7);  
        \filldraw[fill=black] (5.5,5.77)--(5.4,5.6)--(5.6,5.6)--cycle; 
        \filldraw[fill=black] (5.5,4.87)--(5.4,4.7)--(5.6,4.7)--cycle; 
        \filldraw[fill=black] (5.5,3.67)--(5.4,3.5)--(5.6,3.5)--cycle; 
    \end{tikzpicture}
    \caption{$f_4$}
    \label{e4}
    \end{subfigure}
    \caption{Four gadgets where each triangle represents the unary gadget $[1,x]$}
    \label{4g41x}
\end{figure}

The next lemma allows us to get unary signatures.

\begin{lemma} \label{3.1.2}
For $\Holant( \, [1,a,b,c] \, | =_3)$,\ 
$a,b,c \in \mathbb{Q}$, $a\neq 0$,  with the availability of binary degenerate straddled signature $\left[\begin{smallmatrix}y & xy \\ 1 & x\end{smallmatrix}\right]$ (here $x,y \in \mathbb{C}$ can be arbitrary),
in polynomial time 
\begin{enumerate}
    \item 
    we can interpolate $[y,1]$ on the LHS, 
    \[\Holant(\{[1,a,b,c], [y,1]\}\, | =_3) \le_T \Holant([1,a,b,c] | =_3); \]
    \item we can interpolate $[1,x]$ on the RHS, 
    \[\Holant([1,a,b,c] \, | \{(=_3), [1, x]\}) \le_T \Holant([1,a,b,c] | =_3) ,\] except for two cases: $[1,a,a,1]$, $[1,a,-1-2a, 2+3a]$.
\end{enumerate}
\end{lemma}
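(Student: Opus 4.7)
The key observation is that the given binary straddled signature $D = \smmv{y}{1}\smmh{1}{x}$ has rank one, so attaching three copies of $D$ to the three legs of any ternary vertex produces a signature proportional to $[y,1]^{\otimes 3}$ on the LHS (or $[1,x]^{\otimes 3}$ on the RHS). My plan for part~(1) is to attach the RHS leg of each of three copies of $D$ to a distinct leg of $(=_3)$; a direct expansion then yields the LHS ternary $T = (1+x^3)\,[y,1]^{\otimes 3}$ with three dangling LHS legs. For part~(2) I symmetrically attach the LHS leg of three copies of $D$ to the three legs of $(=_3)$, obtaining the RHS ternary $h = (1+y^3)\,[1,x]^{\otimes 3}$.

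These degenerate ternary gadgets will let me substitute the unaries in bulk, not via polynomial interpolation but by direct bundling. In any signature grid $\Omega$ over $(\{[1,a,b,c],[y,1]\}\mid =_3)$, the degree-count equation $3m+k=3n$ forces the number $k$ of $[y,1]$ vertices to be a multiple of~$3$. I partition the $3k$ occurrences into $k$ arbitrary triples and replace each triple by one copy of $T$, reconnecting its three LHS legs to the RHS vertices that previously received the three $[y,1]$'s. The Holant is then multiplied by $(1+x^3)^k$, which I divide out provided $1+x^3\ne 0$. The analogous substitution using $h$ handles part~(2) with multiplier $(1+y^3)^k$.

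When the leading coefficient vanishes I fall back on the further ternary gadgets in Figure~\ref{4g41x}. Replacing $(=_3)$ by $[1,a,b,c]$ in the construction of $T$ gives the alternative LHS ternary $g = (1+3ax+3bx^2+cx^3)\,[y,1]^{\otimes 3}$, and more elaborate combinations modeled by $f_2,f_3,f_4$ yield other scalar multiples of $[y,1]^{\otimes 3}$ (resp.\ $[1,x]^{\otimes 3}$) whose coefficients are polynomials in $a,b,c,x,y$. For part~(1) I expect that the hypothesis $a\ne 0$ will force at least one such coefficient polynomial to be nonzero in every case, so that no exceptions are needed.

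The main obstacle is part~(2): one must show that the simultaneous vanishing of \emph{all} available RHS coefficient polynomials---those arising from every ternary degenerate RHS gadget obtainable from $[1,a,b,c]$, $(=_3)$, and $D$---pins $[1,a,b,c]$ to precisely the two families $[1,a,a,1]$ and $[1,a,-1-2a,2+3a]$. After substituting the expressions for $x,y$ from Lemma~\ref{3.1.1}, this becomes an explicit system of polynomial equations in $a,b,c$ whose solution set must be shown to be exactly those two families. Carrying this algebraic classification through---presumably with the symbolic machinery (notably \texttt{CylindricalDecomposition}) advertised in the introduction---will be the heavy part of the proof.
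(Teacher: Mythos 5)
Your overall strategy---replace each leftover unary with $D$ and absorb the resulting dangling edges in triples, showing that at least one absorbing gadget produces a nonzero scalar---is exactly the paper's, and your gadget $g$ with factor $1+3ax+3bx^2+cx^3$ is the paper's $f_1$. However, your primary gadget $T$ for part~(1) is not a legal construction. The right dangling edge of $D$ emanates from the internal $(=_3)$-circle, so it is a RHS external wire that must be completed by attaching to an LHS square; attaching it to another $(=_3)$ would place an edge between two RHS vertices and destroy the bipartite structure the problem requires. Thus the factor $1+x^3$ is simply not available on the LHS side, and the ``leading'' case you hoped to dispose of first must be skipped entirely. This is asymmetric with part~(2): your $h$, which connects the three LHS legs of $D$ (emanating from squares) to a single $(=_3)$, is square-to-circle and hence valid; it is exactly the paper's $g_1$ with factor $1+y^3$.

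Two further remarks. First, once $T$ is discarded, the proof of part~(1) must rest solely on $f_1,f_2,f_3,f_4$ (not on $1+x^3$ at all), and the claim that these cannot vanish simultaneously for $a\ne 0$, $a,b,c\in\mathbb{Q}$, $x\in\mathbb{C}$ is precisely the Mathematica\texttrademark\ step in the paper; your ``expectation'' is correct but needs that verification. Second, your plan to substitute the specific expressions for $x,y$ from Lemma~\ref{3.1.1} and reduce to a system in $a,b,c$ is neither what the lemma asserts nor what the paper does: the statement allows \emph{arbitrary} $x,y\in\mathbb{C}$, and the paper treats $x$ (part~1) and $y$ (part~2) as free complex parameters, e.g.\ showing directly that $g_1=g_2=g_3=0$ forces $y=-1$ and the two families $[1,a,a,1]$ and $[1,a,-2a-1,3a+2]$. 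Fixing these points leaves you with essentially the paper's proof.
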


\begin{proof}
For the problem $\Holant(\{[1,a,b,c], [y,1]\}\ | =_3)$, the number of occurrences of $[y,1]$ on LHS is 0 mod 3, say $3n$, since the other signatures are both of arity 3. Now, for each occurrence of $[y,1]$, we replace it with the binary straddled signature $\left[\begin{smallmatrix}y & xy \\ 1 & x\end{smallmatrix}\right]$,  leaving $3n$ dangling edges on RHS yet to be connected to LHS, each of which represents a unary signature $[1,x]$. We build a gadget to connect every triple of
such dangling edges. We claim that at least one of the connection gadgets in Figures \ref{e1}, \ref{e2}, \ref{e3} and \ref{e4} creates a nonzero global factor. The factors of these four gadgets are $f_1 = cx^3+3bx^2+3ax+1$, $f_2=(ab+c)x^3+ (3bc+2a^2+b)x^2 +(2b^2+ac + 3a)x+ab+1$, $f_3=(a^3+b^3+c^3)x^3+3(a^2+2ab^2+bc^2)x^2 + 3(a+2a^2b+b^2c)x + 1 + 2a^3+b^3$ and $f_4=(ab+2abc+c^3)x^3+(2a^2+b+2a^2c+3ab^2+bc+3b^2c)x^2+(3a+3a^2b+ac+2b^2+2b^2c+ac^2)x+1+2ab+abc$ 
respectively. By setting the four formulae to be 0 simultaneously, with $a\neq 0$, $a,b,c\in\mathbb{Q}$ and $x\in\mathbb{C}$, we found that there is no solution\footnotemark. 
Thus, we can always ``absorb'' the left-over $[1,x]$'s at
the cost of some easily computable nonzero global factor. 
\footnotetext{We use Mathematica, where a complex $x$ (or $y$) is written as $u+vi$, and the real and imaginary parts of $f_i$ are both set to 0.
The empty intersection of $f_1=f_2=f_3=f_4=0$ is proved by cylindrical decomposition, an algorithm for Tarski's theorem on real-closed fields.}

For the other claim on $[1,x]$ on RHS, i.e., $$\Holant([1,a,b,c]\ |\ \{(=_3), [1,x]\}) \le_T \Holant([1,a,b,c]\ | =_3)$$ we use a similar strategy to ``absorb'' the left-over copies of $[y,1]$ on the LHS by connecting them to $(=_3)$ in the gadgets in the Figures \ref{h1}, \ref{h2} or \ref{h3}.
These gadgets produce factors $g_1 =y^3 + 1$, $g_2 = y^3 + by^2+ay+c$ and $g_3 = y^3 + 3a^2y^2 + 3b^2y + c^2$ respectively. It can be directly checked that, for complex $y$, all these factors are 0 iff $y=-1$, and the signature has the form $[1,a,a,1]$ or $[1,a,-2a-1, 3a+2]$.
\end{proof}

\begin{figure}[]
    \centering
    \begin{subfigure}[b]{0.3\textwidth}
    \centering
    \begin{tikzpicture}  
        \draw[very thick] (0.3,5.05) -- (2.2,4.05);
        \draw[very thick] (0.3,4.05) -- (2.2,4.05);
        \draw[very thick] (0.3,3.05) -- (2.2,4.05);
        \filldraw[fill=black] (0.3,5.17)--(0.2,5)--(0.4,5)--cycle;
        \filldraw[fill=black] (0.3,4.17)--(0.2,4)--(0.4,4)--cycle;
        \filldraw[fill=black] (0.3,3.17)--(0.2,3)--(0.4,3)--cycle;
        \filldraw[fill=green] (2.2, 4.05)circle(0.1);
    \end{tikzpicture}
    \caption{$g_1$}
    \label{h1}
    \end{subfigure}
    \hfill
    \begin{subfigure}[b]{0.3\textwidth}
    \centering
    \begin{tikzpicture}  
        \draw[very thick] (0.3,5.05)--(1.8,4.8);
        \draw[very thick] (0.3,4.05)--(1.8,4.8);
        \draw[very thick] (0.3,3.05)--(1.8,3.5);
        \draw[very thick] (1.8,4.8) -- (3,4.5);
        \draw[very thick] (1.8,3.5).. controls (2.3, 4.2) ..(3,4.4);
        \draw[very thick] (1.8,3.5) .. controls (2.7, 3.7)..(3,4.4);
        \filldraw[fill=black] (0.3,5.17)--(0.2,5)--(0.4,5)--cycle;
        \filldraw[fill=black] (0.3,4.17)--(0.2,4)--(0.4,4)--cycle;
        \filldraw[fill=black] (0.3,3.17)--(0.2,3)--(0.4,3)--cycle;
        \filldraw[fill=green] (1.8,4.8) circle (0.1);
        \filldraw[fill=green] (1.8,3.5) circle (0.1);
        \filldraw[fill=blue] (3,4.3)rectangle(3.2,4.5);
    \end{tikzpicture}
    \caption{$g_2$}
    \label{h2}
    \end{subfigure}
    \hfill
    \begin{subfigure}[b]{0.3\textwidth}
    \centering
    \begin{tikzpicture}   
        \draw[very thick] (0.3,5.2)--(1.8,5.2);
        \draw[very thick] (0.3,4.2)--(1.8,4.2);
        \draw[very thick] (0.3,3.2)--(1.8,3.2);
        
        \draw[very thick] (1.8,5.2)--(3.3,4.6);
        \draw[very thick] (1.8,4.2)--(3.3,4.6);
        \draw[very thick] (1.8,3.2)--(3.3,4.6);
        \draw[very thick] (1.8,5.2)--(3.3,3.6);
        \draw[very thick] (1.8,4.2)--(3.3,3.6);
        \draw[very thick] (1.8,3.2)--(3.3,3.6);
        
        \filldraw[fill=blue] (3.2,4.5)rectangle(3.4,4.7);
        \filldraw[fill=blue] (3.2,3.5)rectangle(3.4,3.7);
        \filldraw[fill=black] (0.3,5.27)--(0.2,5.1)--(0.4,5.1)--cycle;
        \filldraw[fill=black] (0.3,4.27)--(0.2,4.1)--(0.4,4.1)--cycle;
        \filldraw[fill=black] (0.3,3.27)--(0.2,3.1)--(0.4,3.1)--cycle;
        \filldraw[fill=green] (1.8,5.2)circle(0.1);
        \filldraw[fill=green] (1.8,4.2)circle(0.1);
        \filldraw[fill=green] (1.8,3.2)circle(0.1);
    \end{tikzpicture}
    \caption{$g_3$}
    \label{h3}
    \end{subfigure}
    
    \caption{Three gadgets where each triangle represents the unary gadget $[y,1]$}
    \label{3g4y1}
\end{figure}
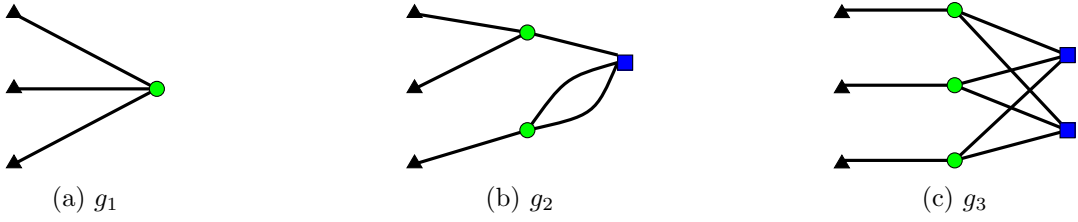
\noindent

A main thrust in our proof is we want to be assured
that such  degenerate binary straddled signature
can be obtained, and the corresponding unary signatures in 
Lemma~\ref{3.1.2} can be produced.
We now first consider the two exceptional cases $[1,a,a,1]$ and $[1,a,-2a-1,3a+2]$ where this is not possible.

\begin{lemma} \label{easy}
The problem $[1,a,a,1]$ is \numP-hard unless $a \in \{0, \pm 1\}$ in which case it is in {\rm P}.
\end{lemma}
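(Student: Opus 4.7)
For the easy direction, I would observe that $a=0$ gives $f = (=_3)$ (Gen-Eq), $a=1$ gives the degenerate $[1,1]^{\otimes 3}$, and $a=-1$ gives the affine $[1,-1,-1,1]$; each falls into one of the tractable classes listed before Theorem~\ref{2-3}, so $\holant{f}{=_3}$ is in {\rm P} for these three values.

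For the hardness direction with $a \in \mathbb{Q} \setminus \{0, \pm 1\}$, my plan is to realize the binary LHS signature $h_1 = [1+a, 2a, 1+a]$ and then invoke Theorem~\ref{2-3}. Writing out $G_1$ from Figure~\ref{f1} one gets $G_1 = \left[\begin{smallmatrix}1 & a\\ a & 1\end{smallmatrix}\right]$ with eigenvalues $1 \pm a$; the ratio $(1-a)/(1+a)$ is a rational number distinct from $\pm 1$, hence not a root of unity. Lemma~\ref{3.1.1} (taking $\Delta = 2a$ so $x = y = 1$) therefore interpolates $D = \left[\begin{smallmatrix}1 & 1\\1 & 1\end{smallmatrix}\right]$, and Lemma~\ref{3.1.2}(1) in turn yields $[1,1]$ on the LHS (absorbing through $f_1$ if $a \ne -1/3$, since $F_1(1) = 2+6a \ne 0$ there, or through $f_3$ if $a = -1/3$, since one computes $F_3(1) = \tfrac{1}{27}$). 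Two LHS copies of $[1,1]$ glued through a single $(=_3)$ produce $[1,1]$ on the RHS, and attaching that to one edge of $f$ realizes $h_1$.

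Now for $a \in \mathbb{Q} \setminus \{0, \pm 1, -1/3\}$, normalising $h_1$ by its middle entry $2a$ gives $[\alpha,1,\alpha]$ with $\alpha = (1+a)/(2a)$, so in the notation of Theorem~\ref{2-3} one has $X = \alpha^2$ and $Z = \alpha^6$. A short case check shows $X=1$ only at $a \in \{1,-1/3\}$, $X=0$ only at $a=-1$, and both $X=-1$ and $Z=-1$ are unreachable for rational $\alpha$; so under our hypothesis none of the four tractable cases of Theorem~\ref{2-3} applies, forcing $\holant{h_1}{=_3}$, hence $\holant{f}{=_3}$, to be \numP-hard.

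The value $a = -1/3$ is genuinely special: there $h_1 \propto [1,-1,1]$ lands exactly at $X=1$, so Theorem~\ref{2-3} is inconclusive, and I would instead argue by a holographic reduction. Under $T = \left[\begin{smallmatrix}1&1\\1&-1\end{smallmatrix}\right]$, a direct computation shows $T^{\otimes 3}[1,-1/3,-1/3,1] \propto [0,0,1,0]$ on the LHS and the paired transformation of $(=_3)$ is proportional to $[1,0,1,0]$ on the RHS, reducing the problem to $\holant{[0,0,1,0]}{[1,0,1,0]}$. In this transformed problem each LHS vertex must have exactly one incident $0$-edge and each RHS vertex must have an odd number of incident $0$-edges (i.e., $1$ or $3$); the counting identity $\sum_{v \in V} m(v) = |U|$ (total $0$-edges counted from the LHS side, where $m(v)$ is the number of $0$-edges at $v$) together with $m(v) \in \{1,3\}$ and $|U|=|V|$ forces $m(v) = 1$ for every $v$, so the $0$-edges form a perfect matching. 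Valid configurations are therefore in bijection with the perfect matchings of the underlying $3$-regular bipartite graph, and \numP-hardness of counting those matchings completes the proof. The main obstacle is precisely this last subcase: the straightforward Theorem~\ref{2-3} route degenerates at $a=-1/3$, and the holographic reduction to $\#$PM is needed to close the gap.
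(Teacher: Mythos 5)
Your proof is correct and takes essentially the same approach as the paper's: interpolate $[1,1]$ via $G_1$, realize the binary signature $[1+a,2a,1+a]$ and apply Theorem~\ref{2-3}, then close the residual case $a=-1/3$ by the Hadamard holographic transformation down to counting perfect matchings in 3-regular bipartite graphs. The only differences are cosmetic---your sign convention for $\Delta$ so that $x=y=1$ uniformly, your explicit verification of the absorption factors that Lemma~\ref{3.1.2} already guarantees, and your direct $0$-edge counting argument in place of the paper's staged simplification from $\left(=_3\right)$ transformed to $[1,0,1,0]$, then to $[0,0,1,0]$, then (by flipping) to $[0,1,0,0]$.
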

\begin{proof}
If $a=0$ or $a=\pm 1$, then it is either Gen-Eq or degenerate or affine, and thus the problem $\Holant( \, [1,a,a, 1] \, | =_3)$ is in P. Now assume 
$a\neq 0$ and $a\neq \pm 1$.

Using the gadget $G_1$, we have $\Delta = |2a|$ and $x=y= \Delta/2a = \pm 1$ depending on the sign of $a$.
So we get the signature $[y,1] = [\pm 1, 1]$ on LHS by Lemmas~\ref{3.1.1} and \ref{3.1.2}. Connecting two copies of $[y,1]$ to $[1,0,0,1]$ on RHS, we get $[1,1]$ on RHS regardless of the sign. Connecting $[1,1]$ to $[1,a,a,1]$ on LHS, we get $[1+a,2a,1+a]$ on LHS. The problem
$\Holant( \, [1+a,2a,1+a] \, | =_3)$
 is \numP-hard  by Theorem~\ref{2-3} unless $a= 0, \pm 1$ or $-\frac{1}{3}$, thus we only need to consider the signature $[3, -1, -1, 3]$. 

If $a = -\frac{1}{3}$, we apply holographic transformation with the Hadamard matrix $H = \left[\begin{smallmatrix}1 & 1 \\ 1 & -1\end{smallmatrix}\right]$. Note that $[3,-1,-1,3] = 4((1,0)^{\otimes3} + (0,1)^{\otimes3})  -  (1,1)^{\otimes3}$. Here each tensor power represents a truth-table of 8 entries, or a vector of  dimension 8;
the linear combination is the truth-table for the
symmetric signature $f = [3,-1,-1,3]$, which is in fact a short hand notation for the vector \[(f_{000}, f_{001}, f_{010}, f_{011},f_{100}, f_{101},f_{110}, f_{111}) = (3, -1, -1, -1, -1, -1, -1, 3).\]
Also note that, $(1,0)H = (1,1)$, $(0,1)H = (1,-1)$ and $(1,1)H = (2,0)$, thus
$[3,-1,-1,3]H^{\otimes 3} =
4 ((1,1)^{\otimes3} + (1,-1)^{\otimes3}) - (2,0)^{\otimes3} = 4[2, 0, 2, 0]-[8, 0, 0, 0] = [0,0,8,0]$, which is equivalent to $[0,0,1,0]$ by a global factor.
So, we get
\begin{equation*}
   \begin{split}
       \holant{[3,-1,-1,3]}{(=_3)} & \equiv_{T} \holant{[3,-1,-1,3]H^{\otimes 3}}{(H^{\otimes 3})^{-1}[1,0,0,1]}\\
       & \equiv_{T} \holant{[0,0,1,0]}{[1,0,1,0]} \\& \equiv_{T} \holant{[0,0,1,0]}{[0,0,1,0]} \\& \equiv_{T} \holant{[0,1,0,0]}{[0,1,0,0]}
   \end{split} 
\end{equation*}
where the first reduction is by Valiant's Holant theorem~\cite{valiant2008holographic},
the third reduction comes from the following observation: given a bipartite 3-regular graph $G = (V,U,E)$ where the vertices in $V$ are assigned the signature $[0,0,1,0]$ and the vertices in $U$ are assigned the signature $[1,0,1,0]$, every nonzero term in the Holant sum must correspond to a mapping $\sigma: E \rightarrow \{0, 1\}$ where exactly two edges of any vertex are assigned 1. The fourth reduction is by simply flipping 0's and 1's. The problem $\holant{[0,1,0,0]}{[0,1,0,0]}$ is the  problem of counting perfect matchings in 3-regular bipartite graphs, which Dagum and Luby proved to be \numP-complete (Theorem 6.2 in~\cite{dagum1992approximating}). 
\end{proof}



\begin{lemma} \label{p3}
The problem $[1, a, -2a-1, 3a+2]$ is \numP-hard unless $a=-1$ in which case it is in {\rm P}.
\end{lemma}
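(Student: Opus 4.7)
I would split on $a$ into three sub-cases. When $a=-1$ the signature equals $(1,-1)^{\otimes 3}$, which is degenerate, so \holant{f}{=_3} is in P. When $a=-1/3$ the signature equals $\tfrac{1}{3}[3,-1,-1,3]$, which is also of the form $[1,a',a',1]$ at $a'=-1/3$, and Lemma~\ref{easy} supplies \numP-hardness.

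For every remaining $a$, I would apply the Hadamard holographic transformation $H=\smm{1}{1}{1}{-1}$; a direct computation using $fH^{\otimes 3}=[0,0,4(1+a),-4(3a+1)]$ and $(H^{-1})^{\otimes 3}(=_3)=\tfrac{1}{4}[1,0,1,0]$ yields
\[
\holant{[1,a,-2a-1,3a+2]}{=_3}\ \equiv_T\ \holant{[0,0,1+a,-(3a+1)]}{[1,0,1,0]}.
\]
In the transformed problem each valid $\sigma\colon E\to\{0,1\}$ has Hamming weight in $\{2,3\}$ at every LHS vertex and in $\{0,2\}$ at every RHS vertex, and the Holant equals $\sum_\sigma (1+a)^{m_2(\sigma)}(-(3a+1))^{m_3(\sigma)}$ where $m_j(\sigma)$ counts LHS vertices of weight $j$. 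The $m_3=0$ contribution is the key one: if every LHS vertex has weight $2$, an edge-endpoint count forces every RHS vertex to also have weight exactly $2$, and after swapping $0$s with $1$s those $\sigma$ are exactly the perfect matchings of the underlying $3$-regular bipartite graph, a \numP-hard quantity by Dagum and Luby.

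To isolate that coefficient I would use polynomial interpolation in the ratio $s=-(3a+1)/(1+a)$. Consider the triangular gadget $\Gamma$ built from three copies of $f$ and two copies of $(=_3)$, where each $f$ supplies one dangling edge and one edge to each $(=_3)$, and each $(=_3)$ receives one edge from every $f$. Computing the four orbits gives $\Gamma=2(a+1)^2[-3a,\,4a+1,\,-(5a+2),\,6a+3]$, which lies (after normalization) in the same exceptional family at parameter $\alpha(a)=-(4a+1)/(3a)$; after the Hadamard transformation it equals $[0,0,1+a,-3(3a+1)]$ up to a nonzero scalar, the same shape as $g$ but with $s$ tripled. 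Realizing the $k$-th such signature at every LHS vertex and varying $k=0,1,\ldots,|V|$, the Holant becomes a polynomial in $3^k$ whose evaluation points form a nonsingular Vandermonde system on the distinct nodes $3^0,\ldots,3^{|V|}$; its inverse recovers the coefficient of $s^0$, which equals $(1+a)^{|V|}\neq 0$ times the perfect matching count.

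The main obstacle will be making the realizations polynomial-size, since naively iterating $\Gamma$ triples the number of $f$-copies per step and is exponential in $k$. The fix is to exploit the binary straddled $G_1=\smm{1}{-2a-1}{a}{3a+2}$ whose eigenvalues $a+1$ and $2(a+1)$ have ratio $1/2$ (not a root of unity); Lemma~\ref{3.1.1} then makes $G_1^k$ available as a linear-size binary straddled gadget, and attaching three copies of $G_1^k$ through a single $(=_3)$ produces a linear-size ternary LHS gadget whose entries depend on $k$ as a degree-$3$ polynomial in $2^k$. A Vandermonde argument in $2^k$ then delivers a polynomial-time realization of the required family of signatures, completing the reduction from \#PM.
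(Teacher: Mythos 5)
Your Hadamard computation is correct, but you've missed the crucial counting observation that makes the whole interpolation machinery unnecessary. In the transformed problem $\holant{[0,0,1+a,-(3a+1)]}{[1,0,1,0]}$, every nonzero term forces each LHS vertex to have Hamming weight at least $2$ and each RHS vertex to have weight at most $2$. Summing edge-endpoints over both sides and using $|U|=|V|$ (the graph is $3$-regular bipartite) gives $2|U| \le 2m_2 + 3m_3 = 2n_2 \le 2|V|=2|U|$, which forces $m_3=0$ and weight exactly $2$ at every vertex. So the $m_3=0$ terms aren't merely ``the key contribution'': they are the \emph{only} nonzero terms, and the Holant is already $(1+a)^{|U|}$ times the number of weight-$2$-everywhere assignments. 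Flipping $0$'s and $1$'s turns these into perfect matchings, and Dagum--Luby finishes the proof for all $a\neq -1$ simultaneously (no special handling of $a=-1/3$ is needed).

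Beyond the missed observation, the proposed ``fix'' for polynomial-size realizations has a concrete flaw. If you attach three copies of the chain $G_1^k$ (with $G_1=\smm{1}{-2a-1}{a}{3a+2}$) to one $(=_3)$, the resulting ternary LHS signature is $g_k = (p,r)^{\otimes 3}+(q,s)^{\otimes 3}$ where $G_1^k=\smm{p}{q}{r}{s}$. A direct check shows $p+r=q+s=(a+1)^k$ for all $k$, so after the Hadamard transformation the weight-$0$ entry of $g_kH^{\otimes 3}$ is $2(a+1)^{3k}\neq 0$ for $a\neq -1$. Thus $g_kH^{\otimes 3}$ is \emph{not} of the form $[0,0,*,*]$, the stratification by $(m_2,m_3)$ breaks down, and the Holant is no longer a polynomial in a single parameter $s$. (Indeed $g_0=(=_3)$, not $f$, so this family never even passes through the base signature.) The only construction in your plan that does stay in the family $[1,\alpha,-2\alpha-1,3\alpha+2]$ is iterating $\Gamma=G_{aux}$, which is exponential in size, exactly the obstacle you identified. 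The counting argument dissolves all of this.
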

\begin{proof}

Observe that the truth-table of the symmetric signature
$[1, a, -2a-1, 3a+2]$ written as an 8-dimensional column vector is just
{\small
\begin{equation*} \label{magic}
   2(a+1)\left(\left[\begin{array}{c}1 \\ 0\end{array}\right]^{\otimes 3}+\left[\begin{array}{c}0 \\ 1\end{array}\right]^{\otimes 3}\right) - \frac{a+1}{2}\left(\left[\begin{array}{c}1 \\ 1\end{array}\right]^{\otimes 3} +\left[\begin{array}{c}1 \\ -1\end{array}\right]^{\otimes 3}\right)-a\left[\begin{array}{c}1 \\ -1\end{array}\right]^{\otimes 3}.
\end{equation*}
}

Here again, the tensor powers written as 8-dimensional vectors represent  truth-tables,
and the linear combination of these  vectors
``holographically'' reconstitute a truth-table of
 the symmetric signature $[1, a, -2a-1, 3a+2]$.

We apply the holographic transformation with the Hadamard matrix $H = \left[\begin{smallmatrix}1 & 1 \\ 1 & -1\end{smallmatrix}\right]$. Note that  $(1,0)H = (1,1)$, $(0,1)H = (1,-1)$, $(1,1)H = (2,0)$ and $(1,-1)H = (0,2)$, and we get
\begin{equation}
   \begin{split}
       \holant{[1,a,-2a-1,3a+2]}{(=_3)} & \equiv_{T} \holant{[1,a,-2a-1,3a+2]H^{\otimes 3}}{(H^{\otimes 3})^{-1}[1,0,0,1]}\\   & \equiv_{T} \holant{[0,0,a+1,-3a-1]}{[1,0,1,0]}\\& \equiv_{T} \holant{[0,0,a+1,0]}{[0,0,1,0]} 
   \end{split} 
\end{equation}
where the last equivalence follows from the observation that for each nonzero term in the Holant sum,
every vertex on the LHS has at least two  of three
edges  assigned  $1$ (from $[0,0,a+1,-3a-1]$), meanwhile 
every vertex on the RHS has at most two  of three
edges  assigned  $1$  (from $[1,0,1,0]$). The graph being bipartite and 3-regular,
the number of vertices on both sides must equal,
 thus every vertex
has exactly two incident edges assigned 1.

Then by flipping 0's and 1's, $\holant{[0,0,a+1,0]}{[0,0,1,0]} \equiv_{T} \holant{[0,a+1,0,0]}{[0,1,0,0]}$. For $a \ne -1$, this problem is equivalent to counting perfect matchings in  bipartite 3-regular graphs, which is \numP-complete by Theorem 6.2 in~\cite{dagum1992approximating}.
If $a=-1$, the signature $[1, -1, 1, -1] = [1, -1]^{\otimes 3}$ is degenerate, and thus in P. The 
 holographic reduction also reveals that, not only the problem is in P, but 
the Holant sum is $0$.
\end{proof}

We can generalize Lemma~\ref{p3} to get the following corollary.

\begin{corollary}\label{cor:planar}
The problem $\holant{f}{\, =_3}$, where $f= [3a+b,-a-b, -a+b, 3a-b]$, is
computable in polynomial time on planar graphs for all $a, b$, but is \numP-hard on general graphs for all $a\ne 0$.
\end{corollary}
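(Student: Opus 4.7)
The plan is to adapt the Hadamard trick used in Lemmas~\ref{easy} and~\ref{p3}. The crucial algebraic observation is the decomposition
$$f \;=\; a\,[3,-1,-1,3] \;+\; b\,[1,-1]^{\otimes 3}.$$
From the proof of Lemma~\ref{easy} we already have $[3,-1,-1,3]\,H^{\otimes 3}=[0,0,8,0]$, and since $(1,-1)H=(0,2)$ we have $[1,-1]^{\otimes 3}H^{\otimes 3}=[0,0,0,8]$. Therefore $fH^{\otimes 3}=[0,0,8a,8b]$, while the RHS $(=_3)$ transforms (up to a scalar) to $[1,0,1,0]$ as computed in Lemma~\ref{p3}. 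Tracking constants via Valiant's Holant theorem gives
$$\holant{f}{=_3}(G) \;=\; 2^{|U|}\,\holant{[0,0,a,b]}{[1,0,1,0]}(G).$$

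Next I would run a global counting argument on the transformed instance. For any nonzero term in the Holant sum, the LHS signature $[0,0,a,b]$ forces each $u\in U$ to have $\ge 2$ incident $1$-edges, while the RHS signature $[1,0,1,0]$ forces each $v\in V$ to have an even number (i.e., $0$ or $2$) of incident $1$-edges. Summing the number of $1$-edges from each side and using $|U|=|V|$ (which holds in every 3-regular bipartite graph) forces both inequalities to be tight: every $u\in U$ has exactly two $1$-edges and every $v\in V$ has exactly two. Hence the weight-$3$ entry $b$ and the weight-$0$ RHS entry never fire, and the surviving edge subsets $E'\subseteq E$ are precisely the complements of perfect matchings of $G$. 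Tallying the local contributions I expect
$$\holant{f}{=_3}(G) \;=\; (2a)^{|U|}\cdot\operatorname{PerfMatch}(G).$$

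The two halves of the corollary then drop out. A holographic transformation does not alter the underlying graph, so planarity is preserved, and for planar $G$ the quantity $\operatorname{PerfMatch}(G)$ is computable in polynomial time by FKT, yielding tractability for all $a,b$ (when $a=0$, $f=b[1,-1]^{\otimes 3}$ is degenerate and already in P on any graph). For $a\ne 0$ on general $G$, dividing by the nonzero scalar $(2a)^{|U|}$ recovers $\operatorname{PerfMatch}(G)$, and counting perfect matchings in 3-regular bipartite graphs is \numP-hard by Theorem~6.2 of~\cite{dagum1992approximating}.

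The main conceptual hurdle is the global counting argument; it is not a local gadget reduction, and without the identity $|U|=|V|$ both the entry $b$ on the LHS and the weight-$0$ entry on the RHS would genuinely contribute, so no purely holographic reduction alone sends the transformed signature onto a matchgate. This is exactly the ``combination of a holographic transformation together with an independent global argument'' that the introduction foreshadows.
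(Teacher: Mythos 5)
Your proposal is correct and follows essentially the same route as the paper: the $H^{\otimes 3}$ holographic transformation sends $f$ to $[0,0,8a,8b]$ and $(=_3)$ to a multiple of $[1,0,1,0]$, after which the paper invokes exactly the global counting argument from Lemma~\ref{p3} (every LHS vertex needs $\ge 2$ one-edges, every RHS vertex $\le 2$, and $|U|=|V|$ forces equality) to land on perfect matchings and then FKT/Dagum--Luby. Your decomposition $f = a[3,-1,-1,3] + b[1,-1]^{\otimes 3}$ and the explicit bookkeeping $\holant{f}{=_3}(G)=(2a)^{|U|}\operatorname{PerfMatch}(G)$ are nice clarifications but not a different argument.
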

\begin{proof}
The following equivalence is by a holographic transformation
using $H = \left[\begin{smallmatrix}1 & 1 \\ 1 & -1\end{smallmatrix}\right]$:
\begin{eqnarray*}
       \holant{f}{(=_3)} & \equiv_{T} & \holant{fH^{\otimes 3}}{(H^{-1})^{\otimes 3}(=_3)}\\  &  \equiv_{T} & \holant{[0,0, a,b]}{[1,0,1,0]}\\& \equiv_{T} & \holant{[0,0,a,0]}{[0,0,1,0]} \\  &\equiv_{T} & \holant{[0,a,0,0]}{[0,1,0,0]} 
\end{eqnarray*}
where the third reduction follows the same reasoning as in the proof of Lemma~\ref{p3}.
When $a\ne 0$, $\holant{[0,a,0,0]}{[0,1,0,0]}$ is (up to a global nonzero factor) the perfect matching problem on 3-regular bipartite graphs. This problem is computable in polynomial time on planar graphs and the reductions are valid for planar graphs as well. It is \numP-hard on general graphs (for $a\ne 0$).
\end{proof}

\noindent
\textbf{Remark}:
The planar tractability
of the problem $\holant{f}{\,=_3}$, for $f= [3a+b,-a-b, -a+b, 3a-b]$,  is a remarkable fact. It is
neither accomplished by a holographic transformation
to matchgates alone, nor entirely independent from it.
One can prove that the signature 
$f$ is not matchgates-transformable
(for nonzero $a, b$; see  \cite{cai2017complexity} for the theory of matchgates and the realizability of
signatures by matchgates under  holographic transformation).
In previous complexity dichotomies, we have found that for
the entire class of counting CSP problems over Boolean variables, all
problems that are \#P-hard in general, but P-time tractable 
on planar graphs, are  tractable by the following universal algorthmic strategy---a
 holographic transformation
to matchgates followed by the FKT algorithm~\cite{CaiLX10}. On the other hand, for (non-bipartite)
Holant problems with arbitrary symmetric signature sets, this category of problems (planar
tractable but \#P-hard in general) 
is completely characterized by two types~\cite{CaiFGW15} : (1) holographic transformations
to matchgates, and (2) a separate kind that depends on the existence of ``a wheel structure'' (unrelated to 
holographic transformations
and matchgates). Here in Corollary~\ref{cor:planar}
we have found the first instance where a new type has emerged.


\begin{proposition} \label {g1-rou}
For $G_1 = \left[\begin{smallmatrix}1 & b \\ a & c\end{smallmatrix}\right]$, with $a,b,c \in \mathbb{Q}$, if it is non-singular (i.e., $c \ne ab$), then it has two nonzero eigenvalues $\lambda$ and $\mu$. The ratio  $\lambda/\mu$
is not a root of unity \emph{unless} at least one of the following conditions holds: \begin{equation}\label{c1eq} \begin{cases}

c+1=0\\
ab+c^2 + c + 1= 0\\
2ab + c^2 + 1= 0\\
3ab+c^2-c+1=0\\
4ab + c^2-2c+1=0\\
\end{cases}
\end{equation}
\end{proposition}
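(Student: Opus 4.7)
The plan is to analyze the eigenvalues of $G_1$ via its characteristic polynomial $t^2 - (1+c)t + (c-ab)$, whose roots $\lambda, \mu$ satisfy $\lambda + \mu = 1+c$ and $\lambda\mu = c-ab$. Non-singularity $c \neq ab$ gives $\lambda \mu \neq 0$, so both eigenvalues are nonzero and the ratio $\zeta = \lambda/\mu$ is a well-defined nonzero complex number.

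The key observation is that since $a, b, c \in \mathbb{Q}$, the eigenvalues lie in the quadratic extension $K = \mathbb{Q}(\sqrt{(1-c)^2 + 4ab})$ (or in $\mathbb{Q}$ itself). Hence $\zeta \in K$, and $\zeta^{-1} = \mu/\lambda \in K$ as well. Now the torsion subgroup of $K^\times$ (i.e., the group of roots of unity in any quadratic number field) has order dividing $4$ or $6$: for imaginary quadratic fields this is classical, with the full 4th roots of unity only in $\mathbb{Q}(i)$ and the full 6th roots of unity only in $\mathbb{Q}(\sqrt{-3})$, and for real quadratic fields only $\pm 1$ are roots of unity. Consequently, if $\zeta$ is a root of unity, its order $n$ lies in $\{1, 2, 3, 4, 6\}$.

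Next, I would translate each possible order into a polynomial condition on $a, b, c$. Write $s = \zeta + \zeta^{-1} + 2 = (\lambda+\mu)^2/(\lambda\mu) = (1+c)^2/(c-ab)$. Computing $\zeta + \zeta^{-1}$ for each admissible order yields $s \in \{4, 0, 1, 2, 3\}$ corresponding to $n = 1, 2, 3, 4, 6$ respectively. Clearing denominators, the condition $(1+c)^2 = k(c-ab)$ for $k \in \{0, 1, 2, 3, 4\}$ becomes, after rearrangement, exactly one of the five equations in (\ref{c1eq}): namely $c+1=0$ for $k=0$; $ab+c^2+c+1=0$ for $k=1$; $2ab+c^2+1=0$ for $k=2$; $3ab+c^2-c+1=0$ for $k=3$; and $4ab+c^2-2c+1=0$ for $k=4$ (note the last is precisely the discriminant vanishing, which is the $\lambda=\mu$ case).

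The main obstacle, conceptually, is the number-theoretic input: one must recognize that only finitely many orders of roots of unity are possible inside a quadratic field, reducing an \emph{a priori} infinite case analysis to exactly five cases. Once this is in place, the rest is a direct algebraic verification matching each value of $s$ to the corresponding line of (\ref{c1eq}). Conversely, each equation in (\ref{c1eq}) forces $(1+c)^2/(c-ab)$ to take one of these five values, making $\zeta$ a root of the quadratic $x^2 - (s-2)x + 1$ with $s-2 \in \{-2, -1, 0, 1, 2\}$, whose roots are indeed roots of unity of order $1, 2, 3, 4$, or $6$; so the condition given is tight.
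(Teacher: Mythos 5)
Your proposal is correct and takes the same approach as the paper: both observe that $\lambda/\mu$ lies in a degree-$2$ extension of $\mathbb{Q}$ so that if it is a root of unity its order must be in $\{1,2,3,4,6\}$, and then translate these five possibilities into the five polynomial conditions of (\ref{c1eq}). Your use of the rational invariant $s = \zeta+\zeta^{-1}+2 = (1+c)^2/(c-ab)$ is a tidy way to extract those equations without ever handling $\Delta$ explicitly, but the underlying argument is identical to the paper's.
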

\begin{proof}
We have $\lambda=\frac{-\Delta+(1+c)}{2}$ and $\mu=\frac{\Delta+(1+c)}{2}$, where $\Delta=\sqrt{(1-c)^2 + 4ab}$. Since $a,b,c \in \mathbb{Q}$,  if
$\frac{\lambda}{\mu}$ is a root of unity it belongs to
 an extension field of $\mathbb{Q}$ of degree 2. Thus it
can  only be
one of the following 8 values: $\pm 1$, $\pm i$, $\frac{\pm 1 \pm \sqrt{3}i}{2}$, where $i=\sqrt{-1}$. 
This gives us the cases listed in (\ref{c1eq}).
\end{proof}

Now we introduce two more binary straddled signatures --- $G_2$ and $G_3$ in Figure \ref{g2g3}.
The signature matrix of 
$G_2$ is $
\left[\begin{smallmatrix}
w & b' \\ a' & c'
\end{smallmatrix}\right]$,
where $w=1+2a^3+b^3$, $a'=a+2a^2b+b^2c$, $b'=a^2+2ab^2+bc^2$ and $c'=a^3+2b^3+c^3$. Similar to $G_1$, we have
$\Delta' = \sqrt{(w-c')^2+4a'b'}$, two eigenvalues $\lambda'=\frac{-\Delta' + (w+c')}{2}$ and $\mu' = \frac{\Delta' + (w+c')}{2}$. If $a'\ne 0$, we have $x'=\frac{\Delta'-(w-c')}{2a'}$, $y'=\frac{\Delta'+(w-c')}{2a'}$ and if further $\Delta' \ne 0$ we can write its Jordan Normal Form as 
\begin{equation}\label{eqn:G2-JNF}
G_2 = \left(\begin{array}{ll}w' & b' \\ a' & c'\end{array}\right)=\left(\begin{array}{cc}-x' & y' \\ 1 & 1\end{array}\right)\left(\begin{array}{ll}\lambda' & 0 \\ 0 & \mu'\end{array}\right)\left(\begin{array}{cc}-x' & y' \\ 1 & 1\end{array}\right)^{-1}.
\end{equation}
The signature matrix of 
 $G_3$ is $\left[\begin{smallmatrix}1+ab & a^2+bc \\ a+b^2 & ab+c^2\end{smallmatrix}\right]$. In this case we define $w=1+ab$, $a'=a+b^2$, $b'=a^2+bc$ and $c'=ab+c^2$. Then the  corresponding quantities 
 $\Delta',  \lambda', \mu', x',  y'$ can be defined
 in the same way, and
 its Jordan Normal Form  takes the same form as
 in (\ref{eqn:G2-JNF}).


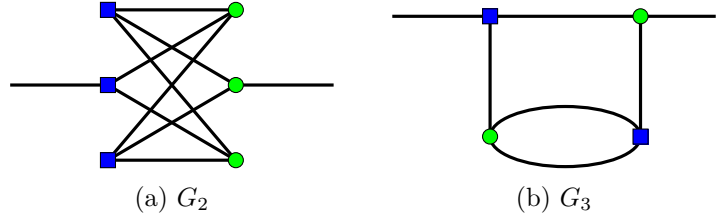
\begin{figure}[h!] 
    \centering
    \begin{subfigure}[b]{0.3\textwidth}
        \centering
        \begin{tikzpicture} 
            \draw[very thick] (0,4.3)--(1.2,4.3);
            \draw[very thick] (3.1,4.3) -- (4.3,4.3);
            \draw[very thick] (1.3,5.3)-- (3,5.3);
            \draw[very thick] (1.3,4.3)-- (3,5.3);
            \draw[very thick] (1.3,3.3)-- (3,5.3);
            \draw[very thick] (1.3,5.3)-- (3,4.3);
            \draw[very thick] (1.3,3.3)-- (3,4.3);
            \draw[very thick] (1.3,5.3)-- (3,3.3);
            \draw[very thick] (1.3,4.3)-- (3,3.3);
            \draw[very thick] (1.3,3.3)-- (3,3.3);
            \filldraw[fill=blue] (1.2,5.2) rectangle(1.4,5.4);
            \filldraw[fill=blue] (1.2,4.2) rectangle(1.4,4.4);
            \filldraw[fill=blue] (1.2,3.2) rectangle(1.4,3.4);
            \filldraw[fill=green] (3,5.3) circle (0.1);
            \filldraw[fill=green] (3,4.3) circle (0.1);
            \filldraw[fill=green] (3,3.3) circle (0.1);
        \end{tikzpicture}
        \caption{$G_2$}
        \label{f2}
    \end{subfigure}
    \begin{subfigure}[b]{0.3\textwidth}
        \centering
        \begin{tikzpicture}  
            \draw[very thick] (0,4.3)--(4.3,4.3);
            \draw[very thick] (1.3,2.7)--(1.3,4.3);
            \draw[very thick] (3.3,2.7)--(3.3,4.3);
            \draw[very thick] (2.3,2.7)ellipse(1 and 0.4);
            \filldraw[fill=blue](1.2,4.2)rectangle(1.4,4.4);
            \filldraw[fill=green] (3.3,4.3)circle(0.1);
            \filldraw[fill=blue] (3.2,2.6)rectangle(3.4,2.8);
            \filldraw[fill=green] (1.3,2.7)circle(0.1);
        \end{tikzpicture}
        \caption{$G_3$}
        \label{f3}
    \end{subfigure}
    \caption{Two binary straddled gadgets}
    \label{g2g3}
\end{figure}

Similar to Proposition \ref{g1-rou}, we have the following  claim on $G_2$ and $G_3$.


\begin{proposition} \label{c3}
For each gadget $G_2$ and $G_3$ respectively, 
if the signature matrix  is non-degenerate, then
 the ratio  $\lambda'/\mu'$ of its eigenvalues 
is not a root of unity \emph{unless} at least one of the following conditions holds, where 
$A = w + c', B = (c'-w)^2 + 4a'b'$.
\begin{equation} \label{c3eq}\begin{cases}
A=0\\
B=0 \\
A^2+B=0\\
A^2+3B=0\\
3A^2 + B = 0\\
\end{cases}
\end{equation}
\end{proposition}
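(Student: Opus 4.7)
My plan is to mirror the proof of Proposition \ref{g1-rou} almost verbatim, using the fact that the signature matrices of both $G_2$ and $G_3$ have entries $w, a', b', c' \in \mathbb{Q}$ whenever $a,b,c \in \mathbb{Q}$ (since these entries are polynomial expressions in $a,b,c$ with integer coefficients). The characteristic polynomial of the $2\times 2$ matrix is $t^2 - A t + \frac{A^2 - B}{4} = 0$, where $A = w + c'$ is the trace and $\frac{A^2 - B}{4} = wc' - a'b'$ is the determinant. Thus $\lambda' + \mu' = A$ and $\lambda' \mu' = (A^2 - B)/4$, and since the matrix is non-degenerate, $\lambda' \mu' \neq 0$.

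Next, I would argue that $\zeta := \lambda'/\mu'$ lies in the field $\mathbb{Q}(\sqrt{B})$, which is an extension of $\mathbb{Q}$ of degree at most $2$. The only roots of unity contained in any degree-$\le 2$ extension of $\mathbb{Q}$ are the eight values $\pm 1$, $\pm i$, and $\tfrac{\pm 1 \pm \sqrt{-3}}{2}$, corresponding to orders $1, 2, 4, 3, 6$. I would then translate each possibility into a polynomial condition on $A$ and $B$ via the symmetric function
\[
\zeta + \zeta^{-1} \;=\; \frac{\lambda'^2 + \mu'^2}{\lambda'\mu'} \;=\; \frac{A^2 - 2\lambda'\mu'}{\lambda'\mu'} \;=\; \frac{4A^2}{A^2 - B} - 2.
\]

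The five cases fall out directly. If $\zeta = 1$ then $\lambda' = \mu'$, i.e.\ $B = 0$. If $\zeta = -1$ then $\lambda' + \mu' = 0$, i.e.\ $A = 0$. If $\zeta = \pm i$, then $\zeta + \zeta^{-1} = 0$, giving $\lambda'^2 + \mu'^2 = 0$ and hence $A^2 = 2\lambda'\mu' = (A^2-B)/2$, i.e.\ $A^2 + B = 0$. If $\zeta$ is a primitive 3rd root of unity, then $\zeta + \zeta^{-1} = -1$ yields $A^2 = \lambda'\mu' = (A^2-B)/4$, i.e.\ $3A^2 + B = 0$. Finally, if $\zeta$ is a primitive 6th root of unity, then $\zeta + \zeta^{-1} = 1$ yields $A^2 = 3\lambda'\mu' = 3(A^2-B)/4$, i.e.\ $A^2 + 3B = 0$. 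These exhaust all eight roots of unity and produce exactly the five conditions in (\ref{c3eq}).

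I do not expect any genuine obstacle here; the proof is essentially the same as for $G_1$, and the main content is simply the bookkeeping that converts each of the five orbits of roots of unity into a polynomial identity in $A, B$. The only point that requires minor care is to note that the same argument applies to both $G_2$ and $G_3$ uniformly---one only needs that $A, B \in \mathbb{Q}$ and that the matrix is non-degenerate so that $\zeta$ is a well-defined element of a degree-$\le 2$ extension of $\mathbb{Q}$.
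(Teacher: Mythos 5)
Your proposal is correct and matches the paper's intended proof: the paper proves Proposition~\ref{g1-rou} by noting that for rational entries the eigenvalue ratio lies in a degree-$\le 2$ extension of $\mathbb{Q}$, enumerating the eight roots of unity available there, and converting them into polynomial constraints, and then states Proposition~\ref{c3} as a direct analogue ``Similar to Proposition~\ref{g1-rou}.'' Your computations of $\lambda'\mu' = (A^2-B)/4$, of $\zeta + \zeta^{-1}$ in terms of $A, B$, and the resulting five conditions all check out exactly, so you have simply written out the bookkeeping the paper leaves implicit.
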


\begin{lemma} \label{y11x}
Suppose  $a,b,c \in \mathbb{Q}$, $a\neq0$ and $c\neq ab$ and $a,b,c$ do not satisfy any condition in  (\ref{c1eq}). Let $x=\frac{\Delta-(1-c)}{2 a}$,  $y=\frac{\Delta+(1-c)}{2 a}$ and $\Delta=\sqrt{(1-c)^2 + 4ab}$. Then 
for $\Holant ( \, [1,a,b,c] \, | =_3)$,
\begin{enumerate}
    \item we can  interpolate $[y,1]$ on LHS;
    \item we can  interpolate $[1,x]$ on RHS except for 2 cases:
    $[1,a,a,1]$, $[1,a,-1-2a, 2+3a]$.
\end{enumerate}  
\end{lemma}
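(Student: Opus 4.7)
The plan is to show this lemma follows essentially by composing Lemma~\ref{3.1.1}, Lemma~\ref{3.1.2}, and Proposition~\ref{g1-rou}, once we verify that the hypotheses align cleanly.

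First, I would unpack the hypotheses. Two of the conditions required by Lemma~\ref{3.1.1} are given to us directly: $a\neq 0$ and $c\neq ab$. The remaining two hypotheses of Lemma~\ref{3.1.1} are $\Delta\neq 0$ and $\lambda/\mu$ not a root of unity. For $\Delta$, observe that $\Delta=0$ exactly means $(1-c)^2 + 4ab=0$, i.e., $4ab+c^2-2c+1=0$, which is precisely the fifth equation of~(\ref{c1eq}); since we are assuming none of the conditions in (\ref{c1eq}) holds, we get $\Delta\neq 0$ for free. For the root-of-unity condition, Proposition~\ref{g1-rou} says that over $\mathbb{Q}$ the ratio $\lambda/\mu$ lies in a degree-$2$ extension of $\mathbb{Q}$, so the only roots of unity it could be are the eight explicit values $\pm 1$, $\pm i$, $(\pm 1\pm\sqrt{3}i)/2$, and each of these corresponds to exactly one of the five equations in~(\ref{c1eq}). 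Hence the hypothesis rules out all of them.

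With all four hypotheses of Lemma~\ref{3.1.1} verified, I can interpolate the degenerate binary straddled signature $\left[\begin{smallmatrix} y & xy \\ 1 & x\end{smallmatrix}\right]$ in polynomial time, with the $x, y$ of this lemma being exactly the $x, y$ produced by Lemma~\ref{3.1.1}. Now I invoke Lemma~\ref{3.1.2} with this straddled signature available. Part~(1) of Lemma~\ref{3.1.2} immediately yields the interpolation of $[y,1]$ on the LHS, proving claim~(1). Part~(2) of Lemma~\ref{3.1.2} yields the interpolation of $[1,x]$ on the RHS, subject to its two listed exceptional signatures $[1,a,a,1]$ and $[1,a,-1-2a,2+3a]$; these are exactly the two exceptions in claim~(2).

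There is no real obstacle here beyond careful bookkeeping of the hypotheses: the non-obvious point is the fortunate fact that the single condition $\Delta=0$ coincides with one of the equations in~(\ref{c1eq}), so our single blanket assumption ``no condition in~(\ref{c1eq}) holds'' simultaneously handles both the non-vanishing of $\Delta$ and the non-root-of-unity condition on $\lambda/\mu$. Once this alignment is noticed, the proof is just an application of Lemma~\ref{3.1.1} followed by Lemma~\ref{3.1.2}.
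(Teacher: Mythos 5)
Your proposal is correct and takes essentially the same route as the paper: apply Lemma~\ref{3.1.1} to the gadget $G_1$, noting that $c\neq ab$ gives non-degeneracy and that the failure of all conditions in~(\ref{c1eq}) rules out $\lambda/\mu$ being a root of unity (via Proposition~\ref{g1-rou}), then feed the resulting degenerate straddled signature into Lemma~\ref{3.1.2}. You are slightly more explicit than the paper in observing that $\Delta\neq 0$ is itself subsumed by the last equation of~(\ref{c1eq}), which is a worthwhile clarification but not a different argument.
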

\begin{proof}
This lemma follows from  Lemma \ref{3.1.1} and Lemma \ref{3.1.2}
using the binary straddled gadget $G_1$ with singaure matrix $\left[\begin{smallmatrix}1 & b \\ a & c\end{smallmatrix}\right]$.  Note that $c\neq ab$ indicates that matrix $G_1$ is non-degenerate, and $\lambda/\mu$ not being a root of unity is equivalent to none of the equations in (\ref{c1eq}) holds.
\end{proof}

We have similar statements corresponding to $G_2$ (resp. $G_3$). When the signature matrix is non-degenerate and does not satisfy any condition in  (\ref{c3eq}), we can  interpolate the corresponding $[y',1]$ on LHS, and we can also interpolate the corresponding $[1,x']$ on RHS except when $y'=-1$.


\begin{definition} \label{d1}
For $\Holant ( \, [1,a,b,c] \, | =_3)$, with $a,b,c \in \mathbb{Q}$, $a \neq 0$, we say a binary straddled gadget $G$ \emph{works} if the signature matrix of $G$ is non-degenerate and the ratio of its two eigenvalues $\rlm$ is not a root of unity.
\end{definition}
\noindent
\textbf{Remark}:  Explicitly, the condition that $G_1$ \emph{works} is that $c \neq ab$ and $a,b,c$ do not satisfy any condition in (\ref{c1eq}), which is just the assumptions in Lemma \ref{y11x}. $G_1$ \emph{works} implies that it can be used to interpolate  $[y,1]$ on LHS, 
and to interpolate $[1,x]$ on RHS with two exceptions 
for which we already proved the dichotomy. The $x,y$ are as stated in Lemma \ref{y11x}.

Similarly, when the binary straddled gadget $G_2$ (resp. $G_3$) \emph{works}, for the corresponding values $x'$ and $y'$,
we can  interpolate $[y',1]$ on LHS,
and we can interpolate  $[1,x']$ on RHS except when $y'=-1$.

\begin{figure}[h!] 
\centering
\begin{tikzpicture}  
\draw[very thick] (0.4,7.3)--(6.2,7.3);
\draw[very thick] (1.3,7.3)--(3.3,3.8);
\draw[very thick] (5.3,7.3)--(3.3, 3.8);
\draw[very thick] (3.3, 3.8)--(3.3, 2.9);
\draw[very thick] (3.3,6.1)--(3.3,7.3);
\draw[very thick] (3.3,6.1)--(4.3, 5.55);
\draw[very thick] (3.3,6.1)--(2.3,5.55);
\filldraw[fill=blue] (1.2,7.2) rectangle(1.4,7.4);  
\filldraw[fill=blue] (5.2,7.2) rectangle(5.4,7.4);  
\filldraw[fill=blue] (3.2,3.7) rectangle(3.4,3.9); 
\filldraw[fill=green] (3.3,7.3)circle(0.1);
\filldraw[fill=green] (4.3,5.55)circle(0.1);
\filldraw[fill=green] (2.3,5.55)circle(0.1);
\filldraw[fill=blue] (3.2, 6.0) rectangle(3.4, 6.2);     
\end{tikzpicture}
  \caption{$G_4$}
  \label{f4}
\end{figure}

\vspace{.1in}

The ternary gadget $G_4$ in Figure \ref{f4} will be used in the proof here and later.

The unary signatures  $\Delta_{0}=[1,0]$ and $\Delta_{1}=[0,1]$
are called  the pinning signatures because they ``pin''
a variable to 0 or 1.
One good use of having unary signatures is that we can use Lemma~\ref{3.2} to get the two 
pinning signatures. Pinning signatures
are helpful as the following lemma shows.

\begin{lemma} \label{after-unary}
If $\Delta_{0}$ and $\Delta_{1}$ are available on the RHS in
$\Holant ( \, [1,a,b,c] \, | =_3)$, 
where $a,b,c \in \mathbb{Q}$, $ab \neq 0$,
then  the problem is \numP-hard unless $[1,a,b,c]$ is affine or degenerate, in which cases it is in P.
\end{lemma}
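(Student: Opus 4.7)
My plan is to use pinning to reduce to the binary-LHS Holant dichotomy of Theorem~\ref{2-3}, followed by a finite case analysis for the signatures Theorem~\ref{2-3} leaves open.

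First, attaching $\Delta_0$ (respectively $\Delta_1$) on the RHS to one edge of $f = [1,a,b,c]$ on the LHS produces the binary LHS signature $[1,a,b]$ (respectively $[a,b,c]$). By simulating each binary vertex with a ternary $f$-vertex one of whose edges is pinned, I obtain
\[
\holant{[1,a,b]}{=_3} \ \leq_T \ \holant{[1,a,b,c]}{\{=_3, \Delta_0, \Delta_1\}},
\]
and similarly for $[a,b,c]$. I would then apply Theorem~\ref{2-3} to each restriction, using $a,b,c \in \mathbb{Q}$ and $ab \neq 0$ to eliminate cases (2) and (4) of Theorem~\ref{2-3}: case (2) forces $b=0$ (for $[1,a,b]$) or $ac=0$ (for $[a,b,c]$), and case (4) requires $Z=-1$, impossible for rational coefficients. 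What remains is: $[1,a,b]$ is tractable iff $b=a^2$ or ($a=\pm 1$ and $b=-1$), and $[a,b,c]$ is tractable iff $ac=b^2$ or ($b=\pm a$ and $c=-a$).

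Intersecting these two sets of conditions yields a finite list of non-hard candidates: the degenerate family $f=[1,a]^{\otimes 3}=[1,a,a^2,a^3]$; the two affine signatures $[1,1,-1,-1]$ and $[1,-1,-1,1]$; and four exceptional signatures $[1,1,1,-1], [1,-1,-1,-1], [1,-1,1,1], [1,1,-1,1]$. These four are related pairwise by the holographic transformation by $X=\trans{0}{1}{1}{0}$ (which fixes $=_3$ and preserves the complexity up to per-vertex sign factors), so it suffices to prove \numP-hardness for the two representatives $[1,1,1,-1]$ and $[1,-1,1,1]$.

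To handle these I would use a larger LHS gadget. Consider the symmetric ternary gadget $h$ built from three copies of $f$ and two copies of $=_3$, where each $f$ contributes a single dangling edge and each $=_3$ is internal, connecting to one edge of every $f$. Direct computation gives $h = [\,1+2a^3+b^3,\ a+2a^2 b+b^2 c,\ a^2+2ab^2+bc^2,\ a^3+2b^3+c^3\,]$. For $f=[1,1,1,-1]$ this yields $h=[4,2,4,2]$; pinning one edge to $0$ produces the binary $[4,2,4]\propto[2,1,2]$, and Theorem~\ref{2-3} with $X=4\notin\{0,\pm 1\}$ forces \numP-hardness. For $f=[1,-1,1,1]$, however, $h=[0,2,0,2]$ degenerates so that both binary pinnings have vanishing middle entry and Theorem~\ref{2-3} does not apply; here I would use the ternary gadget $G_4$ of Figure~\ref{f4} instead, for which direct computation gives $G_4=[2,2,-2,6]$, and pinning one edge to $1$ yields $[2,-2,6]\propto[-1,1,-3]$ with $X=3\notin\{0,\pm 1\}$, again giving \numP-hardness by Theorem~\ref{2-3}. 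The main technical obstacle is precisely this possibility of a simpler gadget degenerating onto a binary whose middle entry vanishes, forcing one to exhibit a secondary gadget; the search and the verification of the $X\notin\{0,\pm 1\}$ condition for the finitely many exceptional signatures can, if needed, be automated via Mathematica.
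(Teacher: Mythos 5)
Your proof is correct and follows essentially the same route as the paper: pin with $\Delta_0,\Delta_1$ to get $[1,a,b]$ and $[a,b,c]$, apply Theorem~\ref{2-3}, narrow the open cases to a finite list (degenerate plus two affine plus four exceptional, collapsing to two under $0\leftrightarrow1$ flipping), and then dispose of the two representatives with a further ternary gadget followed by one more pinning. The paper handles both representatives with a single application of $G_4$, producing $[1,1,3,3]$ and $[1,1,-1,3]$, whereas you use $G_{aux}$ first and fall back to $G_4$ for $[1,-1,1,1]$ — an equivalent if slightly longer path. One small factual slip: for $h=[0,2,0,2]$ the two pinnings are $[0,2,0]$ and $[2,0,2]$, and only the latter has vanishing middle entry; the former has middle entry $2$ but falls into the tractable case $X=Z=0$ of Theorem~\ref{2-3}. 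The conclusion that neither pinning yields hardness, and hence a secondary gadget is needed, is nonetheless right.
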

\begin{proof}
Connecting 
$[1,0]$, $[0,1]$ to $[1,a,b,c]$ on LHS respectively, we  get binary signatures $[1,a,b]$ and $[a,b,c]$. Then we can apply Theorem \ref{2-3},
and the problem is \numP-hard unless both $[1,a,b]$ and $[a,b,c]$ are in P. When $ab\neq 0$, both $[1,a,b]$ and $[a,b,c]$ are in P only when $[1,a,b,c]=$  $[1,1,1-1]$ or $[1,-1,1,1]$ or $[1,-1,-1,-1]$ or $[1,1,-1,1]$ or $[1,1,-1,-1]$ or $[1,-1,-1,1]$, where the last two are affine and hence in P. Due to the symmetry by flipping 0 and 1 in the  signature, it suffices to consider only $f= [1,1,1,-1]$ and $g=[1,-1,1,1]$; they   are neither affine nor degenerate.

For both  $f$ and $g$ we use the gadget $G_4$ to produce  ternary signatures  $f' = [1,1,3,3]$
and $g' = [1,1,-1,3]$ respectively.  Neither are among  the exceptional cases above.
So $\Holant ( \, f \, | =_3)$ and $\Holant ( \, g\, | =_3)$ are both \numP-hard.
\end{proof}

\vspace{.1in}

The following lemma lets us interpolate arbitrary unary signatures on RHS, in particular  $\Delta_{0}$ and $\Delta_{1}$, from a binary gadget with a straddled signature and a suitable unary signature $s$ on RHS.  Mathematically, the proof is essentially the same as in \cite{vadhan2001complexity}, but technically Lemma~\ref{3.2} applies to  binary straddled signatures. 

\begin{lemma} \label{3.2}
Let $M\in \mathbb{R}^{2\times 2}$ be a non-singular signature matrix for a binary straddled gadget which is diagonalizable with distinct eigenvalues, and $s=[a,b]$ be a unary signature on RHS that is not a row eigenvector of $M$. Then $\{s\cdot M^j\}_{j\geq 0}$ can be used to interpolate any unary signature on RHS.
\end{lemma}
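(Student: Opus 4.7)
The plan is to apply standard polynomial interpolation in the spirit of~\cite{vadhan2001complexity}, adapted to the bipartite straddled setting.

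Since $M$ is diagonalizable with distinct nonzero eigenvalues $\lambda\ne\mu$, write $M = P\,\operatorname{diag}(\lambda,\mu)\,P^{-1}$, so that iterating the straddled gadget $j$ times realizes the matrix $M^j = P\,\operatorname{diag}(\lambda^j,\mu^j)\,P^{-1}$. Setting $sP=(d_1,d_2)$, the hypothesis that $s$ is not a row eigenvector of $M$ is precisely $d_1 d_2\ne 0$; attaching $s$ to the $j$-fold iterate realizes the unary $s\cdot M^j = (d_1\lambda^j,\,d_2\mu^j)\,P^{-1}$, which on expansion has the form $(A\lambda^j+B\mu^j,\,C\lambda^j+D\mu^j)$ with $AD-BC = d_1 d_2\,\det(P^{-1})\ne 0$.

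Now let $t=(t_0,t_1)$ be an arbitrary target unary and let $\Omega$ be a signature grid containing $n$ copies of $t$ on the RHS. Stratifying the Holant sum by the Hamming weight $k$ of the assignment restricted to these $n$ dangling edges yields
\[
\Holant_\Omega \;=\; \sum_{k=0}^{n} t_0^{\,n-k}\, t_1^{\,k}\, H_k,
\]
where each $H_k$ depends only on the remainder of $\Omega$. For $j=0,1,\ldots,n$, construct $\Omega_j$ by replacing every copy of $t$ with the polynomial-size gadget realizing $(u_j,w_j):=s\cdot M^j$; the same stratification gives $\Holant_{\Omega_j} = \sum_k u_j^{\,n-k} w_j^{\,k} H_k$. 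Because $u$ and $w$ are linearly independent linear forms in $\lambda^j,\mu^j$ (by $AD-BC\ne 0$), the polynomials $\{u^{n-k}w^k\}_{k=0}^n$ form a basis of the homogeneous degree-$n$ forms in $\lambda^j,\mu^j$, so rewriting in the monomial basis yields
\[
\Holant_{\Omega_j} \;=\; \mu^{nj}\sum_{m=0}^{n}\bigl((\lambda/\mu)^j\bigr)^m\,\tilde H_m,
\]
where the $\tilde H_m$ are obtained from the $H_k$ by an invertible linear map.

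Choosing $n+1$ indices $j$ for which $(\lambda/\mu)^j$ take pairwise distinct values produces a full-rank Vandermonde system; solving it recovers the $\tilde H_m$, hence the $H_k$, hence $\Holant_\Omega$ for any desired $t$, all in polynomial time. The one potential obstacle is exhibiting $n+1$ such indices: if $\lambda/\mu$ is not a root of unity, the natural choice $j=0,1,\ldots,n$ works immediately, and this is precisely the regime in which the lemma is invoked elsewhere in the paper (via the ``works'' condition of Definition~\ref{d1}); if $\lambda/\mu$ is a root of unity of some order $d$, then up to $d$ distinct unaries $s\cdot M^j$ are available and the argument still interpolates $t$ in grids with at most $d-1$ occurrences, which is all that is needed in our applications.
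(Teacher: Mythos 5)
Your proof is correct and takes the same Vadhan-style route the paper alludes to (the paper gives no proof, deferring to~\cite{vadhan2001complexity}): diagonalize $M$, observe that $s$ not being a row eigenvector forces both coordinates of $sP$ to be nonzero so that $s\cdot M^j$ is a non-degenerate linear combination $(A\lambda^j+B\mu^j,\,C\lambda^j+D\mu^j)$, stratify, change basis to the monomial forms, and solve the resulting Vandermonde system.

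You also correctly flag that the lemma as literally stated is missing the hypothesis that $\lambda/\mu$ is not a root of unity: ``diagonalizable with distinct eigenvalues'' alone does not prevent $\lambda/\mu$ from having finite multiplicative order, in which case $\{s\cdot M^j\}_{j\ge 0}$ is finite up to scalar and cannot interpolate grids with arbitrarily many occurrences of the target unary. This is a real (if minor) imprecision in the statement, and it is indeed saved in every invocation by the ``works'' condition of Definition~\ref{d1}. The only inaccuracy in your write-up is the closing clause that bounded-occurrence interpolation ``is all that is needed in our applications'' --- that is not true, since the hardness reductions require handling signature grids of unbounded size --- but it is harmless because, as you note, $\lambda/\mu$ is never a root of unity where the lemma is applied.
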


\section{Dichotomy when $ab\neq 0$ and $G_1$ works}  \label{large2}

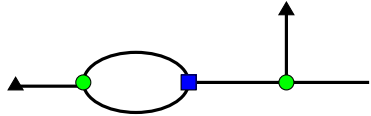
\begin{figure}[h!] 
\centering
\begin{tikzpicture}  
\draw[very thick] (2,3.3)ellipse(0.7 and 0.4);
\filldraw[fill=black] (0.4,3.37)--(0.3,3.2)--(0.5,3.2)--cycle; 
\draw[very thick] (0.34,3.25)--(1.3,3.25);
\draw[very thick] (2.7,3.3)--(5.1,3.3);
\draw[very thick] (4,4.3)--(4,3.3);
\filldraw[fill=green] (1.3, 3.3)circle(0.1);  
\filldraw[fill=blue] (2.6,3.2) rectangle(2.8,3.4);  
\filldraw[fill=green] (4,3.3)circle(0.1);
\filldraw[fill=black] (4,4.37)--(3.9,4.2)--(4.1,4.2)--cycle;
\end{tikzpicture}

  \caption{Non-linearity gadget, where a triangle represents the unary gadget $[y,1]$}
  \label{non_li}
\end{figure}

Let us introduce a \emph{non-linearity} gadget in Figure \ref{non_li} where the triangles represent $[y,1]$. It is on
the RHS with a unary signature $[y^2+yb, ya+c]$. 
%
The following two lemmas  will be used in proof of  Theorem~\ref{g1works1}.

\begin{lemma} \label{1-1 X=1}
Let $a,b, c \in \mathbb{Q}$, $ab \neq 0$,  and
satisfy \emph{(con1)} $a^3 - b^3 - ab(1-c) =0$ and
\emph{(con2)} $a^3+ab+2b^3=0$.
Then $\Holant ( \, [1,a,b,c] \, | =_3)$ is
\numP-hard unless it is $[1,-1,1,-1] = [1, -1]^{\otimes 3}$, which is degenerate, in which case the problem is in P. 
\end{lemma}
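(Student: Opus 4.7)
The plan is to leverage the machinery of Section~\ref{large2} (particularly Lemmas~\ref{y11x}, \ref{3.2}, and~\ref{after-unary}) to reduce the problem to interpolating the pinning signatures $\Delta_0$, $\Delta_1$ on the RHS. Since $G_1$ works by hypothesis, Lemma~\ref{y11x} furnishes $[y,1]$ on the LHS and $[1,x]$ on the RHS, except for the two signatures $[1,a,a,1]$ and $[1,a,-1-2a,2+3a]$ already dispatched by Lemmas~\ref{easy} and~\ref{p3}. A brief check (substituting $b=a,c=1$ and $b=-1-2a, c=2+3a$ into (con1) and (con2)) shows that every intersection of those exceptional families with our hypotheses is a case already known to be \numP-hard (e.g.\ $[1,-\tfrac{1}{3},-\tfrac{1}{3},1]$ is \numP-hard by Lemma~\ref{easy}), so I may assume both $[y,1]$ and $[1,x]$ are in hand.

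With $[y,1]$ available on the LHS, I would construct a second RHS unary via the non-linearity gadget of Figure~\ref{non_li}, whose signature is $s=[y^2+yb,\,ya+c]$. Combined with $[1,x]$, this gives two unary signatures on the RHS. The matrix $G_1$ is non-singular and diagonalizable with distinct eigenvalues (precisely because $G_1$ works), so provided $s$ is not a row-eigenvector of $G_1$, Lemma~\ref{3.2} yields interpolation of any unary on the RHS, in particular $\Delta_0$ and $\Delta_1$. Lemma~\ref{after-unary} then gives \numP-hardness unless $[1,a,b,c]$ is affine or degenerate. A direct check against the short list of affine/degenerate ternary symmetric signatures, intersected with (con1) and (con2) and the constraint $ab\neq 0$, leaves only $(a,b,c)=(-1,1,-1)$, i.e.\ $[1,-1,1,-1]=[1,-1]^{\otimes 3}$, which is tractable.

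The main obstacle is verifying that $s$ fails to be a row-eigenvector of $G_1$ outside the exceptional case $(a,b,c)=(-1,1,-1)$. The eigenvector condition, combined with the defining relations $ay^2=(1-c)y+b$ for $y$ (and the analogous one for $x$) from Lemma~\ref{y11x}, together with the two cubic constraints (con1) and (con2), yields a polynomial system in $a,b,c$. I would discharge this with \texttt{CylindricalDecomposition} in Mathematica, following the methodology already employed elsewhere in the paper. Should the non-linearity gadget alone not suffice to break the eigenvector dependence for some stray subfamily, the fallback is to bring in one of the straddled gadgets $G_2$ or $G_3$ from Figure~\ref{g2g3}, whose eigenvalue structure (Proposition~\ref{c3}) is genuinely different from that of $G_1$ and which can be used to produce a further linearly independent unary on the RHS, thereby forcing Lemma~\ref{3.2} to apply. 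The name ``$X=1$'' of the lemma indicates the underlying reason this care is needed: the naïve construction of a binary signature on the LHS by attaching $[1,x]$ to $[1,a,b,c]$ lands in the tractable slot $X=1$ of Theorem~\ref{2-3} precisely under (con1) and (con2), so the non-linearity gadget is exactly the extra ingredient needed to escape that slot.
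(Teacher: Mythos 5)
There is a genuine gap in your primary strategy. You propose to apply the non-linearity gadget of Figure~\ref{non_li} with $[y,1]$ coming from $G_1$, producing $s=[y^2+yb,\,ya+c]$ on the RHS, and then to verify that $s$ is not a row eigenvector of $G_1$ so that Lemma~\ref{3.2} applies. But under hypothesis (con1) this check \emph{always} fails: in the proof of Theorem~\ref{g1works1} the condition ``$ya+c=(y^2+yb)x$'', i.e.\ $s\propto[1,x]$ with $[1,x]$ a row eigenvector of $G_1$, is shown to be equivalent (when $c\neq ab$) to $a^3-b^3-ab(1-c)=0$, which is precisely (con1). Thus the very hypotheses of this lemma place you in the case where the $G_1$-based non-linearity construction is powerless, which is why Theorem~\ref{g1works1}'s proof, after further deriving $X=1$ of Theorem~\ref{2-3} and hence (con2), hands the situation off to this lemma in the first place. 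Your ``main obstacle'' is not an obstacle to be discharged by CAD; it is a wall. Also, ``$G_1$ works'' is not a hypothesis of the lemma as stated, so even the existence of $[y,1]$ from $G_1$ is not automatic.

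What you relegate to a parenthetical ``fallback'' --- switching to the straddled gadget $G_3$ (with its different eigenvectors, Proposition~\ref{c3}) --- is in fact the entire content of the paper's proof. The paper first dispatches $a+b^2=0$ (giving $[1,-1,1,-1]$), then uses $G_3$ rather than $G_1$ to produce $[y,1]$, feeds that into the non-linearity gadget, and derives the four eigenvector-exceptional conditions in (\ref{g3works}) for $G_3$; each of these, intersected with (con1), (con2) over $\mathbb{Q}$ via \texttt{CylindricalDecomposition}, yields only $[1,-1,1,-1]$ or $[1,-\tfrac13,-\tfrac13,1]$, both already settled. A separate CAD computation covers the case that $G_3$ does not work. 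Your write-up would need to promote the $G_3$ argument from a one-line contingency to the backbone of the proof, and delete the $G_1$-based opening, for it to go through.
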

\begin{proof}
If $a+b^2=0$, then $[1,a,b,c] = [1,-1,1,-1]$ which is degenerate. Now we assume $a+b^2\ne0$. Here we use Gadget $G_3$. 

First assume $G_3$ \emph{works}. 
Using  $a+b^2\ne 0$
together with (\emph{con1}) and  (\emph{con2}), we 
can verify that 
$\Delta = \sqrt{4(a+b^2)(a^2+bc)+(c^2-1)^2} \not =0$, and
we can write the Jordan Normal Form  $$G_3 = \left(\begin{array}{ll}1+ab & a^2+bc \\ a+b^2 & ab+c^2
\end{array}\right)=\left(\begin{array}{cc}-x & y \\ 1 & 1\end{array}\right)\left(\begin{array}{ll}\lambda & 0 \\ 0 & \mu\end{array}\right)\left(\begin{array}{cc}-x & y \\ 1 & 1\end{array}\right)^{-1},$$ where
$\lambda = \frac{1+2ab+c^2-\Delta}{2}$, $\mu = \frac{1+2ab+c^2+\Delta}{2}$, $x = \frac{\Delta+c^2-1}{2(a+b^2)}$, $y = \frac{\Delta-c^2+1}{2(a+b^2)}$. Because $G_3$ works, $[y,1]$ on LHS is  available. Use this $[y,1]$
in the non-linearity gadget in Figure \ref{non_li}, we get the unary signature $\left[y^{2}+y b, y a+c\right]$ on the RHS. By Lemma \ref{3.2}, we can interpolate any unary signature, in particular $\Delta_{0}$ and $\Delta_{1}$ on RHS
and apply Lemma~\ref{after-unary}, unless $\left[y^{2}+y b, y a+c\right]$ is proportional to a row eigenvector of $G_3$, namely $[1,-y]$ and $[1, x]$. Thus the exceptions are $ya+c = x(y^{2}+y b)$ and $ya+c = -y(y^2+yb)$. Notice that now $xy = \frac{a^2+bc}{a+b^2}$. The first equation implies $c = ab$ or $a + b^2= 0$ or $a^3-b^3c+ab(-1+c^2) = 0$. The second equation implies $a+b^2=0$ or $f_1 =0$ where $f_1= a^3+4a^6+3a^5b^2+a^3b^3-c-4a^3c+6a^4bc-6a^2b^2c-b^3c-3a^2b^5c-3a^3c^2-3abc^2-4b^3c2-a^3b^3c^2-6ab^4c^2-4b^6c^2+3c^3+4a^3c^3+6a^2b^2c^3+3b^3c^3+a^3c^4+3abc^4+4b^3c^4-3c^5-b^3c^5+c^7$. So there are four exceptional cases, \begin{equation} \label {g3works} \begin{cases}
    c = ab \\
    a + b^2= 0\\
    a^3-b^3c+ab(-1+c^2) = 0\\
    f_1 = 0\\ \end{cases}
\end{equation}

 For each of them, together with (\emph{con1}) and (\emph{con2}), we get 3 equations and can solve them using
Mathematica\texttrademark{}.
For rational $a,b,c$, when $ab\neq0$, there are only two possible results --- $[1,-1,1,-1]$ and $[1,-\frac{1}{3}, -\frac{1}{3}, 1]$. The first one violates $a+b^2 \ne 0$, 
and the second has been proved to be \numP-hard in Lemma \ref{easy}. For all other cases when $G_3$ works, we have
the pinnng signatures $\Delta_{0}$ and $\Delta_{1}$ 
on the RHS and then the lemma is proved by Lemma \ref{after-unary}.

Now suppose $G_3$ does not work. Then by Proposition \ref{c3}, we get at least one more condition, either one in (\ref{c3eq}) or $(a+b^2)(a^2+bc) = (1+ab)(ab+c^2) $ which indicates that $G_3$ is degenerate. For each of the 6 conditions, together with (\emph{con1}) and (\emph{con2}), we can solve them using
Mathematica\texttrademark{} for rational $a,b,c$. The only 
solution is $[1,-1,1,-1]$ which  violates $a+b^2 \ne 0$.
The proof of the  lemma is complete.
\end{proof}

\begin{lemma} \label{1-2 X=1}
Let $a,b, c \in \mathbb{Q}$, $ab \neq 0$,  and
satisfy \emph{(con1)} $a^3 - b^3 - ab(1-c) =0$ and \emph{(con2)}  $(a^4b+ab^4)^2 = (a^5+b^4)(b^5+a^4c)$.
Then $\Holant ( \, [1,a,b,c] \, | =_3)$ is
 \numP-hard unless it is $[1,a,a^2, a^3]$, which is degenerate and thus in P. 
\end{lemma}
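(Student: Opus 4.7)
The strategy closely mirrors Lemma~\ref{1-1 X=1} but uses the binary straddled gadget $G_2$ in place of $G_3$, since (con2) here is naturally an algebraic condition associated with the $G_2$ construction. First I would record that the claimed exception $f = [1,a,a^2,a^3] = [1,a]^{\otimes 3}$ is degenerate and therefore in P, and that substituting $b = a^2$ and $c = a^3$ into both (con1) and (con2) produces identities, so the statement is well-posed.

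Assume $f \neq [1,a,a^2,a^3]$, and split into two cases according to whether $G_2$ works in the sense of Definition~\ref{d1}. When $G_2$ works, apply the $G_2$-analogue of Lemma~\ref{y11x} to interpolate $[y',1]$ on the LHS; then feed this unary into the non-linearity gadget of Figure~\ref{non_li} to obtain a unary signature $\mathbf{s} = [y'^2 + y'b,\, y'a + c]$ on the RHS. By Lemma~\ref{3.2} applied to the matrix $G_2$, we can interpolate every unary signature on the RHS---in particular $\Delta_0$ and $\Delta_1$---unless $\mathbf{s}$ is parallel to a row eigenvector of $G_2$. Once $\Delta_0, \Delta_1$ are in hand, Lemma~\ref{after-unary} yields \numP-hardness; direct inspection shows that no affine signature with $ab \neq 0$ satisfies both (con1) and (con2) (the only candidates are $[1,1,-1,-1]$ and $[1,-1,-1,1]$), and that the only degenerate $f$ satisfying them is the excluded $[1,a,a^2,a^3]$.

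The remaining sub-cases are (i) $\mathbf{s}$ is parallel to a row eigenvector of $G_2$, which gives a polynomial identity of the schematic form $(y'^2+y'b)(\lambda'-w) = b'(y'a+c)$ (one for each eigenvalue $\lambda',\mu'$); and (ii) $G_2$ does not work, in which case Proposition~\ref{c3} or the degeneracy condition $wc'=a'b'$ supplies extra polynomial constraints. In each sub-case, I would form the combined polynomial system---the new constraint together with (con1), (con2), and $ab \neq 0$---and feed it to Mathematica's \texttt{CylindricalDecomposition}; the expectation, by analogy with the closing argument of Lemma~\ref{1-1 X=1}, is that the only rational solution is $(b,c) = (a^2,a^3)$, contradicting the standing assumption and completing the proof.

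The hard part will be this symbolic verification. The entries of $G_2$ are cubic in $a,b,c$, so both the row-eigenvector condition and the root-of-unity conditions of Proposition~\ref{c3} applied to $G_2$ have substantially higher degree than in the $G_3$ analysis of Lemma~\ref{1-1 X=1}; combined with the degree-nine polynomial $(a^4b+ab^4)^2 - (a^5+b^4)(b^5+a^4c)$ from (con2), the resulting systems are well beyond routine hand elimination. This is precisely the setting in which the restriction to $\mathbb{Q}$ and the appeal to \texttt{CylindricalDecomposition} become essential.
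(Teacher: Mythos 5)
Your proposal misses the key idea of the paper's proof. The paper does \emph{not} switch to $G_2$; instead it first eliminates $c$ between (con1) and (con2), obtaining a polynomial in $a,b$ alone that factors as
$(a^2 - b)\,(a^9 + a^4 b^4 + a^3 b^6 + b^9) = 0$.
The factor $a^2 - b = 0$ together with (con1) forces $c = a^3$, giving the degenerate exception $[1,a,a^2,a^3]$. For the remaining factor $a^9 + a^4 b^4 + a^3 b^6 + b^9 = 0$, the paper \emph{replaces} (con2) by this much simpler condition and then runs exactly the $G_3$ analysis of Lemma~\ref{1-1 X=1} (this is also what the Mathematica code in the appendix does: \texttt{con2 = a\^{}9 + a\^{}4 b\^{}4 + a\^{}3 b\^{}6 + b\^{}9 == 0}). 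You did not find this elimination-and-factorization step, which is precisely what makes the symbolic verification tractable, and you also do not carry out the verification yourself.

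There is also a flaw in the motivation for switching to $G_2$. You assert that (con2) is ``naturally an algebraic condition associated with the $G_2$ construction,'' but it is not: (con2) arises in Case~1, Subcase~2 of Theorem~\ref{g1works1} as the $X=1$ tractability condition after connecting the $G_1$-produced unary $[y^2,1]=[b^4/a^4,1]$ back to $[1,a,b,c]$; it has nothing to do with $G_2$. Using $G_2$ (whose entries are already cubic in $a,b,c$) together with the degree-$10$ polynomial from (con2) would produce a much higher-degree system than what the paper faces after its reduction, and whether \texttt{CylindricalDecomposition} would return a clean answer in that setting is not something you can assume. You also omit the well-definedness condition $a' = a + 2a^2 b + b^2 c \ne 0$ needed for the Jordan form of $G_2$, and the affine candidates you name ($[1,1,-1,-1]$ and $[1,-1,-1,1]$) should be checked against (con1) and (con2) rather than asserted; in fact neither satisfies both, so no exception arises, but that needs a short verification rather than a claim.

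In summary: the high-level machinery you invoke (interpolate a unary via a straddled gadget, push it through the non-linearity gadget, apply Lemma~\ref{3.2} and Lemma~\ref{after-unary}, then grind through the exceptional cases) is the right machinery, but the paper's crucial contribution here is the algebraic pre-processing that collapses (con2) onto the $G_3$ track already prepared in Lemma~\ref{1-1 X=1}. Your plan replaces that insight with a harder, unjustified, and unverified $G_2$ computation.
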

\begin{proof}
Eliminating $c$ from  (\emph{con1}) and (\emph{con2})
we get 
 $a^{11}-a^9b+a^6b^4+a^5b^6-a^4b^5-a^3b^7+a^2b^9-b^{10}=0$, which, quite miraculously, can be factored as $(a^2-b) (a^9 + a^4 b^4 + a^3b^6 + b^9)=0$. If $b=a^2$, then with (\emph{con1}), we get $c=a^3$ thus the signature becomes $[1,a,a^2,a^3]$ which is degenerate. We  assume 
 $a^9 + a^4b^4 + a^3b^6+b^9=0$, then the rest  of the
 proof is essentially the same as Lemma \ref{1-1 X=1}.
%
%
%
\end{proof}

\begin{theorem} \label{g1works1}
For $a,b, c \in \mathbb{Q}$, $ab \neq 0$, 
 if $G_1$ works, then the problem $\Holant ( \, [1,a,b,c] \, | =_3)$ is
 \numP-hard unless it is degenerate or Gen-Eq or  affine, and thus in P.
\end{theorem}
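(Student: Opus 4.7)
The plan is to reduce the problem to the setting of Lemma~\ref{after-unary} -- where both pinning signatures $\Delta_0$ and $\Delta_1$ are available on the RHS -- and then conclude \numP-hardness (using $ab\ne 0$) unless $[1,a,b,c]$ is affine or degenerate. Since $G_1$ works, Lemma~\ref{y11x}(1), which has no exceptional cases, lets us interpolate $[y,1]$ on the LHS, where $y = \frac{\Delta+(1-c)}{2a}$ and $\Delta=\sqrt{(1-c)^2+4ab}$. Plugging this $[y,1]$ into the non-linearity gadget of Figure~\ref{non_li} yields the unary signature $s = [y^2+yb,\ ya+c]$ on the RHS.

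Because $G_1$ is non-singular with two distinct eigenvalues (both guaranteed by ``$G_1$ works''), Lemma~\ref{3.2} applies with $M=G_1$: whenever $s$ is not a row eigenvector of $G_1$, the family $\{s\cdot G_1^j\}_{j\ge 0}$ interpolates every unary signature on the RHS, including the pinnings, and Lemma~\ref{after-unary} finishes the argument at once. The substance of the proof therefore lies in the two remaining subcases in which $s$ is proportional to one of the row eigenvectors $[1,-y]$ or $[1,x]$ of $G_1$. Unwinding these proportionalities with the identities $ay^2=(1-c)y+b$ (the minimal polynomial of $y$) and $xy=b/a$ produces two explicit polynomial relations in $(a,b,c)$ alone; a direct check shows that one of them is the condition (con1) that is common to both Lemma~\ref{1-1 X=1} and Lemma~\ref{1-2 X=1}.

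In each exceptional subcase I would switch to an alternate binary straddled gadget -- $G_2$ or $G_3$, whose behavior is controlled by Proposition~\ref{c3} -- and rerun the non-linearity/Lemma~\ref{3.2} pipeline with the new straddled matrix. If the new gadget works and its induced unary is not a row eigenvector, the pinnings appear and we are done; otherwise the additional eigenvector constraint combines with the inherited (con1) to yield exactly the two-equation hypothesis (con1)$+$(con2) of Lemma~\ref{1-1 X=1} or of Lemma~\ref{1-2 X=1}, which then closes the case. The signatures $[1,a,a,1]$ and $[1,a,-2a-1,3a+2]$ excluded from Lemma~\ref{y11x}(2) only arise inside these exceptional branches and have already been classified by Lemmas~\ref{easy} and~\ref{p3}. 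The main obstacle is the resulting algebraic bookkeeping: intersecting the row-eigenvector equations for $G_1$ with the failure conditions for $G_2,G_3$ -- the equations in~(\ref{c1eq}), in~(\ref{c3eq}), and their singularity loci -- produces sizable polynomial systems, and one must enumerate all rational solutions with $ab\ne 0$ and verify that each either collapses to a degenerate, Gen-Eq, or affine form or reduces to an already-classified problem. This is precisely the step that forces the use of Mathematica's \texttt{CylindricalDecomposition} procedure.
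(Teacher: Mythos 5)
Your opening moves coincide with the paper's: interpolate $[y,1]$ on the LHS via Lemma~\ref{y11x}, feed it into the non-linearity gadget of Figure~\ref{non_li} to get $s=[y^2+yb,\,ya+c]$, invoke Lemma~\ref{3.2} to get the pinnings unless $s$ is proportional to $[1,-y]$ or $[1,x]$, and observe that one of the resulting proportionality conditions is (con1) $a^3-b^3-ab(1-c)=0$ (the other being $c=1+a-b$). The divergence--and the gap--is in how you propose to close the exceptional branches.

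You claim that rerunning the non-linearity/Lemma~\ref{3.2} pipeline with $G_2$ or $G_3$ would deliver the second hypothesis (con2) of Lemmas~\ref{1-1 X=1} and~\ref{1-2 X=1} as the new eigenvector constraint. That is not how (con2) arises, and the substitution does not work. In the paper, once (con1) holds, the available unary $[1,x]$ (when $(a^3+b^3)/ab>0$, so $x=b^2/a^2$) or the composition $[y^2,1]$ (when $(a^3+b^3)/ab<0$) is connected directly to $[1,a,b,c]$ to form a \emph{binary} signature on the LHS, to which Theorem~\ref{2-3} is applied; the only surviving tractable condition $X=1$ is what produces $a^3+ab+2b^3=0$ in one subcase and $(a^4b+ab^4)^2=(a^5+b^4)(b^5+a^4c)$ in the other. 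This extra equation is logically independent of the eigenvector structure of $G_2$/$G_3$. Without it, the system (con1) $\wedge$ ($G_3$ eigenvector exception) is one equation short and has a one-parameter family of rational solutions--for example $a=-b^2$, $c$ determined by (con1)--so your proposed $G_2/G_3$ pass would not collapse to the finite list $\{[1,-1,1,-1],[1,-\frac13,-\frac13,1]\}$ that Lemma~\ref{1-1 X=1} needs. You would also need to separate the two sign subcases of (con1), since they lead to two different (con2)'s and two different auxiliary lemmas, which your outline does not do.

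There is a second, smaller gap: the other exceptional proportionality gives $c=1+a-b$, not (con1), and this branch is handled in the paper by an elementary argument entirely outside the $G_2/G_3$ machinery--one obtains $[1,1]$ on the RHS (either directly as $[1,x]$ when $a+b>0$, or from two copies of $[y,1]=[-1,1]$ when $a+b<0$), connects to get the binary $[a{+}1,\,a{+}b,\,a{+}1]$, and applies Theorem~\ref{2-3}, whose only tractable case collapses to $[1,a,-1-2a,2+3a]$, already excluded at the start. Your proposal folds this branch into the same $G_2/G_3$ scheme without verifying that it would terminate there; the paper's treatment shows a cleaner route that you are missing. In short, the top of your argument matches the paper, but the engine you propose for the exceptional branches omits the Theorem~\ref{2-3} step that manufactures the needed second equation, and conflates a branch that the paper treats by a different mechanism.
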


\begin{proof}
If $[1,a,b,c]$ has the form $[1,a,a,1]$ or $[1,a,-1-2a,2+3a]$
then the dichotomy has been proved in Lemmas \ref{easy} and  \ref{p3} respectively.  We now assume the signature is not of these two forms.
By Lemma \ref{y11x}, when $G_1$ works, we can interpolate $[y,1]$ on LHS and also  $[1,x]$ on RHS.

Let us write down the Jordan Normal Form again: $$G_1 = \left(\begin{array}{ll}1 & b \\ a & c\end{array}\right)=\left(\begin{array}{cc}-x & y \\ 1 & 1\end{array}\right)\left(\begin{array}{ll}\lambda & 0 \\ 0 & \mu\end{array}\right)\left(\begin{array}{cc}-x & y \\ 1 & 1\end{array}\right)^{-1},$$ and $\lambda=\frac{-\Delta+(1+c)}{2}$, $\mu=\frac{\Delta+(1+c)}{2}$, $x=\frac{\Delta-(1-c)}{2 a}$,  $y=\frac{\Delta+(1-c)}{2 a}$, $\Delta=\sqrt{(1-c)^2 + 4ab}$.

Using $[y,1]$ and the gadget in Figure \ref{non_li}, we get  $\left[y^{2}+y b, y a+c\right]$ on the RHS.
We can interpolate  $\Delta_{0}$ and $\Delta_{1}$ on RHS unless $\left[y^{2}+y b, y a+c\right]$ is proportional to a row eigenvector of $G_1$, namely $[1,-y]$ or $[1, x]$, according to Lemma \ref{3.2}. Thus the exceptions are $y a+c = (y^{2}+y b)x$ or $y a+c = -y(y^{2}+y b)$. The first equation implies $a^{3}-b^{3}-a b(1-c)=0$ or $c=ab$. The second equation implies $c=ab$ or $c=1+a-b$. 

By assumption  $G_1$ works, so $c\ne ab$. Thus, we consider two exceptional cases.

\noindent 
\textbf{Case 1}: $a^{3}-b^{3}-a b(1-c)=0$

In this case, we have $1-c = \frac{a^3-b^3}{ab}$ and thus $\Delta = \sqrt{(1-c)^2+4ab} = |\frac{a^3+b^3}{ab}|$.  One condition ($4ab + c^2 - 2c+1 = 0$) in (\ref{c1eq}) is the same as $\Delta=0$.  Since $G_1$ works, we have $\Delta \neq 0$ and thus $a^3+b^3\neq 0$, which is equivalent to $a+b\ne 0$ when $a,b\in \mathbb{Q}$.

\begin{description}
    \item{Subcase 1: $ \frac{a^3+b^3}{ab} > 0$.}
%
 We have  $[1,x] = [1,\frac{\Delta - \left(1-c\right)}{2a}] = [1, \frac{b^2}{a^2}]$ on RHS. Connect $[1,x]$ to $[1,a,b,c]$ on LHS, we get the binary signature $[1+\frac{b^2}{a}, a+\frac{b^3}{a^2}, b+\frac{b^2c}{a^2}]$ on LHS. 
%
It is \numP-hard (and thus the problem $[1,a,b,c]$ is \numP-hard) unless one of the tractable conditions in Theorem~\ref{2-3} holds. It turns out that the only possibility is $X=1$
in Theorem \ref{2-3}, which becomes $\left(a^2-b \right)\left(a^3+ab+2b^3 \right) = 0$. 
When $a^2-b = 0$, together with $a^3-b^3=ab\left(1-c\right)$, we have $c = a^3$, and thus $[1,a,b,c]$ is degenerate. When $a^3+ab+2b^3 = 0$, together with $a^3-b^3=ab\left(1-c\right)$, by Lemma \ref{1-1 X=1}, $[1,a,b,c]$ is \numP-hard (with $a+b\ne0$ ruling out the exception).

 \item{Subcase 2: $\frac{a^3+b^3}{ab} < 0$}. We have $[y,1] = [-\frac{b^2}{a^2},1]$ on LHS. Connecting two copies of $[y,1]$ to $(=_3)$ we get $[y^2, 1] = [\frac{b^4}{a^4}, 1]$ on RHS.
 Connecting it back to LHS, we  get a binary signature $[\frac{b^4}{a^4}+a,\frac{b^4}{a^3}+b,\frac{b^5}{a^4}+c]$ on LHS. 
It is \numP-hard  unless one of the tractable conditions in Theorem~\ref{2-3} holds. It turns out that the only possibility is $X=1$
in Theorem \ref{2-3}, which becomes 
 $(a^4b+ab^4)^2 = (a^5+b^4)(b^5+a^4c)$. Together with $a^3-b^3=ab\left(1-c\right)$, by Lemma \ref{1-2 X=1}, $[1,a,b,c]$ is \numP-hard unless it is degenerate.
\end{description}

\noindent 
\textbf{Case 2}: $1+a-b-c=0$

In this case, $\Delta=|a+b|$, and since
$G_1$ works, one condition is $4ab+c^2-2c+1=0$ in (\ref{c1eq})
which says $\Delta \ne 0$,  and thus $a+b \neq 0$.

If $a+b > 0$, then $x = \frac{a+b-(1-c)}{2a} = 1$. Then we can interpolate $[1,x] = [1,1]$ on RHS (as $y = \frac{a+b+(1-c)}{2a} = \frac{b}{a}\ne -1$). Else,  $a+b < 0$, $y = \frac{-a-b + (1-c)}{2a} = \frac{-a-b + (b-a)}{2a}= -1$. We can  get $[1,1]$ on RHS by connecting two copies of $[y,1]=[-1,1]$ to $[1,0,0,1]$. Then connecting $[1,1]$ to $[1,a,b,c]$ on LHS we get a binary signature $[a+1,a+b, a+1]$ on LHS. Again we can apply
 Theorem \ref{2-3} to it, and conclude that it is
 \numP-hard. It turns out that  
 the only feasible case of tractability is $X=1$ in  Theorem \ref{2-3}, which leads to 
  $[1,a,-1-2a,2+3a]$, but we assumed $[1,a,b,c]$ is not of this form.
This proves the \numP-hardness of
$\Holant ( \, [1,a,b,c] \, | =_3)$.
%
%
%
%
\end{proof}

\section{Dichotomy for $[1,a,b,0]$}   \label{large3}

\begin{theorem}   \label{1ab0}
The problem  $[1,a,b,0]$ 
for $a,b \in \mathbb{Q}$ is
 \numP-hard
 unless it is degenerate 
 or  affine, and thus in P.
\end{theorem}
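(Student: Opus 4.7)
The plan is to split the theorem into four cases by whether the binary straddled gadget $G_1$ works and which of $a, b$ vanish, reducing each to Theorem~\ref{g1works1} when possible and falling back on the alternate gadgets $G_2, G_3$ from Figure~\ref{g2g3} otherwise.

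Case~I: $ab \ne 0$ and $G_1$ works. Specializing the conditions in (\ref{c1eq}) to $c = 0$, this regime is exactly $ab \notin \{0, -1, -\tfrac12, -\tfrac13, -\tfrac14\}$. Theorem~\ref{g1works1} applies, and a direct inspection of the degenerate, Gen-Eq, and affine ternary symmetric signatures shows that none of them has the form $[1,a,b,0]$ with $ab \ne 0$, so the problem is \numP-hard. Case~II: $ab \ne 0$ but $G_1$ fails, i.e.\ $ab \in \{-1, -\tfrac12, -\tfrac13, -\tfrac14\}$. On each of the four rational curves $ab = \text{const}$ we fall back on $G_2 = \smm{1+2a^3+b^3}{a^2+2ab^2}{a+2a^2b}{a^3+2b^3}$ and $G_3 = \smm{1+ab}{a^2}{a+b^2}{ab}$ (specializations of the matrices from Figure~\ref{g2g3}). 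Proposition~\ref{c3} isolates the finitely many $(a,b)$ on each curve where $G_2$ (resp.\ $G_3$) is singular or has a root-of-unity eigenvalue ratio; off of this set we use Lemmas~\ref{3.1.1}, \ref{3.1.2}, and \ref{3.2} together with the non-linearity gadget of Figure~\ref{non_li} to interpolate the pinning signatures $\Delta_0, \Delta_1$, and conclude \numP-hardness via Lemma~\ref{after-unary}. The residual finite list of $(a,b)$ at which $G_1, G_2, G_3$ simultaneously fail is dispatched individually by explicit gadget constructions with Mathematica-assisted polynomial-system elimination, following the pattern of Lemmas~\ref{1-1 X=1} and \ref{1-2 X=1}.

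The boundary case $b = 0,\ a \ne 0$ gives $f = [1, a, 0, 0]$. Here $G_3 = \smm{1}{a^2}{a}{0}$ and $G_2 = \smm{1+2a^3}{a^2}{a}{a^3}$ have determinants $-a^3$ and $2a^6$, both nonzero for $a \ne 0$, so Proposition~\ref{c3} excludes only finitely many rational $a$; those are settled by explicit gadgets, and all other $a$ are handled by the usual pinning interpolation followed by Lemma~\ref{after-unary}.

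The main obstacle is the boundary $a = 0$. The tractable subcases match the stated exceptions: $[1,0,0,0]$ is degenerate, and $[1,0,\pm 1,0]$ are affine. For any other rational $b$ all three standard straddled gadgets collapse: $G_1 = \smm{1}{b}{0}{0}$ and $G_3 = \smm{1}{0}{b^2}{0}$ are singular, and $G_2 = \smm{1+b^3}{0}{0}{2b^3}$ is diagonal so its $(2,1)$-entry---needed to apply Lemma~\ref{3.1.1}---vanishes. To break this degeneracy we plan to apply the holographic transformation by $H = \smm{1}{1}{1}{-1}$, carrying the problem to $\holant{[1+3b,\, 1-b,\, 1-b,\, 1+3b]}{[1,0,1,0]}$; for $b \notin \{0, \pm 1\}$ the transformed LHS has nonzero off-diagonal entries since $1 - b \ne 0$, so the interpolation tools regain traction. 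The caveat is that the RHS is now the parity signature $[1,0,1,0]$ rather than $(=_3)$, so the non-linearity gadget and the binary straddled gadgets must be adapted to the new bipartite structure before pinnings can be manufactured and Theorem~\ref{2-3} invoked. Crafting this adaptation---or, failing that, finding a direct reduction from perfect matching on 3-regular bipartite graphs that exploits the parity structure of $[1,0,b,0]$---is where the argument will require the most care.
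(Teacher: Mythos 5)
Your decomposition into the four cases is sensible and matches the paper's structure at the top level, and you correctly worked out that when $c=0$ the conditions of (\ref{c1eq}) reduce to $ab \in \{-1,-\tfrac12,-\tfrac13,-\tfrac14\}$. However, there is a genuine gap, and it is precisely at the case you flag as the hardest: $f=[1,0,b,0]$ with $b\ne 0,\pm 1$.

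Your plan there is to holographically transform by $H$ and then ``adapt'' the straddled gadgets, the non-linearity gadget, and Lemma~\ref{3.2} to the target $\holant{[1+3b,\,1-b,\,1-b,\,1+3b]}{[1,0,1,0]}$. You never carry this out, and it is not clear it can be made to go through: once the RHS is $[1,0,1,0]$ rather than $(=_3)$, essentially all of the machinery built in Section~2 (which hinges on iterating a straddled gadget between $f$ and $(=_3)$, on the specific form of the connection gadgets $f_1,\dots,f_4$ and $g_1,g_2,g_3$, and on Theorem~\ref{2-3} which is stated for $(=_3)$ on the RHS) would have to be re-derived from scratch. As you wrote the proposal this case remains open, so the proof is incomplete. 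The observation that unlocks it is simpler and does not require a holographic transformation at all: for $c=a=0$ the straddled matrix $G_1=\left[\begin{smallmatrix}1&b\\0&0\end{smallmatrix}\right]$ is already rank $1$ and factors as $\left[\begin{smallmatrix}1\\0\end{smallmatrix}\right]\cdot\left[\begin{smallmatrix}1&b\end{smallmatrix}\right]$, so placing one copy of $G_1$ hands you a fresh unary $[1,b]$ on the RHS together with a $[1,0]$ on the LHS, and the $[1,0]$'s (which come in multiples of three) can be absorbed by feeding triples into $(=_3)$. Connecting $[1,b]$ to $[1,0,b,0]$ then gives $[1,b^2,b]$ on the LHS, which is \numP-hard for $b\ne \pm 1$ by Theorem~\ref{2-3}. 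You should hunt for that kind of degenerate-factorization shortcut before reaching for a holographic transformation.

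Secondarily, your treatment of Case~II ($ab\in\{-1,-\tfrac12,-\tfrac13,-\tfrac14\}$) and of $[1,a,0,0]$ via $G_2$ and $G_3$ is plausible in outline but heavier than necessary: it forces you to re-run the non-linearity/eigenvector analysis with $[y',1]$ in place of $[y,1]$ and to re-enumerate the exceptional solutions of the resulting polynomial systems. The paper instead first applies the ternary gadget $G_4$ to produce a new signature $g$ (e.g., $g'=[3a^3+1,\,a^4+a,\,a^2,\,a^3]$ when $b=0$, or $[3a^3+4,\ldots]$ on $ab=-1$), verifies that $G_1$ works on $g$ for rational $a$ (with one explicit exception at $a=-1$ handled separately), and then invokes Theorem~\ref{g1works1} directly. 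This reuse of the already-proved $G_1$-works theorem is what keeps the case analysis finite and manageable; your route would need substantially more Mathematica elimination than you indicate.
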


\begin{proof}
If $ab\neq 0$ and $G_1$ works, then this is proved in
Theorem~\ref{g1works1} (in this case, it cannot be a Gen-Eq).
 If $a=b=0$, it is degenerate and in P. We divide the rest into three cases: \begin{enumerate}
    \item $ab\neq 0$ and $G_1$ does not work;
    \item $f = [1,a,0,0]$ with $a\neq 0$;
    \item $f= [1,0,b,0]$ with $b\neq 0$.
\end{enumerate}

\noindent 
$\bullet$ Case 1:  $ab\neq 0$ in $f = [1,a,b,0]$ and $G_1$ does not work. Since $c=0 \ne ab$, this implies that at least one equation in (\ref{c1eq})
holds. After a simple derivation, we  have the following family of signatures to consider:   $[1,a,-\frac{1}{ka}, 0]$, for $k = 1, 2, 3, 4$. 

We  use $G_4$ to produce another symmetric ternary signature in each case. If the new signature is \numP-hard, then so is the given signature. We will describe the case $[1,a,-\frac{1}{a}, 0]$ in more detail; the other three types ($ k=2, 3, 4$) are similar.

For $k=1$, the gadget $G_4$ produces $g= [3 a^3+4, a^4-a-\frac{2}{a^2}, -a^2+\frac{1}{a}+\frac{1}{a^4}, a^3+3]$. 
For $a= -1$, this is $[1,0,-1,2]$, which has the form $[1,a',-1-2a', 2+3a']$ and is \numP-hard by Lemma \ref{p3}. Below we assume $a \not = -1$. Then all entries of $g$ are nonzero.

We claim that the gadget $G_1$ works using $g$. Since $a\in \mathbb{Q}$, it can be checked that $g$ is non-degenerate since $(a^4-a-\frac{2}{a^2})( -a^2+\frac{1}{a}+\frac{1}{a^4}) = (3a^3 + 4)(a^3 + 3)$ has no solution, and that no equation in (\ref{c1eq}) has a solution applied to $g$. Hence, $G_1$ works using $g$ and we may apply Theorem \ref{g1works1} to $g$. Using the fact that $a\in\mathbb{Q}$, one can show that $g$ cannot be a Gen-Eq because it has no zero entry, nor can it be affine or degenerate. Thus $[1,a,-\frac{1}{a}, 0]$ is \numP-hard.

\noindent 
$\bullet$ Case 2: $f= [1,a,0,0]$ with $a\neq 0$. 
The gadget $G_4$ produces $g'=[3a^3+1, a^4+a, a^2, a^3]$. Since $a\in \mathbb{Q}$, $3a^3+1\neq0$. 
If $a=-1$, $g'=[-2, 0,1,-1]$ and it suffices to consider $[1,-1,0,2]$, in which case $G_3$ works where the matrix $G_3=\left[\begin{smallmatrix}1 & 1 \\ -1 & 4\end{smallmatrix}\right]$. We can interpolate $[1,x] = [1, -\frac{3 + \sqrt{5}}{2}]$ on RHS. Connect it back to $[1,-1,0,2]$ and get a binary signature $[\frac{5+\sqrt{5}}{2}, -1, -(3+\sqrt{5})]$ on LHS, which, by Theorem \ref{2-3}, is \numP-hard. Thus, $[1,-1, 0,2]$ is \numP-hard and so is $[1,a,0,0]$.

Else, $a\neq -1$.
We claim that the gadget $G_1$ works using $g'$.
The signature $g'$ is non-degenerate 
since $a\in \mathbb{Q}$  is nonzero
and thus  $(a^4+a)a^2 \not =(3a^3+1)a^3$.
Also 
 no equation in (\ref{c1eq}) has a solution applied to $g'$. Hence, $G_1$ works using $g'$ and we may apply Theorem \ref{g1works1} to $g'$. Using the fact that $a\in\mathbb{Q}$, one can show that $g'$ cannot be a Gen-Eq because it has no zero entry, nor can it be affine or degenerate. Thus $[1,a,0, 0]$ is \numP-hard.


\noindent 
$\bullet$ Case 3: 
 $f = [1,0,b,0]$ with $b\neq 0$. The gadget $G_1$ produces a binary straddled signature $G_1=\left[\begin{smallmatrix}1 & b \\ 0 & 0\end{smallmatrix}\right] = \left[\begin{smallmatrix} 1 \\ 0 \end{smallmatrix}\right]\cdot \left[\begin{smallmatrix}1 & b \end{smallmatrix}\right]$ 
which decomposes into 
a unary signature $[1,b]$ on RHS and 
a unary signature $[1, 0]$ on LHS.
This gives us a reduction
$\Holant ( f  | \{ ( =_3), [1,b]\}) \le_T
\Holant ( \, f \, | =_3)$. To see that,
notice that in any signature grid for the
problem $\Holant ( f  | \{ ( =_3), [1,b]\})$,
the number  of occurrences of $[1,b]$ is 0 mod 3, say  $3n$.
We can replace each occurrence $[1,b]$ by $G_1$, leaving
 $3n$
extra  copies  of   $[1,0]$ on the LHS. These can all be
absorbed by connecting to
$(=_3)$. 

Now, if we connect $[1,b]$ to 
 $[1,0,b,0]$ and  get a binary signature $[1,b^2, b]$ on LHS.
Thus, $\Holant ( [1,b^2, b]  | =_3) \le_T
\Holant ( f  | \{ ( =_3), [1,b]\}) $.
 The problem 
$\Holant ( [1,b^2, b]  | =_3)$ is \numP-hard except $b =\pm 1$,
by Theorem \ref{2-3}, which implies that $\Holant ( \, f \, | =_3)$ is also \numP-hard. If $b = \pm 1$,
then $f$ is affine, and $\Holant ( \, f \, | =_3)$ is in P.
%
%
%
\end{proof}

\section{Dichotomy for $[1,a,0,c]$}  \label{large4}
\begin{theorem} \label{1a0c}
The problem $[1,a,0,c]$ with $a,c\in \mathbb{Q}$ is \numP-hard unless $a=0$, in which case it is Gen-Eq and thus in P.
\end{theorem}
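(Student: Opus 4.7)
The plan parallels the structure of Theorem~\ref{1ab0}. First I would settle the easy subcases by inspection: if $a=0$, then $f=[1,0,0,c]$ is Gen-Eq (when $c\ne 0$) or the degenerate $[1,0]^{\otimes 3}$ (when $c=0$), hence in P; if $a\ne 0$ and $c=0$, then $f=[1,a,0,0]$ is neither degenerate, affine, nor Gen-Eq, so Theorem~\ref{1ab0} already yields \numP-hardness. Thus the substantive content is the subcase $a\ne 0$, $c\ne 0$, where I expect \numP-hardness.

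For this subcase, I would apply the ternary gadget $G_4$ of Figure~\ref{f4}. A direct expansion of $G_4$ on a general $[1,a,b,c]$ followed by specialization to $b=0$ yields
\[
g' \;:=\; G_4(f) \;=\; [\,3a^3+1,\ a^4+a,\ a^2,\ a^3+c^4\,].
\]
For $a\in\mathbb{Q}$ the entry $3a^3+1$ has no rational zero and $a^2\ne 0$, so $g'$ may be normalized to a rational ternary signature of the form $[1,a',b',c']$; and provided additionally $a\ne -1$, we have $a(a^3+1)\ne 0$ so $a'b'\ne 0$. Under these conditions Theorem~\ref{g1works1} is applicable to $g'$ --- giving \numP-hardness of $g'$, and hence of $f$ --- whenever $G_1$ works on $g'$ (in the sense of Proposition~\ref{g1-rou}) and $g'$ is not degenerate, Gen-Eq, affine, nor one of the two exceptional forms $[1,\alpha,\alpha,1]$ and $[1,\alpha,-1-2\alpha,2+3\alpha]$ already covered by Lemmas~\ref{easy} and~\ref{p3}.

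The value $a=-1$ must be handled separately, since there $G_4(f)=[-2,0,1,c^4-1]$ has vanishing second entry. For $c\ne \pm 1$, the eigenvalue ratio $1/c$ of the $G_1$-matrix on $f$ itself is rational and different from $\pm 1$, hence not a root of unity; so $G_1$ works on $f$ and Lemma~\ref{y11x} supplies $[1,x]$ on the RHS. Direct computation of $x$ (depending on the sign of $1-c$) followed by connecting back to $f$ produces a binary signature of the form $[1,a,0]$ (when $c<1$) or $[c,a,c(c-1)/a]$ (when $c>1$); Theorem~\ref{2-3} then gives \numP-hardness after a short rational-root check showing that the potentially tractable branches $X=\pm 1$ admit no rational solution. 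For $c=\pm 1$, $g'=[-2,0,1,0]$ normalizes to $[1,0,-\tfrac{1}{2},0]$, which is \numP-hard by Case~3 of Theorem~\ref{1ab0} since $-\tfrac{1}{2}\notin\{0,\pm 1\}$.

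The main obstacle is the algebraic case analysis in the generic branch $a\ne 0,-1$: whenever one of the five conditions of \eqref{c1eq} applied to the normalized $g'$ holds, or the degeneracy condition $c'=a'b'$ (i.e.\ $(a^3+c^4)(3a^3+1)=(a^4+a)a^2$) holds, or $g'$ falls into one of the tractable or exceptional forms, Theorem~\ref{g1works1} does not directly apply. Each such condition is a polynomial equation in $(a,c)\in\mathbb{Q}^2$; following the paper's methodology I would use Mathematica\texttrademark{} \texttt{CylindricalDecomposition} to enumerate the rational solutions of each system, and dispatch each resulting special signature $f$ by a secondary gadget construction ($G_2$, $G_3$, or an iterate of $G_1$ feeding Theorem~\ref{2-3}) or by direct appeal to Lemma~\ref{easy}, Lemma~\ref{p3}, or Theorem~\ref{1ab0}. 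As the introduction acknowledges, the symbolic-computation step in these residual cases is the one that is not comfortable to do by hand.
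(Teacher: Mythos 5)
Your computation of $G_4([1,a,0,c]) = [3a^3+1,\, a^4+a,\, a^2,\, a^3+c^4]$ is correct (it specializes properly to the paper's $c=\pm 1$ and $c=0$ instances), and your treatment of the easy subcases and of $a=-1$ is sound. However, the generic branch $a\ne 0,-1$, $c\ne 0$ has a genuine gap that the paper's route avoids.

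Your plan is to feed $g' = G_4(f)$ into Theorem~\ref{g1works1}, which requires that $G_1$ \emph{works} on $g'$, i.e., that none of the conditions of~(\ref{c1eq}) holds for the normalized $g'$. Each such condition (e.g., $c'+1 = 0$, where $c' = (a^3+c^4)/(3a^3+1)$, becomes $4a^3+c^4+1=0$) is a \emph{single} polynomial equation in the two free variables $(a,c)$, hence a plane curve, not a finite set. The paper's use of \texttt{CylindricalDecomposition} only proves emptiness of real solution sets; it does not produce a finite enumeration of rational points on a nonempty curve, and those curves do have real branches. So the step ``use Mathematica to enumerate the rational solutions and dispatch each'' is not a finite procedure in your setting. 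Without a uniform hardness argument covering these curves, the proof is incomplete.

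The paper sidesteps this entirely by noting that for $c \ne 0,\pm 1$ and any rational $a\ne 0$ the matrix $G_1 = \left[\begin{smallmatrix}1 & 0 \\ a & c\end{smallmatrix}\right]$ is triangular with eigenvalues $\{1,c\}$ and ratio $c$ or $1/c$, which is rational and $\ne \pm 1$, hence never a root of unity --- so $G_1$ \emph{always} works on $f$ itself. You observe this for $a=-1$ but do not notice it holds uniformly in $a$, so the $G_4$ detour (and its two-parameter exceptional locus) is unnecessary for $c\ne 0,\pm 1$; $G_4$ is only needed for the two residual values $c=\pm 1$, where a one-parameter Mathematica check in $a$ alone suffices, as in Theorem~\ref{1ab0}. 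There is also a secondary fragility in your $c>1$ reduction: connecting $[1,x] = [1,(c-1)/a]$ directly gives the $(a,c)$-dependent binary $[c,\,a,\,c(c-1)/a]$, whose Theorem~\ref{2-3} tractability conditions trace out further curves. The paper instead first bootstraps (via $[y,1]=[0,1]$, the non-linearity gadget producing $[0,c]$, and Lemma~\ref{3.2}) to $[1,0]$ on the RHS, so that the resulting binary is the parameter-robust $[1,a,0]$, which is \numP-hard by Theorem~\ref{2-3} for every $a\ne 0$ without any auxiliary rational-root analysis. Adopting both of these observations would convert your outline into the paper's clean argument.
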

\begin{proof}
When $a=0$, it is Gen-Eq and so is in P. When $a\neq 0$, if $c=0$, it is \numP-hard by Theorem \ref{1ab0}. In the following we discuss $[1,a,0,c]$ with $ac\neq 0$.

 If $c=\pm1$, the signature is $[1,a,0,1]$ or $[1,a,0,-1]$. We use $G_4$ to produce a ternary signature $g=[3a^3+1, a^4+a,, a^2, a^3 + 1]$ (both mapped to the same signature, surprisingly). If $a=-1$, it is $[1,0,-\frac{1}{2}, 0]$ after normalization, which by Theorem \ref{1ab0} is \numP-hard and so is the given signature $[1,-1,0,1]$. If $a\neq -1$, then $g$ has no zero entry. We then claim that the gadget $G_1$ works using $g$. It can be checked that $g$ is non-degenerate since $(a^4+a)a^2=(3a^3+1)(a^3+1)$ has no solution, and that no equation in (\ref{c1eq}) has a solution applied to $g$. Hence, $G_1$ works using $g$ and we may apply Theorem \ref{g1works1} to $g$. Using the fact that $a\in\mathbb{Q}$, one can show that $g$ cannot be a Gen-Eq because it has no zero entry, nor can it be affine or degenerate. Thus $[1,a,0,\pm 1]$ are both \numP-hard.


Now assume $c \ne 0, \pm 1$. We claim that the gadget $G_1$ works. It can be checked that for the non-degenerate matrix $G_1=\smm{1}{0}{a}{c}$, $\Delta = |1-c|$, $\rlm \in \{c, \frac{1}{c}\}$  is not a root of unity. 
Next we claim that we can obtain $[1,0]$ on RHS.
If $c<1$ by Lemma~\ref{y11x}
we can interpolate $[1,x]=[1,0]$  on RHS 
with two exceptions to which we already give a dichotomy (see
the Remark after Definition \ref{d1}). 
If $c>1$,  we can interpolate $[y,1] = [0,1]$ on LHS and so the gadget in Figure \ref{non_li} produces  $[0,c]$ on RHS, which is not
proportional to the 
 row eigenvectors $[1,-y] = [1,0]$ and 
 $[1,x] = [1,\frac{c-1}{a}]$ of  $G_1$. 
 By Lemma \ref{3.2}, we can interpolate any unary gadget on RHS, including $[1,0]$. Thus we can always get $[1,0]$ on RHS. Connect $[1,0]$ to $[1,a,0,c]$  and we will get a binary signature $[1,a,0]$ on LHS, which is \numP-hard by Theorem \ref{2-3}. Therefore $[1,a,0,c]$ is \numP-hard when $c\ne 0, \pm 1$. 

\end{proof}

\section{Dichotomy when $abc\ne0$}

We need four lemmas to handle some special cases.

\begin{lemma} \label{p1}
The problem $[1,-b^2, b, -b^3]$ with $b\in \mathbb{Q}$ is \numP-hard unless $b=0, \pm 1$, which is in P.
\end{lemma}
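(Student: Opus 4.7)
The plan is to handle the three obvious tractable values of $b$ separately, and for every other rational $b$ to produce the pinning signatures $\Delta_0,\Delta_1$ on the RHS and conclude via Lemma~\ref{after-unary}. The tractable cases are immediate: $b=0$ and $b=1$ give $[1,0]^{\otimes 3}$ and $[1,-1]^{\otimes 3}$ respectively, both degenerate; $b=-1$ gives $[1,-1,-1,1]$, which is affine.

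For $b\in\mathbb{Q}\setminus\{0,\pm 1\}$, I would exploit two algebraic coincidences that the specific family $f=[1,-b^2,b,-b^3]$ enjoys. First, $G_1$ has matrix $\smm{1}{b}{-b^2}{-b^3}$ of rank one, decomposing as the outer product $[1,-b^2]^T\cdot[1,b]$. Replaying the Case~3 absorption argument from Theorem~\ref{1ab0}, each copy of $[1,b]$ on the RHS may be replaced by $G_1$ at the cost of $3n$ extra copies of $[1,-b^2]$ on the LHS, which are absorbed three at a time through an $=_3$ vertex for a nonzero global factor $1-b^6\neq 0$ (using $b\neq\pm 1$). This yields the reduction $\Holant(f\mid\{=_3,[1,b]\})\le_T\Holant(f\mid =_3)$, making $[1,b]$ available on the RHS. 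Second, substituting $a\mapsto-b^2,\,b\mapsto b,\,c\mapsto-b^3$ in the $G_3$ matrix $\smm{1+ab}{a^2+bc}{a+b^2}{ab+c^2}$ collapses it to $(1-b^3)\smm{1}{0}{0}{-b^3}$. So, up to the nonzero scalar $1-b^3$, $G_3$ realizes the diagonal straddled matrix $M=\smm{1}{0}{0}{-b^3}$, which is non-singular, has distinct eigenvalues, and has eigenvalue ratio $-b^3$; the only rational roots of unity are $\pm 1$, which are excluded by our standing assumption.

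The conclusion follows by Lemma~\ref{3.2}. The unary $s=[1,b]$ on the RHS has both entries nonzero, hence is not a row eigenvector of the diagonal $M$ (whose left eigenvectors are $[1,0]$ and $[0,1]$), so $\{s\cdot M^j\}_{j\ge 0}=\{[1,\,b(-b^3)^j]\}$ interpolates every unary on the RHS, in particular $\Delta_0$ and $\Delta_1$. A direct inspection shows that $f$ is not degenerate, not Gen-Eq, and not affine for any $b\in\mathbb{Q}\setminus\{0,\pm 1\}$, while $f_1 f_2=-b^3\neq 0$, so Lemma~\ref{after-unary} delivers $\numP$-hardness.

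The main subtlety I expect is that one cannot appeal to the generic ``$G_3$ works'' consequence stated after Definition~\ref{d1}: that statement is framed through the Jordan form in which $a'=a+b^2$ sits in a denominator, and here $a+b^2=0$ identically. Instead one must invoke Lemma~\ref{3.2} directly on the diagonal $M$, with the starting unary supplied by the independent $G_1$-absorption route. Reassuringly, all the small side conditions that appear along the way ($1-b^6\neq 0$ for the absorption, $-b^3$ not a root of unity for interpolation, and $s$ not an eigenvector of $M$) reduce to the single standing assumption $b\in\mathbb{Q}\setminus\{0,\pm 1\}$.
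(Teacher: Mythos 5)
Your proof is correct, but it takes a genuinely different route from the paper's. The paper goes in the opposite direction on the rank-one decomposition $G_1=\smmv{1}{-b^2}\cdot\smmh{1}{b}$: it extracts the \emph{column} factor $[1,-b^2]$ on the LHS (absorbing the leftover $[1,b]$'s in triples through $f$, with the same nonzero factor $1-b^6$), then combines two copies of $[1,-b^2]$ through $(=_3)$ to get $[1,b^4]$ on the RHS, connects that to $f$ to form the binary signature $[1-b^6,\,-b^2+b^5,\,b-b^7]$, and finishes by a direct application of Theorem~\ref{2-3}. You instead extract the \emph{row} factor $[1,b]$ on the RHS (absorbing $[1,-b^2]$ through $(=_3)$), notice the special collapse of $G_3$ to a diagonal straddled matrix $(1-b^3)\smm{1}{0}{0}{-b^3}$ when $a+b^2=0$, and then run Lemma~\ref{3.2} with $s=[1,b]$ to interpolate $\Delta_0,\Delta_1$ and invoke Lemma~\ref{after-unary}. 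Both are sound; all your side conditions check out ($1-b^3\ne 0$, eigenvalue ratio $-b^3\notin\{\pm 1\}$ over $\mathbb{Q}$, $[1,b]$ not a row eigenvector of the diagonal matrix, $f_1f_2\ne 0$). Your observation that the Jordan-form-based ``$G_3$ works'' machinery breaks down precisely because $a'=a+b^2=0$, and that one must apply Lemma~\ref{3.2} to the diagonal matrix directly, is correct and well spotted. The trade-off is that the paper's argument is shorter and entirely elementary (one gadget chain plus Theorem~\ref{2-3}), whereas yours uses more of the paper's general interpolation infrastructure but exposes a structural coincidence ($G_3$ diagonalizing identically on this family) that is conceptually illuminating.
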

\begin{proof}
If $b=0, \pm 1$, it is degenerate or affine. Now assume $b\neq 0, \pm 1$. $G_1 = \smm{1}{b}{-b^2}{b^3} = \smmv{1}{-b^2}\cdot\smmh{1}{b}$. Then we can get $[1,-b^2]$ on the  LHS similar to the proof in Case 3 of Theorem~\ref{1ab0}.
Note that connecting three copies of $[1,b]$ with $[1,-b^2,b,-b^3]$ on LHS produces a global factor 
$1-b^6 \ne 0$.  Connect $[1,-b^2]$ twice to $[1,0,0,1]$ on RHS, and we get $[1,b^4]$ on RHS. Connect $[1,b^4]$ back to $[1,-b^2, b, -b^3]$ on LHS, and we get a binary signature $g=[1-b^6, -b^2 + b^5, b-b^7]$, which by Theorem \ref{2-3} is \numP-hard, and so is $[1,-b^2, b, -b^3]$. 
\end{proof}

\begin{lemma} \label{p2}
The problem $[1, a, -\frac{1}{a}, -1]$ with $a\in \mathbb{Q}$, $a\neq 0$ is \numP-hard unless $a=\pm 1$, in which case it is in P.
\end{lemma}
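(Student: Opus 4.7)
The plan is to reduce $\Holant([1,a,-1/a,-1]\mid =_3)$ to a binary Holant on the LHS and then invoke Theorem~\ref{2-3}. First I dispose of the tractable cases: $a=1$ gives the affine signature $[1,1,-1,-1]$, and $a=-1$ gives $[1,-1]^{\otimes 3}$, which is degenerate. So I assume $a\in\mathbb{Q}$ with $a\ne 0,\pm 1$ and aim for \numP-hardness.

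Next I pick a straddled gadget. The usual choice $G_1=\smm{1}{-1/a}{a}{-1}$ is useless because $c=ab$ makes it rank~$1$. A short calculation also eliminates $G_3$: it becomes the anti-diagonal matrix $\smm{0}{a^2+1/a}{a+1/a^2}{0}$ (since $1+ab=ab+c^2=0$), so its two eigenvalues are $\pm\sqrt{(a^2+1/a)(a+1/a^2)}$ with ratio $-1$, a root of unity. I therefore fall back on $G_2$. The entries of $G_2$ specialise to $w=1+2a^3-1/a^3$, $a'=-a-1/a^2$, $b'=a^2+1/a$, $c'=a^3-2/a^3-1$, and a direct computation gives trace $w+c'=3(a^6-1)/a^3$ and determinant $wc'-a'b'=2(a^6-1)^2/a^6$, so the two eigenvalues work out to $(a^6-1)/a^3$ and $2(a^6-1)/a^3$. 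For $a\ne\pm 1$ these are nonzero and their ratio is the constant $2$, not a root of unity, so $G_2$ works. Continuing with the Jordan-form computation I obtain $y'=-a^2$ and $x'=1/a$, and since $y'\ne -1$ the $G_2$-version of Lemma~\ref{y11x} (stated in the Remark after Definition~\ref{d1}) lets me interpolate $[1,1/a]$ on the RHS.

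Attaching this $[1,1/a]$ to one edge of $f$ yields the binary symmetric signature $[2,\,(a^3-1)/a^2,\,-2/a]$ on the LHS, whose middle entry is nonzero under our hypotheses. Normalising the middle entry to~$1$ brings it into the $[a_\star,1,b_\star]$ form required by Theorem~\ref{2-3}, with $X=a_\star b_\star=-4a^3/(a^3-1)^2$ and $Z=((a_\star^3+b_\star^3)/2)^2=16a^6/(a^3-1)^4$. I then rule out the four tractable branches one by one: $X=1$ reduces to $(a^3+1)^2=0$, forcing $a=-1$; any case with $X=0$ or $Z=0$ forces $a=0$; and $X=-1$ reduces to $a^6-6a^3+1=0$, whose solutions $a^3=3\pm 2\sqrt{2}$ are irrational. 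All of these are excluded by hypothesis, so Theorem~\ref{2-3} gives \numP-hardness of the binary problem, and hence of $\Holant(f\mid =_3)$. The main obstacle is the simultaneous failure of $G_1$ and $G_3$, which forces the switch to $G_2$; the saving grace is that the eigenvalue ratio of $G_2$ collapses to the constant~$2$ independently of $a$, which keeps the interpolation step and the subsequent binary reduction uniform and free of further case splits.
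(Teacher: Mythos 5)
Your proof follows the same structural route as the paper's: both treat the tractable cases $a=\pm 1$, observe that $G_1$ is degenerate (since $c=ab$), switch to $G_2$, compute that the eigenvalue ratio of $G_2$ collapses to the constant $2$, interpolate a unary $[1,x']$ on the RHS, attach it to $f$, and finish with Theorem~\ref{2-3}. Your computation of the $G_2$ entries actually follows the paper's stated general formula for $G_2$ more faithfully than the paper's own proof does (the matrix printed in the paper's Lemma~\ref{p2} appears to have $a'$ and $b'$ swapped relative to the definition preceding Equation~\ref{eqn:G2-JNF}), your preliminary check that $G_3$ is useless (anti-diagonal, ratio $-1$) is a nice addition, and your explicit verification of all four tractable branches of Theorem~\ref{2-3} is more detailed than the paper's bare assertion of hardness.

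There is, however, one genuine gap. You assert $x'=1/a$, $y'=-a^2$, but the formulae $x'=\frac{\Delta'-(w-c')}{2a'}$, $y'=\frac{\Delta'+(w-c')}{2a'}$ with $\Delta'=\sqrt{(w-c')^2+4a'b'}\ge 0$ give this only when $a^3-1/a^3>0$, i.e.\ $|a|>1$. When $|a|<1$ one has $\Delta'=1/a^3-a^3$ and the roles swap: $x'=a^2$, $y'=-1/a$. In that regime attaching $[1,a^2]$ to $[1,a,-1/a,-1]$ yields $[1+a^3,\,0,\,-(1+a^3)/a]$, whose middle entry vanishes, so it cannot be normalised to the form $[a_\star,1,b_\star]$ required by Theorem~\ref{2-3}, and the argument stalls. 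The gap is easy to close: flipping $0\leftrightarrow 1$ sends $[1,a,-1/a,-1]$ to $-[1,1/a,-a,-1]$, i.e.\ replaces $a$ by $1/a$, so one may assume $|a|>1$ without loss of generality (and $|a|\ne 1$ is already guaranteed by $a\ne\pm 1$); alternatively, one can note that the interpolation in Lemma~\ref{3.1.1} actually yields both rank-one projectors, hence $[1,-y']$ is also available on the RHS, recovering $[1,1/a]$ in the $|a|<1$ case. You should add one of these observations to make the proof watertight.
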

\begin{proof}
If $a=\pm 1$, $[1,1,-1-1]$ is   affine and $[1,-1,1,-1]$ is degenerate, both of which are in P. Now we assume $a\neq \pm 1$ (so the matrix $\smm{a^{-2}}{a}{1}{1}$ is invertible). We use the binary straddled gadget $G_2$ and write down its Jordan Normal Form as $$G_2 =\left(\begin{array}{cc} 2a^3 + 1 - a^{-3}  & -a-a^{-2} \\  a^2 + a^{-1}  &   a^3 -1-2a^{-3} \end{array}\right) =\left(\begin{array}{cc}a^{-2} & a \\ 1 & 1\end{array}\right)\left(\begin{array}{ll} a^3 - a^{-3}  & 0 \\ 0 &  2a^3 - 2a^{-3}\end{array}\right)\left(\begin{array}{cc}{a^{-2}} & a \\ 1 & 1\end{array}\right)^{-1}$$ The matrix is non-degenerate and the ratio of its two eigenvalues are $1/2$, so gadget $G_2$ \emph{works}. Since here $y= a \ne \pm1$, we can  interpolate $[1,x] = [1,-{a^{-2}}]$ on RHS. Now connect $[1,x]$  to $[1, a, -\frac{1}{a}, -1]$ on LHS and we can get a binary signature $[1-\frac{1}{a}, a+\frac{1}{a^3}, -\frac{1}{a} + \frac{1}{a^2}]$, which by Theorem \ref{2-3} is \numP-hard, and so is $[1, a, -\frac{1}{a}, -1]$.
\end{proof}

Before presenting our next lemma, we introduce a ternary gadget $G_{aux}$ in Figure \ref{fa}. Its signature is $[1 + 2a^3 + b^3,a + 2a^2b + b^2c,a^2 + 2ab^2 + bc^2, a^3 + 2b^3 + c^3]$. \begin{figure}[h!] 
    \centering
    \begin{tikzpicture}
        \draw[very thick] (1.5,3.5)--(3,4.3);
        \draw[very thick] (1.5,3.5)--(3,3);
        \draw[very thick] (1.5,3.5)--(3,1.7);
        \draw[very thick] (1.5,2.5)--(3,4.3);
        \draw[very thick] (1.5,2.5)--(3,3);
        \draw[very thick] (1.5,2.5)--(3,1.7);
        \draw[very thick] (3,4.3)--(4, 4.3);
        \draw[very thick] (3,3)--(4, 3);
        \draw[very thick] (3,1.7)--(4,1.7);
        \filldraw[fill=green] (1.5,3.5) circle (0.1);  
        \filldraw[fill=green] (1.5,2.5) circle (0.1);  
        \filldraw[fill=blue] (2.9,4.2) rectangle (3.1,4.4); 
        \filldraw[fill=blue] (2.9,2.9) rectangle (3.1,3.1); 
        \filldraw[fill=blue] (2.9,1.6) rectangle (3.1,1.8); 
    \end{tikzpicture}
  \caption{$G_{aux}$}
  \label{fa}
\end{figure}

\begin{lemma} \label{c=ab}
The problem $[1,a,b,ab]$ with $a,b\in \mathbb{Q}$ and $a,b\ne 0$ is \numP-hard unless it is degenerate or  affine, which is in P.
\end{lemma}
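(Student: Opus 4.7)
The plan is to exploit the rank-$1$ degeneration of the straddled gadget $G_1$ when $c = ab$: we have $G_1 = \smm{1}{b}{a}{ab} = \smmv{1}{a}\smmh{1}{b}$, which is rank-$1$, so $G_1$ does not \emph{work} in the sense of Definition~\ref{d1} and our earlier interpolation machinery through $G_1$ is inapplicable directly. However, the tensor decomposition means that $G_1$ is equivalent (as a gadget) to attaching an independent $[1,a]$ on the LHS and an independent $[1,b]$ on the RHS. Repeating the absorption argument used in Case~$3$ of the proof of Theorem~\ref{1ab0}, I would replace each desired $[1,b]$ on the RHS by a copy of $G_1$ and absorb the by-product triples of $[1,a]$ on the LHS into $(=_3)$ gadgets at a global factor of $1+a^3$ per triple. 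This makes $[1,b]$ available on the RHS so long as $1+a^3 \ne 0$; the residual case $a = -1$ yields $f = [1,-1,b,-b]$, which I would handle separately using the straddled gadget $G_3$ or by direct case analysis on a small finite set of $b$.

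My main move is then to attach $[1,b]$ on the RHS to $f = [1,a,b,ab]$, producing the binary signature $[1+ab,\ a+b^{2},\ b(1+ab)]$ on the LHS, and apply Theorem~\ref{2-3}. The two subcases where normalization by the middle entry fails correspond to already-handled families: if $a + b^{2} = 0$ then $f = [1,-b^{2},b,-b^{3}]$, \numP-hard by Lemma~\ref{p1} except for degenerate/affine $b \in \{0,\pm 1\}$; and if $1+ab = 0$ then $f = [1,a,-1/a,-1]$, \numP-hard by Lemma~\ref{p2} except for degenerate/affine $a \in \{\pm 1\}$. In the generic case $(1+ab)(a+b^{2}) \ne 0$, with $X = b(1+ab)^{2}/(a+b^{2})^{2}$, the condition $X = 1$ forces $b > 0$ (the right side is a rational square, so the left side must be nonnegative), hence $b = t^{2}$ for some rational $t$, and the equation rearranges into $(a \mp t)(1 \mp t^{3}) = 0$. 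This gives either $b = a^{2}$ (so $f = (1,a)^{\otimes 3}$ is degenerate) or the residual family $b = 1$, i.e., $f = [1,a,1,a]$.

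The residual family $[1,a,1,a]$ is closed out using the companion unary $[1,a]$ on the LHS that we already extracted: attaching two copies to an $(=_{3})$ produces $[1,a^{2}]$ on the RHS, and connecting this to $[1,a,1,a]$ gives the binary $[1+a^{3},\ a(1+a),\ 1+a^{3}]$, whose $X$-value $(1 - a + a^{2})^{2}/a^{2}$ equals $1$ only when $a = 1$ and is never $0$ or $-1$ over $\mathbb{R}$; the only escapes are $a \in \{0,\pm 1\}$, all degenerate, affine, or Gen-Eq. The remaining subcases $X = -1$ with $Z \in \{0,-1\}$ from the primary Theorem~\ref{2-3} application produce polynomial systems in $a,b$ over $\mathbb{Q}$ that do not factor cleanly by hand; I would verify with Mathematica's \texttt{CylindricalDecomposition}, consistent with this paper's practice elsewhere, that every rational solution falls into an already-handled family. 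The main obstacle I anticipate is exactly these $X = -1$ branches and the edge case $a = -1$ where the absorption of the by-product $[1,a]$ fails, both requiring either symbolic verification or a pivot to the alternative straddled gadget $G_2$ or $G_3$ to extract a usable RHS unary.
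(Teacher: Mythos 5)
Your proposal follows essentially the same route as the paper: decompose the degenerate $G_1$ into $[1,a]$ on LHS and $[1,b]$ on RHS, absorb the leftover $[1,a]$'s to make $[1,b]$ available, connect $[1,b]$ to $f$ to get the binary $[1+ab,\,a+b^2,\,b(1+ab)]$, and then run Theorem~\ref{2-3}, pivoting to Lemmas~\ref{p1} and \ref{p2} in the special subcases and handling $[1,a,1,a]$ separately.

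There is, however, one genuine gap in how you handle the absorption failure at $a=-1$. You propose to absorb the surplus $[1,a]$'s by the single gadget with factor $1+a^3$, and when that vanishes (at $a=-1$) to ``handle separately using $G_3$ or by direct case analysis on a small finite set of $b$.'' But with $a=-1$ and $c=ab$, the residual signature is $[1,-1,b,-b]$ for \emph{every} rational $b\ne 0$, which is an infinite family, not a small finite set; and of these only $b=\pm 1$ are tractable. So the ``finite case analysis'' fallback does not work, and the $G_3$ fallback is asserted but not carried out. The paper avoids this by exhibiting a \emph{second} absorption gadget (Figure~\ref{2g41a}) with factor $1+3a^3+3a^2b^2+a^5b^2$, which at $a=-1$ equals $2(b^2-1)$; thus at least one of the two factors is nonzero unless $(a,b)=(-1,\pm1)$, i.e.\ exactly when $f$ is $[1,-1,1,-1]$ (degenerate) or $[1,-1,-1,1]$ (affine). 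This makes the absorption step close cleanly without any residual infinite family. Two smaller remarks: (i) you classify $1+ab=0$ as a ``normalization fails'' case, but the middle entry of the binary is $a+b^2$, not $1+ab$; the case $1+ab=0$ is actually the $X=Z=0$ branch of Theorem~\ref{2-3} (your conclusion via Lemma~\ref{p2} is still correct); (ii) in the $X=1$ branch the paper factors $b(1+ab)^2-(a+b^2)^2=(a^2-b)(b^3-1)$ directly, which is a bit cleaner than introducing $b=t^2$, though your parametrization does reach the same two residual families $b=a^2$ and $b=1$.
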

\begin{proof}
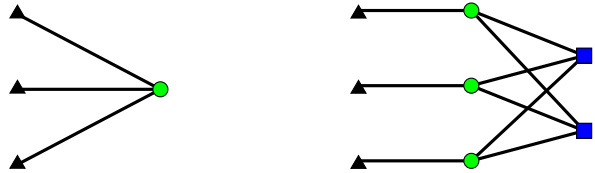
\begin{figure}[h!] 
    \centering
    \begin{subfigure}[b]{0.3\textwidth}
        \centering
        \begin{tikzpicture}  
            \draw[very thick] (0.3,5.05) -- (2.2,4.05);
            \draw[very thick] (0.3,4.05) -- (2.2,4.05);
            \draw[very thick] (0.3,3.05) -- (2.2,4.05);
            \filldraw[fill=black] (0.3,5.17)--(0.2,5)--(0.4,5)--cycle;
            \filldraw[fill=black] (0.3,4.17)--(0.2,4)--(0.4,4)--cycle;
            \filldraw[fill=black] (0.3,3.17)--(0.2,3)--(0.4,3)--cycle;
            \filldraw[fill=green] (2.2, 4.05)circle(0.1);
        \end{tikzpicture}
        \label{i1}
    \end{subfigure}
    \begin{subfigure}[b]{0.3\textwidth}
        \centering
        \begin{tikzpicture}  
            \draw[very thick] (0.3,5.2)--(1.8,5.2);
            \draw[very thick] (0.3,4.2)--(1.8,4.2);
            \draw[very thick] (0.3,3.2)--(1.8,3.2);
            \draw[very thick] (1.8,5.2)--(3.3,4.6);
            \draw[very thick] (1.8,4.2)--(3.3,4.6);
            \draw[very thick] (1.8,3.2)--(3.3,4.6);
            \draw[very thick] (1.8,5.2)--(3.3,3.6);
            \draw[very thick] (1.8,4.2)--(3.3,3.6);
            \draw[very thick] (1.8,3.2)--(3.3,3.6);
            \filldraw[fill=blue] (3.2,4.5)rectangle(3.4,4.7);
            \filldraw[fill=blue] (3.2,3.5)rectangle(3.4,3.7);
            \filldraw[fill=black] (0.3,5.27)--(0.2,5.1)--(0.4,5.1)--cycle;
            \filldraw[fill=black] (0.3,4.27)--(0.2,4.1)--(0.4,4.1)--cycle;
            \filldraw[fill=black] (0.3,3.27)--(0.2,3.1)--(0.4,3.1)--cycle;
            \filldraw[fill=green] (1.8,5.2)circle(0.1);
            \filldraw[fill=green] (1.8,4.2)circle(0.1);
            \filldraw[fill=green] (1.8,3.2)circle(0.1);
        \end{tikzpicture}
        \label{i2}
    \end{subfigure}
    \caption{Two gadgets where each triangle represents the unary gadget $[1,a]$}
    \label{2g41a}
\end{figure}

Using gadget $G_1$, we have a degenerate matrix   $G_1 = \smm{1}{b}{a}{ab} =  \smmv{1}{a}\cdot \smmh{1}{b}$. We  get $[1,b]$ on RHS if $[1,a]$ can appropriately form some nonzero glabal factor. Figure \ref{2g41a} indicates two different ways of ``absorbing"  $[1,a]$ on LHS. The factors they provide are $1+a^3$ and $1+3a^3 + 3a^2b^2+a^5b^2$ respectively. It is easy to see that at least one of them is nonzero except $a=-1$ and $b= \pm 1$, i.e., $[1,-1,1,-1]$ which is degenerate or $[1,-1,-1,1]$ which is affine. Now assume below $[1,a,b,ab]$
is not these two, then we can interpolate $[1,b]$ on RHS. Connect $[1,b]$ back to $[1,a,b,ab]$ on LHS and we get the binary signature $g=[1+ab, a+b^2, b+bc]$. If $a + b^2 = 0$, since $c = ab$, the given signature is $[1, -b^2, b, -b^3]$ which, according to Lemma \ref{p1}, is \numP-hard (as
we assumed just now, it is not $[1,-1,\pm 1,\mp 1]$). Now we assume $a+b^2 \neq 0$. Applying Theorem \ref{2-3} to $g$, it is \numP-hard (and so is the given signature $[1,a,b,ab]$) unless \begin{enumerate}
    \item $X=1$. $(1+ab)(b+bc) = (a+b^2)^2$ implies $\left(a^2-b \right)\left(b^3-1 \right)=0$. If $a^2-b=0$, the given signature is $[1,a,a^2,a^3]$ and  is degenerate. If $b^3-1=0$, since $b\in\mathbb{Q}$, we have $b=1$ and  $a
    \ne  -b^2 = -1$, and the given signature is $[1,a,1,a]$. We apply $G_{aux}$ in Figure \ref{fa} using $g$ to produce a ternary signature $h=[2 + 2a^3, 2a + 2a^2, 2a + 2a^2, 2 + 2a^3]$ on LHS, which has the form $[1,a',a',1]$ after normalization, as $2 + 2 a^3 \ne 0$ for $a \in \mathbb{Q}$ and $a \ne -1$. So, $h$ is \numP-hard by Lemma \ref{easy} unless $a' = 0, \pm 1$, which implies $a=1$ (as $a \ne 0, -1$) in which case the given signature is $[1,1,1,1]$ and thus in P. Thus, $[1,a,1,a]$ ($a\neq 0$) is  \numP-hard unless $a=\pm 1$.
    \item $X=Z=0$. The given signature is $[1,a,-\frac{1}{a}, -1]$ which, by Lemma \ref{p2}, is \numP-hard unless $a=\pm1$.
    \item $X=-1$, $Z=0$. This turns out to be impossible.
    \item $X=-1$, $Z=-1$. This is also impossible.
    \end{enumerate}
Note that since $a,b\ne 0$, $[1,a,b,ab]$ cannot be Gen-Eq. The lemma is  proved.
\end{proof}


\begin{lemma} \label{1a-a-1}
The problem $[1, a, -a, -1]$ with $a\in \mathbb{Q}$ is \numP-hard unless $a=\pm 1, 0$, which is in P.
\end{lemma}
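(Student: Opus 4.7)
The three edge values $a=0,\pm 1$ are handled by inspection: $[1,0,0,-1]$ and $[1,1,-1,-1]$ are affine, while $[1,-1,1,-1]=[1,-1]^{\otimes 3}$ is degenerate, so all three lie in $\mathrm{P}$. For $a\in\mathbb{Q}\setminus\{0,\pm 1\}$ my plan is to use the binary straddled gadget $G_3$ to produce a unary signature on the RHS, and then split into two cases according to the sign of $a(a+1)$. Substituting $b=-a$, $c=-1$ into the $G_3$ formula yields the symmetric matrix $\left[\begin{smallmatrix} 1-a^2 & a(a+1) \\ a(a+1) & 1-a^2 \end{smallmatrix}\right]$, with eigenvalues $1+a$ and $(1+a)(1-2a)$ and ratio $1/(1-2a)$; this ratio is rational and equals $\pm 1$ only at $a\in\{0,1\}$, and the determinant $(1+a)^2(1-2a)$ is nonzero except at $a\in\{-1,1/2\}$. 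Thus $G_3$ \emph{works} for all $a\ne 0,\pm 1,1/2$, and at the single exceptional value $a=1/2$ it becomes $(3/4)[1,1]^T[1,1]$, which is rank one and already factors as $[1,1]$ on each side. Since our $f$ is never of the form $[1,a,a,1]$ nor $[1,a,-2a-1,3a+2]$ for $a\ne\pm 1$, and since the absorption factor $g_2=-2(1+a)$ of Figure~\ref{h2} takes care of the $y'=-1$ subcase, Lemmas~\ref{3.1.1} and~\ref{3.1.2} produce $[1,\operatorname{sgn}(a(a+1))]$ on the RHS.

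In Case I ($-1<a<0$) we obtain $[1,-1]$ on the RHS; contracting it with one leg of $f$ yields the binary LHS signature $[1-a,\,2a,\,1-a]$. Normalizing to $[(1-a)/(2a),\,1,\,(1-a)/(2a)]$ and applying Theorem~\ref{2-3} with $X=\bigl((1-a)/(2a)\bigr)^2$: $X=1$ forces $a\in\{1/3,-1\}$, neither in $(-1,0)$; $X=0$ forces $a=1$; and $X=-1$ is impossible over $\mathbb{R}$. So none of the tractable conditions applies and \#P-hardness follows at once.

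In Case II ($a>0$ or $a<-1$) we obtain $[1,1]$ on the RHS; contracting with $f$ yields $(1+a)[1,0,-1]$ on the LHS (nonzero as $a\ne -1$). Merging one edge of this $[1,0,-1]$ with one edge of a fresh $(=_3)$ realizes the ternary LHS signature $[1,0,0,-1]$: the gadget value vanishes unless all three external values agree and then equals $(-1)^v$ on the common value $v$. Substituting $a=b=0$, $c=-1$ into the $G_2$ formula of Figure~\ref{f2} with $[1,0,0,-1]$ in place of the blue square yields the binary straddled signature with matrix $\operatorname{diag}(1,-1)$. Attaching one copy of this $\operatorname{diag}(1,-1)$ gadget to each of the three dangling edges of a fresh copy of $f$ produces the ternary LHS signature $(-1)^{u_1+u_2+u_3}f(u_1,u_2,u_3)$, which is the symmetric signature $[1,-a,-a,1]$; by Lemma~\ref{easy}, the associated Holant problem is \#P-hard whenever $-a\notin\{0,\pm 1\}$, i.e., for every $a$ under consideration.

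The principal technical obstacle is Case II: the parity-flipping straddled signature $\operatorname{diag}(1,-1)$ is \emph{not} directly produced by applying $G_1$, $G_2$, or $G_3$ to $f$ itself (all of those straddled matrices have eigenvalue ratios that are roots of unity here), and must instead be reached through the chain $[1,1]\to[1,0,-1]\to[1,0,0,-1]\to G_2\text{-of-}[1,0,0,-1]$; once this realization is in hand, the reduction to the already-classified family $[1,a',a',1]$ of Lemma~\ref{easy} is immediate.
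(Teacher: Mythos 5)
Your route is genuinely different from the paper's, which is much shorter: the paper simply applies the ternary gadget $G_4$ to $[1,a,-a,-1]$ to produce $(1+a)[\,1-a+a^2,\ a(1-a^2),\ a(1-a^2),\ 1-a+a^2\,]$ and then invokes Lemma~\ref{easy} directly; no straddled gadget or pinning is needed. Your Case~I is correct: the $G_3$ matrix specialization, its eigenvalue analysis, the exceptional set $\{0,\pm 1,1/2\}$, the observation that the absorption factor $g_2 = -2(1+a)$ saves the $y'=-1$ subcase, and the Theorem~\ref{2-3} computation on $[1-a,2a,1-a]$ are all fine.

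Case~II, however, has a genuine gap. The gadget obtained by merging one edge of the LHS binary signature $[1,0,-1]$ with one edge of a fresh $(=_3)$ does compute the function $[1,0,0,-1]$, but it is \emph{not} a pure LHS ternary signature. Its three dangling edges are of mixed type: one emanates from the square vertex carrying $[1,0,-1]$ (and so must be attached to a circle), while the other two emanate from the circle vertex carrying $(=_3)$ (and so must be attached to squares). In other words, it is a $(1,2)$-straddled gadget, not a candidate for the role of a blue square. In the gadget $G_2$ of Figure~\ref{f2} every blue square has all three of its internal edges going to green circles; substituting your $(1,2)$-gadget in place of a blue square would force two circle-to-circle connections, which is not a legal configuration in the bipartite Holant instance. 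Thus the claimed derivation of $\operatorname{diag}(1,-1)$ via the $G_2$ formula does not go through.

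The target $\operatorname{diag}(1,-1)$ is in fact reachable, but by a different construction. Starting from the $(1,2)$-straddled gadget $H$ with signature $[1,0,0,-1]$ (one LHS-type edge $w$, two RHS-type edges $y,z$), attach $[1,1]$ from the LHS side (which you have, since $y'=1$) to the edge $y$. The result is a binary straddled gadget with signature $\sum_y [1,1](y)\,H(w,y,z)=\operatorname{diag}(1,-1)$, which is then attached to each leg of $f$ exactly as you describe to produce the LHS ternary signature $[1,-a,-a,1]$. With that repair the rest of Case~II works, but as written the $G_2$-substitution step is invalid, and you should either fix it as above or replace Case~II entirely with the paper's $G_4$ computation, which also handles the otherwise delicate point $a=1/3$ (where $[1-a,2a,1-a]$ degenerates) without any case split at all.
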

\begin{proof}
If $a=\pm1, 0$, it is easy to see the given signature is in P, with
$[1,1,-1,-1]$ being affine. Now we assume $a\ne \pm 1, 0$. We use the gadget $G_4$ to produce a ternary signature
$g = (1+a) [u, v, v, u]$,
where $u = 1-a+a^2$, and $v = a(1-a^2)$.
Since $u, v \ne 0$ and $u \ne \pm v$,
by Lemma \ref{easy}, $g$ is \numP-hard and so is the given signature $[1,a,-a,-1]$.
\end{proof}

Now we prove
%
\begin{theorem} \label{large5}
The problem $[1,a,b,c]$ with $a,b,c \in \mathbb{Q}$, $abc\ne0$, is \numP-hard unless it is degenerate, Gen-Eq or affine.
\end{theorem}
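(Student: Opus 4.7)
The plan is to case-split on the behavior of the binary straddled gadget $G_1 = \smm{1}{b}{a}{c}$, paralleling the structure used in Theorems~\ref{g1works1}, \ref{1ab0}, and \ref{1a0c}. If $G_1$ \emph{works} in the sense of Definition~\ref{d1}, then Theorem~\ref{g1works1} immediately delivers the dichotomy (noting that Gen-Eq is impossible because $abc \ne 0$). If instead $G_1$ is degenerate, i.e.\ $c = ab$, then Lemma~\ref{c=ab} handles the case. The remaining task is to dispose of all $(a,b,c) \in \mathbb{Q}^3$ with $abc \ne 0$, $c \ne ab$, for which $G_1$ is non-degenerate but the ratio $\lambda/\mu$ of its eigenvalues is a root of unity, i.e.\ one of the five equations in (\ref{c1eq}) holds.

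For each of these five conditions, the strategy is to apply an auxiliary ternary gadget, primarily $G_4$ (as in the proofs of Theorems~\ref{1ab0} and~\ref{1a0c}) or $G_{aux}$ from Figure~\ref{fa}, to $[1,a,b,c]$ and produce a new ternary signature $g$. The goal is that $g$ satisfies the hypotheses of Theorem~\ref{g1works1}: namely, that the $G_1$-style gadget built from $g$ is non-degenerate, its eigenvalue ratio is not a root of unity, and $g$ itself is not degenerate, Gen-Eq, or affine. Granted this, $\Holant(g \mid =_3) \le_T \Holant([1,a,b,c] \mid =_3)$ yields \numP-hardness. Verifying the hypotheses amounts to checking, for each of the five conditions in (\ref{c1eq}), that the simultaneous polynomial system formed by that condition together with the failure conditions for $g$ (degeneracy of its matrix, any equation in (\ref{c1eq}) applied to $g$, being Gen-Eq/affine/degenerate) has, over $\mathbb{Q}$, only a finite list of exceptional $(a,b,c)$. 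This certification is carried out by invoking \texttt{CylindricalDecomposition} in Mathematica\texttrademark{}.

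The exceptional sub-families extracted from these polynomial intersections are precisely those already handled: $[1,a,a,1]$ by Lemma~\ref{easy}, $[1,a,-1-2a,2+3a]$ by Lemma~\ref{p3}, $[1,a,b,ab]$ by Lemma~\ref{c=ab}, $[1,-b^2,b,-b^3]$ by Lemma~\ref{p1}, $[1,a,-\frac{1}{a},-1]$ by Lemma~\ref{p2}, and $[1,a,-a,-1]$ by Lemma~\ref{1a-a-1}, along with genuinely tractable cases (degenerate or affine). Each exceptional family is dispatched by its corresponding lemma, yielding either \numP-hardness or a tractable classification consistent with the statement. The main obstacle is the sheer scale of the symbolic case analysis: each of the five conditions in (\ref{c1eq}) spawns its own polynomial system over $\mathbb{Q}$ in the variables $a,b,c$, and the intermediate gadget signatures (such as those produced by $G_4$ and $G_{aux}$) are already of degree up to seven in $a,b,c$. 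Certifying the (near) emptiness of each intersection is beyond direct hand manipulation and is the point at which the argument relies essentially on symbolic computation, much as in the proof of Lemma~\ref{1-1 X=1}.
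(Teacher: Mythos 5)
Your overall plan---reduce to the root-of-unity cases for $G_1$ via Theorem~\ref{g1works1} and Lemma~\ref{c=ab}, apply an auxiliary ternary gadget to manufacture a new signature on which $G_1$ \emph{works}, certify the residual exceptional set by \texttt{CylindricalDecomposition}, and dispatch the exceptional families with Lemmas~\ref{easy}, \ref{p3}, \ref{p1}, \ref{p2}, \ref{1a-a-1}---is the same route the paper takes. But there is a genuine gap in how you describe the certification step, and it is not a cosmetic one.

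You assert that, for each condition $R_i$ in (\ref{c1eq}), the polynomial system formed by $R_i$ together with the failure conditions for the new signature $g$ (non-degeneracy of its straddled matrix failing, (\ref{c1eq}) applied to $g$, $g$ being degenerate/Gen-Eq/affine) has ``only a finite list of exceptional $(a,b,c)$'' over $\mathbb{Q}$. That is false, and your own list of exceptional sub-families already contradicts it: $[1,a,-a,-1]$ is a one-parameter family, and one can check directly that applying $G_{aux}$ to $[1,a,-a,-1]$ yields $[w,x,y,z]$ with $z=-w$, so the condition $T_1$ (i.e., $zw+w^2=0$, meaning $G_1$ fails on $g$) holds for the whole family. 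A single gadget application therefore cannot reduce to a finite set. The paper resolves this by applying $G_{aux}$ a \emph{second} time to obtain $[w_2,x_2,y_2,z_2]$, and separately solving the systems $R\wedge S$, $R\wedge T\wedge U$, and $R\wedge T\wedge V$; only after this two-stage sieve do the exceptional families collapse to a short, explicitly listed set, with $[1,a,-a,-1]$ emerging from the third system and handled by Lemma~\ref{1a-a-1}. Without the iterated application, you have no certificate that the exceptional families you enumerate are \emph{all} of them. A secondary inaccuracy: for this theorem the paper uses $G_{aux}$ throughout (and reserves $G_4$ only for a few of the discrete exceptional points like $[1,1,-1,1]$); privileging $G_4$ as you do would change which polynomial systems you need to close and is not what the paper does.
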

\begin{proof}
 By Theorem \ref{g1works1} and Lemma \ref{c=ab} it suffices to consider the case when the ratio of two eigenvalues in $G_1=\smm{1}{b}{a}{c}$ is a root of unity and $c \neq ab$. If the ratio of eigenvalues of $G_1$ is a root of unity, we know at least one condition in (\ref{c1eq}) holds. For  convenience, we list the conditions in (\ref{c1eq}) here and label them as $R_i$ where $i=1,2,3,4,5$:
\begin{equation} \label{j1}
    R = \bigvee_{i=1}^{5} R_i,
    \ \ \text{where} \begin{cases}
     R_1: c=-1\\
     R_2: ab + c^2+c+1=0\\
     R_3: 2ab+c^2+1=0\\
     R_4: 3ab+c^2-c+1=0\\
     R_5: 4ab+c^2-2c+1=0\\
    \end{cases}
\end{equation}


We apply $G_{aux}$ in Figure \ref{fa} on $[1,a,b,c]$, i.e.\ placing squares to be $[1,a,b,c]$ and circles to be $=_3$, to produce a ternary signature $[w,x,y,z]=[1 + 2a^3 + b^3,a + 2a^2b + b^2c,a^2 + 2ab^2 + bc^2, a^3 + 2b^3 + c^3]$. If $w \neq 0$ and $G_1$ works on $[w,x,y,z]$, by Theorem \ref{g1works1} we have $[w,x,y,z]$ is \numP-hard and thus $[1,a,b,c]$ is \numP-hard unless at least one condition $S_i$ listed below holds, where $i=1,2,3,4,5,6$:
\begin{equation} \label{j2}
    S = \bigvee_{i=1}^{6} S_i
    \ \ \text{where} \begin{cases}
    S_1 : x^2=wy \wedge y^2=xz\ (\text{degenerate form})\\
    S_2 : x=0 \wedge y=0\ (\text{Gen-Eq form})\\
    S_3 : w=y \wedge x=0 \wedge z=0\ (\text{affine form $[1,0,1,0]$}) \\
    S_4 : w+y=0 \wedge x = 0 \wedge z=0\ (\text{affine form $[1,0,-1,0]$})\\
    S_5 : w=x \wedge w + y = 0 \wedge w+z =0\ (\text{affine form $[1,1,-1,-1]$}) \\
    S_6 : w+x=0 \wedge w+y=0 \wedge w=z\ (\text{affine form $[1,-1,-1,1]$})\\
    \end{cases}
\end{equation} 
\noindent Solve the equation system $R \wedge S$ for variables $a,b,c \in \mathbb{Q}$, we have the following solutions:
\begin{itemize}
    \item $a=c=-1, b=1$; the problem $[1,-1,1,-1]$ is in P since it is degenerate;
    \item $a=1, b=c=-1$; the  problem $[1,1,-1,-1]$ is in P since it is affine;
    \item $a=c=1, b=-1$; the problem $[1,1,-1,1]$ is \numP-hard (use the gadget $G_4$ to produce $[1,1,-1,3]$ after flipping 0's and 1's, then use it again to produce $[1,1,-5,19]$ which is \numP-hard by Theorem~\ref{g1works1}. Note that we need to apply
    $G_4$ twice in order that the condition that
    $G_1$ works
    in 
    Theorem~\ref{g1works1} is satisfied for the newly created ternary signature);
    \item $a=\frac{1}{2}, b = -\frac{1}{2}, c=-1$; the problem $[1,\frac{1}{2},-\frac{1}{2},-1]$ is \numP-hard by Lemma \ref{1a-a-1}.
\end{itemize}

Continuing the discussion for the ternary signature $[w,x,y,z]$, it remains to consider the case when $w=0$ or $G_1$ does not work on $[w,x,y,z]$. For $ w \ne 0$ we  normalize $[w,x,y,z]$ to be $[1,\frac{x}{w}, \frac{y}{w}, \frac{z}{w}]$ and substituting $\frac{x}{w}, \frac{y}{w}, \frac{z}{w}$ into $a,b,c$ respectively in (\ref{c1eq}), we get at least one condition $T_i$ listed below, where $i=1,2,3,4,5,6$:
\begin{equation} \label{j3}
    T = \bigvee_{i=1}^{6} T_i,
    \ \ \text{where} \begin{cases}
    T_1 : zw+w^2=0\\
    T_2 : xy+z^2+zw+ w^2=0\\
    T_3 : 2xy+z^2+w^2=0\\
    T_4 : 3xy+z^2-zw+w^2=0\\
    T_5 : 4xy+z^2-2zw+w^2=0\\
    T_6 : xy=wz\\
    \end{cases}
\end{equation}

Note that $T_1$ incorporates the case when $w=0$. So we have the condition $R \wedge T$.
We now apply  $G_{aux}$ once
again using $[w,x,y,z]$ to produce another new ternary signature $[w_2,x_2,y_2,z_2]$ where $w_2 = w^3+2x^3+y^3$, $x_2=w^2x+2x^2y+y^2z$, $y_2=wx^2+2xy^2+yz^2$, $z_2=x^3+2y^3+z^3$. Similarly as the previous argument, if $w_2 \neq 0$ and $G_1$ works on $[w_2,x_2,y_2,z_2]$, we know $[w_2,x_2,y_2,z_2]$ is \numP-hard and thus $[1,a,b,c]$ is \numP-hard unless at least one condition $U_i$  listed below holds, where $i=1,2,3,4,5,6$:
\begin{equation} \label{j4}
    U = \bigvee_{i=1}^{6} U_i,
    \ \ \text{where} \begin{cases}
    U_1 : x_2^2=w_2y_2 \wedge y_2^2=x_2z_2\ (\text{degenerate form})\\
    U_2 : x_2=0 \wedge y_2=0\ (\text{Gen-Eq form})\\
    U_3 : w_2=y_2 \wedge x_2=0 \wedge z_2=0\ (\text{affine form $[1,0,1,0]$}) \\
    U_4 : w_2+y_2=0 \wedge x_2= 0 \wedge z_2=0\ (\text{affine form $[1,0,-1,0]$})\\
    U_5 : w_2=x_2 \wedge w_2 + y_2 = 0 \wedge w_2+z_2 =0\ (\text{affine form $[1,1,-1,-1]$}) \\
    U_6 : w_2+x_2=0 \wedge w_2+y_2=0 \wedge w_2=z_2\ (\text{affine form $[1,-1,-1,1]$})\\ \end{cases}
\end{equation}

Solve the equation system $R \wedge T \wedge U$ for rational-valued variables $a,b,c$, we have the following solutions:
\begin{itemize}
    \item $a=c=-1, b=1$; the problem $[1,-1,1,-1]$ is in P since it is degenerate;
    \item $a=1, b=c=-1$; the problem $[1,1,-1,-1]$ is in P since it is affine;
    \item $a=-1, b=c=1$; the problem $[1,-1,1,1]$ is \numP-hard (use the gadget $G_4$ to produce $[1,1,-1,3]$, use it again to produce $[1,1,-5,19]$ which is \numP-hard by Theorem~\ref{g1works1});
    \item $a=c=1, b=-1$; the problem $[1,1,-1,1]$ is \numP-hard (this is the reversal of $[1,-1, 1, 1]$);
    \item $a=\frac{1}{2}, b = -\frac{1}{2}, c=-1$; the problem $[1,\frac{1}{2},-\frac{1}{2},-1]$ is \numP-hard by Lemma \ref{1a-a-1}.
\end{itemize}

Otherwise, we know $w_2 = 0$ or $G_1$ does not work on $[w_2,x_2,y_2,z_2]$. Similarly, we know at least one condition $V_i$  listed below holds, where $i=1,2,3,4,5, 6$:
\begin{equation}\label{j5}
    V =    \bigvee_{i=1}^{6} V_i,
    \ \ \text{where} \begin{cases}
    V_1 : z_2w_2+w_2^2=0\\
    V_2 : x_2y_2+z_2^2+z_2w_2+ w_2^2=0\\
    V_3 : 2x_2y_2+z_2^2+w_2^2=0\\
    V_4 : 3x_2y_2+z_2^2-z_2w_2+w_2^2=0\\
    V_5 : 4x_2y_2+z_2^2-2z_2w_2+w_2^2=0\\
    V_6 : x_2y_2=w_2z_2\\
    \end{cases}
\end{equation}

Finally, solve the equation system $R\wedge T\wedge V$ for variables $a,b,c \in \mathbb{Q}$, we have the following solutions:
\begin{itemize}
    \item $a=-1, b=c=1$; the problem $[1,-1,1,1]$ is \numP-hard (see the case above for $R \wedge T \wedge U$);
    \item $a=c=1, b=-1$; the problem $[1,1,-1,1]$ is \numP-hard (this is the reversal of $[1,-1,1,1]$);
    \item $a=-b, c=-1$; the problem $[1,a,-a,-1]$ is \numP-hard unless $a=\pm1, 0$ by Lemma \ref{1a-a-1}.
\end{itemize}

The proof is now complete.

\end{proof}

\section{Dichotomy for $[0,a,b,0]$} 
We quickly finish the discussion for $[0,a,b,0]$ with the help of previous theorems on $[1,a,b,c]$.
\begin{theorem} \label{0ab0}
The problem $[0,a,b,0]$ with $a,b \in \mathbb{Q}$ is \numP-hard unless $a=b=0$, in which case the Holant value is 0.
\end{theorem}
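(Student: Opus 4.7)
The plan is to split by how many of $a,b$ are zero, reducing each case to a previously established hard problem. The $a=b=0$ case is trivial (the signature is identically zero, so $\Holant(\Omega)=0$ for every signature grid $\Omega$, matching the claim). The two singly-zero cases reduce directly to a well-known hard counting problem. The doubly-nonzero case uses the $G_{aux}$ gadget of Figure~\ref{fa} to lift to a ternary signature with nonzero first entry, to which Theorem~\ref{large5} applies.

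For the singly-zero cases, $f=[0,a,0,0]$ with $a\ne 0$ is a nonzero scalar multiple of the Exact-One signature $[0,1,0,0]$, and $f=[0,0,b,0]$ with $b\ne 0$ is equivalent to it under the bit-flip holographic transformation $\smm{0}{1}{1}{0}$, which fixes $(=_3)=[1,0,0,1]$ and sends $[0,0,1,0]$ to $[0,1,0,0]$. So both singly-zero subcases reduce, up to a nonzero global factor, to $\holant{[0,1,0,0]}{=_3}$, which as noted in the introduction is the exact 3-cover counting problem. This is \numP-hard: it is an instance of Boolean \#CSP with the Exact-One constraint, and since $[0,1,0,0]$ is none of the tractable types (not affine, not degenerate, not a Gen-Eq), the dichotomy of~\cite{cai2017complexity} delivers \numP-hardness; alternatively, a reduction from counting perfect matchings in 3-regular bipartite graphs~\cite{dagum1992approximating}, as used in Lemma~\ref{easy} and Lemma~\ref{p3}, suffices.

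For $ab\ne 0$, apply $G_{aux}$ with all three blue squares set to $f=[0,a,b,0]$. Using the general formula for $G_{aux}$ computed just before Lemma~\ref{c=ab} and specialized to $(f_0,f_1,f_2,f_3)=(0,a,b,0)$ yields the ternary symmetric signature
\[
g = [\,2a^3+b^3,\ 2a^2b,\ 2ab^2,\ a^3+2b^3\,].
\]
Because $a,b\in\mathbb{Q}$ and $ab\ne 0$, the rational ratio $(b/a)^3$ cannot equal $-2$ or $-\tfrac{1}{2}$, so both outer entries $2a^3+b^3$ and $a^3+2b^3$ are nonzero. Normalizing, $g$ becomes $[1,a',b',c']$ with $a',b',c'$ all nonzero. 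One then verifies this signature is not degenerate, not Gen-Eq, and not affine: Gen-Eq is excluded by $a'\ne 0$; degeneracy $b'=(a')^2$ collapses after clearing denominators to $b^3=0$; and affineness, which would force a match with one of $[1,\pm 1,\pm 1,\mp 1]$ (the only affine forms with all entries nonzero), yields a small polynomial system in $a,b$ whose only rational solutions have $ab=0$. Theorem~\ref{large5} then delivers \numP-hardness of $\holant{g}{=_3}$, and hence of $\holant{f}{=_3}$.

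The main obstacle is the final bookkeeping of the degenerate/affine exclusions for $[1,a',b',c']$: a handful of polynomial identities in $\mathbb{Q}[a,b]$, each elementary by hand (or uniformly confirmable by Mathematica\texttrademark{} in the style of the earlier sections). No new gadget beyond $G_{aux}$ and no interpolation are needed.
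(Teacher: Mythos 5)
Your $ab\ne 0$ argument is a genuine and arguably cleaner variant of the paper's: the paper applies $G_4$ to get $g=[3a^2b^2,\ a(a^3+2b^3),\ b(2a^3+b^3),\ 3a^2b^2]$, whose tractability conditions force $a=b$, which then has to be handled separately by a further application of $G_{aux}$ to $[0,1,1,0]$. You instead apply $G_{aux}$ directly to $[0,a,b,0]$; I checked your formula $g=[2a^3+b^3,\ 2a^2b,\ 2ab^2,\ a^3+2b^3]$ and the exclusions (nonzero entries over $\mathbb{Q}$ with $ab\ne 0$; non-degenerate because $(a')^2=b'$ forces $ab^5=0$; never affine), and it does dispose of all subcases in one pass. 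That avoids the extra $[0,1,1,0]$ branch entirely.

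The singly-zero case, however, has a genuine gap. You need hardness of $\holant{[0,1,0,0]}{=_3}$, i.e.\ counting exact $3$-covers in $3$-regular $3$-uniform set systems, and neither of the two justifications you offer is valid. The \#CSP dichotomy of \cite{cai2017complexity} concerns $\#\mathrm{CSP}([0,1,0,0]) = \holant{[0,1,0,0]}{\EQ}$, where \emph{all} equalities are present on the RHS; hardness of that richer problem does not transfer to the restricted bipartite $3$-regular problem $\holant{[0,1,0,0]}{=_3}$ --- indeed, if it did, the entire classification effort of this paper would be moot. The alternative you mention is also off target: the perfect-matching endpoint in Lemma~\ref{easy} and Lemma~\ref{p3} is $\holant{[0,1,0,0]}{[0,1,0,0]}$, reached only \emph{after} a holographic transformation, and is a different problem from $\holant{[0,1,0,0]}{=_3}$; no reduction between the two is given. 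The paper's proof handles this subcase by invoking Lemma~6.1 of \cite{fan2020dichotomy}, which establishes \numP-hardness of exact $3$-cover counting in this bipartite $3$-regular setting directly. Your proof needs that citation (or an equivalent self-contained reduction) to close the gap.
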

\begin{proof}
 We apply the gadget $G_4$ on $[0,a,b,0]$ to produce the ternary signature $g=[3a^2b^2, a(a^3+2b^3), b(2a^3+b^3), 3a^2b^2]$. 
 
 If $ab\neq0$, we can normalize $g$  to be the form $[1, a', b', c']$. Since $a, b, c \in \mathbb{Q}$, we have
 $a'b'c' \ne 0$. By Theorem \ref{large5}, we know $[1,a',b',c']$ is \numP-hard (and so is $[0,a,b,0]$) unless it is degenerate, Gen-Eq or affine. However, that $[1,a',b',c']$ in P implies $b=a$, i.e.\ the given signature is $[0,a,a,0]$. It suffices to consider $[0,1,1,0]$. We apply the gadget $G_{aux}$ on $[0,1,1,0]$ and get the ternary signature $[3,2,2,3]$ which is \numP-hard by Lemma \ref{easy}, so $[0,1,1,0]$ is \numP-hard.
 
 Now if exactly one of $a$ and $b$ is 0, it suffices to consider the problem $[0,1,0,0]$. This problem is to count the number of exact set covers in a 3-regular 3-uniform set system.  This problem is \numP-hard by Lemma 6.1 in~\cite{fan2020dichotomy}. 
\end{proof}

\section{Main Theorem}
We are now ready to prove our main theorem. The following is a flowchart of the logical structure for the proof of
Theorem~\ref{thm:main}.

\bigskip  

\textbf{Flowchart of proof structure: }

\medskip

\tikzstyle{startstop} = [rectangle, rounded corners, minimum width=3cm, minimum height=1cm,text centered, draw=black, fill=red!20]
\tikzstyle{io} = [trapezium, trapezium left angle=70, trapezium right angle=110, minimum width=3cm, minimum height=1cm, text centered, draw=black, fill=blue!20]
\tikzstyle{process} = [rectangle, minimum width=3cm, minimum height=1cm, text centered,text width=3cm, draw=black, fill=yellow!20]
\tikzstyle{decision} = [diamond, minimum width=3cm, minimum height=1cm, text centered, draw=black, fill=green!20]
\tikzstyle{arrow} = [thick,->,>=stealth,  text centered, text width=2.5cm]

\begin{tikzpicture}[node distance=3cm]
    \node (s1) [startstop] {$[f_0,f_1,f_2,f_3]$};
    \node (s2) [process, left of=s1, xshift=-3cm] {Dichotomy for $[0,a,b,0]$, Theorem~\ref{0ab0}};
    \node (s3) [startstop, below of=s1] {$[1,a,b,c]$};
    \node (s4) [process, left of=s3,xshift=-3cm] {Dichotomy for $[1,a,b,0]$, Theorem~\ref{1ab0}};
    \node (s5) [process, right of=s3, xshift=3cm] {By possibly flipping, get $[1,a,0,c]$, dichotomy in Theorem~\ref{1a0c}};
    \node (s6) [process, below of=s3] {Dichotomy in Theorem~\ref{large5}};
    
    \draw [arrow] (s1) -- node[anchor=south] {if $f_0=f_3=0$} (s2);
    \draw [arrow] (s1) -- node[anchor=east] {else, by possibly flipping} (s3);
    \draw [arrow] (s3) -- node[anchor=south] {if $c=0$} (s4);
    \draw [arrow] (s3) -- node[anchor=south] {if $c \ne 0$ and $ab=0$} (s5);
    \draw [arrow] (s3) -- node[anchor=east] {else (i.e., $abc\ne 0$)} (s6);
\end{tikzpicture}

\begin{theorem}\label{thm:main}
The problem $\Holant\{\,[f_0,f_1,f_2,f_3]\, |\, (=_3)\}$ with $f_i \in \mathbb{Q}$ $(i = 0,1,2,3)$ is \numP-hard unless 
the signature $[f_0, f_1,f_2,f_3]$ is degenerate, Gen-Eq or belongs to the affine class.
\end{theorem}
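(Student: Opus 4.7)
The plan is to perform a direct case analysis on the signature $f = [f_0, f_1, f_2, f_3]$, dispatching each case to one of the dichotomies already established in the preceding sections. The theorem is essentially a synthesis result, so no new gadget construction or interpolation argument is needed; all of the substantive work has been absorbed into the earlier theorems.

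First I would handle the case $f_0 = f_3 = 0$ by invoking Theorem~\ref{0ab0} directly on $[0, f_1, f_2, 0]$. Otherwise at least one of $f_0, f_3$ is nonzero; since flipping $0$ and $1$ on every edge preserves the Holant value (the signature $(=_3) = [1,0,0,1]$ is invariant under this flip, while the LHS signature is replaced by its reversal $[f_3, f_2, f_1, f_0]$), I may assume without loss of generality that $f_0 \neq 0$ and normalize to the form $[1, a, b, c]$. I should also verify that the class of tractable signatures (degenerate, Gen-Eq, affine) is closed under reversal, which is a routine check from the definitions, so the flipping reduction preserves the dichotomy statement.

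Then I split on the value of $c$. If $c = 0$ I would invoke Theorem~\ref{1ab0}. If $c \neq 0$ and exactly one of $a, b$ vanishes, then either $b = 0$ (in which case Theorem~\ref{1a0c} applies directly to $[1,a,0,c]$) or $a = 0$; in the latter case I flip once more, obtaining (after normalization by the nonzero leading entry $c$) a signature $[1, b/c, 0, 1/c]$ of the required form, and again apply Theorem~\ref{1a0c}. The trivial sub-case $a = b = 0$, $c \neq 0$ gives $[1,0,0,c]$, which is Gen-Eq and hence tractable. Finally, if $abc \neq 0$, I would apply Theorem~\ref{large5}.

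The main obstacle — very mild at this stage — is simply careful bookkeeping: one must check in each sub-case that the flipping reduction lands inside the hypotheses of the cited theorem and that the exceptional tractable families listed in the individual dichotomies (degenerate, Gen-Eq, the five symmetric affine types and their reversals) together constitute precisely the tractable cases asserted in Theorem~\ref{thm:main}. All of the heavy lifting — interpolation of degenerate straddled signatures via Lemmas~\ref{3.1.1}--\ref{3.2}, the treatment of the exceptional families $[1,a,a,1]$ and $[1,a,-1-2a,2+3a]$ in Lemmas~\ref{easy} and \ref{p3}, the symbolic Mathematica\texttrademark{} computations ruling out spurious common zeros of the gadget polynomials, and the new planar-tractable family of Corollary~\ref{cor:planar} — has already been done, so the final proof is just an orderly assembly of these pieces following the flowchart.
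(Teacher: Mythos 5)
Your proposal is correct and follows essentially the same flowchart as the paper's own proof: split on $f_0=f_3=0$ (Theorem~\ref{0ab0}), else normalize by flipping to $[1,a,b,c]$ and dispatch to Theorem~\ref{1ab0} when $c=0$, to Theorem~\ref{1a0c} when $c\neq0$ and $ab=0$, and to Theorem~\ref{large5} when $abc\neq0$. Your extra remarks (closure of the tractable class under reversal, the trivial Gen-Eq sub-case $[1,0,0,c]$) are correct bookkeeping that the paper leaves implicit; the one small inaccuracy is listing Corollary~\ref{cor:planar} among the ingredients, which is a side observation about planar tractability and is not actually invoked anywhere in the dichotomy argument.
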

 \begin{proof}
 First, if $f_0 = f_3 = 0$, by Theorem \ref{0ab0}, we know that it is  \numP-hard unless it is $[0,0,0,0]$ which is degenerate. Note that in all other cases, $[0,f_1,f_2,0]$ is not Gen-Eq, degenerate or  affine.
 
 Assume now at least one of $f_0$ and $f_3$ is not $0$. By flipping the role of 0 and 1, we can assume $f_0\ne 0$, then the signature becomes $[1,a,b,c]$ after normalization. 
 
 If $c=0$, the dichotomy for $[1,a,b,0]$ is proved in Theorem \ref{1ab0}.
 
 If in $[1,a,b,c]$, $c\ne 0$, then $a$ and $b$ are symmetric by flipping. Now if $ab=0$, we can assume $b=0$ by the afore-mentioned symmetry, i.e., the signature becomes $[1,a,0,c]$. By Theorem \ref{1a0c}, it is \numP-hard unless $a=0$, in which case it is Gen-Eq. In all other cases, it is not Gen-Eq or degenerate or  affine.
 
 Finally, for the problem $[1,a,b,c]$ where $abc\ne0$, Theorem \ref{large5} proves the dichotomy that it is \numP-hard unless the signature is degenerate or Gen-Eq or  affine.
 \end{proof}

\newpage
\bibliographystyle{plain}
\bibliography{References}

\newpage
\section*{Appendix}
In this paper we use Mathematica\texttrademark{}
to carry out symbolic computations.
 In particular, the function \texttt{CylindricalDecomposition} is used heavily. It is an implementation  of Tarski’s theorem  on  the  decidability  of  the  theory  of  real-closed  fields, and  
 can prove the non-existence solutions of polynomial systems. 

\vspace{.1in}

\noindent
$\bullet$ In the proof of Lemma~\ref{3.1.2}, we use
\texttt{CylindricalDecomposition}
to show that the intersection 
 of $f_1=f_2=f_3=f_4=0$  is empty for $a, b, c \in \mathbb{Q}$ and $x \in \mathbb{C}$,  where $f_1 = cx^3+3bx^2+3ax+1$, $f_2=(ab+c)x^3+ (3bc+2a^2+b)x^2 +(2b^2+ac + 3a)x+ab+1$, $f_3=(a^3+b^3+c^3)x^3+3(a^2+2ab^2+bc^2)x^2 + 3(a+2a^2b+b^2c)x + 1 + 2a^3+b^3$ and $f_4=(ab+2abc+c^3)x^3+(2a^2+b+2a^2c+3ab^2+bc+3b^2c)x^2+(3a+3a^2b+ac+2b^2+2b^2c+ac^2)x+1+2ab+abc$. We write $x$ as $u+vi$ with $u, v \in \mathbb{R}$, and set both the real and imaginary parts of each $f_i$ to 0. 

\begin{lstlisting}[language=Mathematica]
Clear["Global`*"];
(* below x = u + vi is complex, 
r3 is real part of x^3, i3 is imaginary part of x^3 , 
r2 is real part of x^2, i2 is imaginary part of x^2, 
r1 is real part of x, i1 is imaginary part of x *)

r3 = u^3 - 3 u v v;   
i3 = 3 u u v - v^3;
r2 = u u - v v;
i2 = 2 u v;
r1 = u;
i1 = v;
(* below f_k1 is real part of f_k, f_k2 is imaginary part of f_k, 
where k = 1,2,3,4 *)
f11 = c r3 + 3 b r2 + 3 a r1 + 1;
f12 = c i3 + 3 b i2 + 3 a i1;
f21 = (a b + c) r3 + (3 b c + 2 a a + b) r2 + (2 b b + a c + 3 a) r1 +
    a b + 1;
f22 = (a b + c) i3 + (3 b c + 2 a a + b) i2 + (2 b b + a c + 3 a) i1;
f31 = (a^3 + b^3 + c^3) r3 + 3 (a a + 2 a b b + b c c) r2 + 
   3 (a + 2 a a b + b b c) r1 + 2 a^3 + b^3 + 1;
f32 = (a^3 + b^3 + c^3) i3 + 3 (a a + 2 a b b + b c c) i2 + 
   3 (a + 2 a a b + b b c) i1;
f41 = (a b + 2 a b c + c^3) r3 + (2 a a + b + 3 a b b + 2 a a c + 
      b c + 3 b c c) r2 + (3 a + 3 a a b + 2 b b + a c + 2 b b c + 
      a c c) r1 + 1 + 2 a b + a b c;
f42 = (a b + 2 a b c + c^3) i3 + (2 a a + b + 3 a b b + 2 a a c + 
      b c + 3 b c c) i2 + (3 a + 3 a a b + 2 b b + a c + 2 b b c + 
      a c c) i1;

CylindricalDecomposition[
 f11 == 0 && f12 == 0 && f21 == 0 && f22 == 0 && f31 == 0 && 
  f32 == 0 && f41 == 0 && f42 == 0, {a, b, c, u, v}]

\end{lstlisting}

\vspace{.1in}

\noindent
$\bullet$ In the proof of Lemma~\ref{1-1 X=1}, we use \texttt{CylindricalDecomposition} several times to solve a polynomial system in $a,b,c$. There,  \emph{(con1)} is $a^3 - b^3 - ab(1-c) =0$ and
\emph{(con2)} is $a^3+ab+2b^3=0$. Together with a third condition, we solve for $a,b,c \in \mathbb{Q}$. The third equation is among one in (\ref{g3works}) (note that we use the function \texttt{Factor} here to get an irreducible polynomial $f_1$ over $\mathbb{Q}$), or one in (\ref{c3eq}), or $(a+b^2)(a^2+bc) = (1+ab)(ab+c^2) $.  

\begin{lstlisting}[language=Mathematica]
Clear["Global`*"];
(* dsq means\Delta^2 *)

dsq = 1 + 4 a^3 + 4 a^2 b^2 + 4 a b c + 4 b^3 c - 2 c^2 + c^4;
d = 1 - c c;
e = 2 (a + b b);

(*y^3 + y^2 b + y a + c\[Equal]0 is transformed to LHS = RHS below,
by eliminating the square root part 
i.e., we give a function in real domain *)

LHS = dsq*(dsq + 3 d d + 2 d e b + e e a)^2;
RHS = (dsq*(3 d + e b) + d d d + d d e b + e e a d + c e e e)^2;
Factor[LHS - RHS]
(* the factor result is f1 * (a + b b) ^3 , where f1 is below    *)

f1 = a^3 + 4 a^6 + 3 a^5 b^2 + a^3 b^3 - c - 4 a^3 c + 6 a^4 b c - 
   6 a^2 b^2 c - b^3 c - 3 a^2 b^5 c - 3 a^3 c^2 - 3 a b c^2 - 
   4 b^3 c^2 - a^3 b^3 c^2 - 6 a b^4 c^2 - 4 b^6 c^2 + 3 c^3 + 
   4 a^3 c^3 + 6 a^2 b^2 c^3 + 3 b^3 c^3 + a^3 c^4 + 3 a b c^4 + 
   4 b^3 c^4 - 3 c^5 - b^3 c^5 + c^7;
f2 = a + b b;
f3 = c - a b;
f4 = a^3 - b^3 c + a b (c c - 1);
con1 = a a a - b b b - a b (1 - c) == 0;
con2 = a a a + a b + 2 b b b == 0;
(* the following four commands correspond to (3.7) in paper,
the exceptional cases of that G3 works but may not interpolate any \
unary, 
 *)
CylindricalDecomposition[con1 && con2 && f1 == 0, {a, b, c}]
CylindricalDecomposition[con1 && con2 && f2 == 0, {a, b, c}]
CylindricalDecomposition[con1 && con2 && f3 == 0, {a, b, c}]
CylindricalDecomposition[con1 && con2 && f4 == 0, {a, b, c}]

A = 1 + 2 a b + c^2;
B = 4 a^3 + 4 a^2 b^2 + 4 a b c + 4 b^3 c + (-1 + c^2)^2;
h1 = A == 0;
h2 = B == 0;
h3 = A A + B == 0;
h4 = A A + 3 B == 0;
h5 = 3 A A + B == 0;
(* G3 doesn't work *)
CylindricalDecomposition[con1 && con2 && h1, {a, b, c}]
CylindricalDecomposition[con1 && con2 && h2, {a, b, c}]
CylindricalDecomposition[con1 && con2 && h3, {a, b, c}]
CylindricalDecomposition[con1 && con2 && h4, {a, b, c}]
CylindricalDecomposition[con1 && con2 && h5, {a, b, c}]
CylindricalDecomposition[
 con1 && con2 && a^3 - a b + a b c + b^3 c - c^2 - a b c^2 == 0, {a, 
  b, c}]
\end{lstlisting}

\vspace{.1in}

\noindent
$\bullet$ In the proof of Lemma~\ref{1-2 X=1}, similarly, we have the code below.
\begin{lstlisting}[language=Mathematica]
Clear["Global`*"];
(* note that the following con2 is not the same one as in the lemma, 
however, according to the proof of the lemma,
it suffices to consider this new con2 with the original con1 *)

con1 = a a a - b b b - a b (1 - c) == 0;
con2 = a^9 + a^4 b^4 + a^3 b^6 + b^9 == 0;

f1 = a^3 + 4 a^6 + 3 a^5 b^2 + a^3 b^3 - c - 4 a^3 c + 6 a^4 b c - 
   6 a^2 b^2 c - b^3 c - 3 a^2 b^5 c - 3 a^3 c^2 - 3 a b c^2 - 
   4 b^3 c^2 - a^3 b^3 c^2 - 6 a b^4 c^2 - 4 b^6 c^2 + 3 c^3 + 
   4 a^3 c^3 + 6 a^2 b^2 c^3 + 3 b^3 c^3 + a^3 c^4 + 3 a b c^4 + 
   4 b^3 c^4 - 3 c^5 - b^3 c^5 + c^7;
f2 = a + b b;
f3 = c - a b;
f4 = a^3 - b^3 c + a b (c c - 1);

CylindricalDecomposition[con1 && con2 && f1 == 0, {a, b, c}]
CylindricalDecomposition[con1 && con2 && f2 == 0, {a, b, c}]
CylindricalDecomposition[con1 && con2 && f3 == 0, {a, b, c}]
CylindricalDecomposition[con1 && con2 && f4 == 0, {a, b, c}]

A = 1 + 2 a b + c^2;
B = 4 a^3 + 4 a^2 b^2 + 4 a b c + 4 b^3 c + (-1 + c^2)^2;

h1 = A == 0;
h2 = B == 0;
h3 = A A + B == 0;
h4 = A A + 3 B == 0;
h5 = 3 A A + B == 0;

CylindricalDecomposition[con1 && con2 && h1, {a, b, c}]
CylindricalDecomposition[con1 && con2 && h2, {a, b, c}]
CylindricalDecomposition[con1 && con2 && h3, {a, b, c}]
CylindricalDecomposition[con1 && con2 && h4, {a, b, c}]
CylindricalDecomposition[con1 && con2 && h5, {a, b, c}]
CylindricalDecomposition[
 con1 && con2 && a^3 - a b + a b c + b^3 c - c^2 - a b c^2 == 0, {a, 
  b, c}]
\end{lstlisting}

\vspace{.1in}
\noindent
$\bullet$ In the proof of Theorem~\ref{large5}, again we use \texttt{CylindricalDecomposition} to solve equation systems $R \wedge S$, $R \wedge T \wedge U$ and $R \wedge T \wedge V$. In solving the system $R \wedge T \wedge V$ which requires a significant amount of computation, 
we apply \texttt{CylindricalDecomposition} 
to each sub-system $R_i \wedge T_j \wedge V_k$ where $i \in \{1,2,3,4,5\}, j,k \in \{1,2,3,4,5,6\}$ separately by using the function \texttt{Manipulate} and combine their solutions. There are a total of $5 \times 6 \times 6 = 180$ sub-systems in $R \wedge T \wedge V$. 

\begin{lstlisting}[language=Mathematica]
Clear["Global`*"];
w = 1 + 2 a a a + b b b;
x = a + 2 a a b + b b c;
y = a a + 2 a b b + b c c;
z = a a a + 2 b b b + c c c;

w2 = w^3 + 2 x^3 + y^3;
x2 = w w x + 2 x x y + y y z;
y2 = w x x + 2 x y y + y z z;
z2 = x x x + 2 y y y + z z z;

tricon = a b c != 0;
(* R: G1 doesn't work on [1,a,b,c] *)
R1 = c == -1;
R2 = a b + c c + c + 1 == 0;
R3 = 2 a b + c c + 1 == 0;
R4 = 3 a b + c c - c + 1 == 0;
R5 = 4 a b + c c - 2 c + 1 == 0;
R = R1 || R2 || R3 || R4 || R5;

(* T: G1 doesn't work on [w,x,y,z] *)
T1 = z w + w w == 0;
T2 = x y + z z + z w + w w == 0;
T3 = 2 x y + z z + w w == 0;
T4 = 3 x y + z z - z w + w w == 0;
T5 = 4 x y + z z - 2 z w + w w == 0;
T6 = x y == w z;
T = T1 || T2 || T3 || T4 || T5 || T6;


(* S:[w,x,y,z] in P *)
S1 = x x == w y && y y == x z;
S2 = x == 0 && y == 0;
S3 = w == y && x == 0 && z == 0;
S4 = w + y == 0 && x == 0 && z == 0;
S5 = w == x && w + y == 0 && w + z == 0;
S6 = w + x == 0 && w + y == 0 && w == z;
S = S1 || S2 || S3 || S4 || S5 || S6;

(* U:[w2,x2,y2,z2] in P *)
U1 = x2 x2 == w2 y2 && y2 y2 == x2 z2;
U2 = x2 == 0 && y2 == 0;
U3 = w2 == y2 && x2 == 0 && z2 == 0;
U4 = w2 + y2 == 0 && x2 == 0 && z2 == 0;
U5 = w2 == x2 && w2 + y2 == 0 && w2 + z2 == 0;
U6 = w2 + x2 == 0 && w2 + y2 == 0 && w2 == z2;
U = U1 || U2 || U3 || U4 || U5 || U6;

CylindricalDecomposition[tricon && R && S, {a, b, c}]

CylindricalDecomposition[tricon && R && T && U, {a, b, c}]


(* Below, con1, con2, con3 corresponds to R, T, V in paper, \
respectively *)
Manipulate[
 CylindricalDecomposition[
  con1 && con2 && con3 && (a b c != 0), {a, b, c}], {con1, {c == -1, 
   a b + c c + c + 1 == 0, 2 a b + c c + 1 == 0, 
   3 a b + c c - c + 1 == 0, 
   4 a b + c c - 2 c + 1 == 0}}, {con2, {z w + w w == 0, 
   x y + z z + z w + w w == 0, 2 x y + z z + w w == 0, 
   3 x y + z z - z w + w w == 0, 4 x y + z z - 2 z w + w w == 0, 
   x y == w z}}, {con3, {z2 w2 + w2 w2 == 0, 
   x2 y2 + z2 z2 + z2 w2 + w2 w2 == 0, 2 x2 y2 + z2 z2 + w2 w2 == 0, 
   3 x2 y2 + z2 z2 - z2 w2 + w2 w2 == 0, 
   4 x2 y2 + z2 z2 - 2 z2 w2 + w2 w2 == 0, x2 y2 == w2 z2}}]
   
\end{lstlisting}


\end{document}